\newtheorem{theorem}{Theorem}[section]
\newtheorem{lemma}[theorem]{Lemma}
\newtheorem{corollary}[theorem]{Corollary}
\newtheorem{problem}[theorem]{Problem}
\newtheorem{definition}[theorem]{Definition}
\newtheorem{observation}[theorem]{Observation}
\newtheorem*{utheorem}{Theorem}
\title{Simplex Range Searching Revisited:\\ How to Shave Logs in Multi-Level Data Structures}
\author{Timothy M. Chan%
  \thanks{Department of Computer Science,
    University of Illinois at Urbana-Champaign,
    \{tmc,dwzheng2\}@illinois.edu.  Work supported by
    NSF Grant CCF-1814026.} 
  \and Da Wei Zheng$^*$}
\renewcommand{\pmod}[1]{{\ifmmode\text{\rm\ (mod~$#1$)}\else\discretionary{}{}{\hbox{ }}\rm(mod~$#1$)\fi}}
\def\polylog{\operatorname{polylog}}
\newcommand{\eps}{\varepsilon}
\renewcommand{\epsilon}{\varepsilon}
\newcommand{\D}{\Delta}
\newcommand{\R}{\mathbb{R}}
\newcommand{\A}{\mathcal{A}}
\newcommand{\B}{\mathcal{B}}
\newcommand{\OO}{\widetilde{O}}
\newcommand{\IGNORE}[1]{}
\newcommand{\TIMOTHY}[1]{{\color{blue} (( Timothy: #1 ))}}
\newcommand{\DAVID}[1]{{\color{red} (( David: #1 ))}}
\newcommand{\Matousek}{Matou\v{s}ek}
\newcommand{\blog}{\bm{log}\;}
\renewcommand{\L}{{\mathcal L}}
\begin{document}
\maketitle
\setcounter{page}{0}
\thispagestyle{empty}

\begin{abstract}
We revisit the classic problem of simplex range searching and related
problems in computational geometry.  We present a collection of new results which improve previous bounds by multiple logarithmic factors that were caused by the use of multi-level data structures.  Highlights include the following:
\begin{itemize}
\item
For a set of $n$ points in a constant dimension $d$,
we give data structures with $O(n^d)$ (or slightly better) space 
that can answer simplex range counting queries in optimal $O(\log n)$ time
and simplex range reporting queries in optimal $O(\log n + k)$ time, where
$k$ denotes the output size.  
For semigroup range searching, we obtain $O(\log n)$ query time with $O(n^d\mathop{\rm polylog}n)$ space.
Previous data structures with similar
space bounds by Matou\v sek from nearly three decades ago
had $O(\log^{d+1}n)$ or $O(\log^{d+1}n + k)$ query time.
\item
For a set of $n$ simplices in a constant dimension $d$,
we give data structures with $O(n)$ space 
that can answer stabbing counting queries (counting the number of simplices
containing a query point) in $O(n^{1-1/d})$ time,
and stabbing reporting queries in $O(n^{1-1/d}+k)$ time.
Previous data structures had extra $\log^d n$ factors in space and query time.
\item
For a set of $n$ (possibly intersecting) line segments in 2D,
we give a data structure with $O(n)$ space
that can answer ray shooting queries in $O(\sqrt{n})$ time.
This improves Wang's recent data structure [SoCG'20] with
$O(n\log n)$ space and $O(\sqrt{n}\log n)$ query time.
\end{itemize}
\end{abstract}

\newpage

\section{Introduction}

\paragraph{Simplex range searching.}
\emph{Simplex range searching} is among the most fundamental and central
problems in computational geometry~\cite{agarwal17,AgarwalE99,BergCKO08}.  
Its importance cannot be overstated: countless geometric algorithms  make use
of simplex range searching data structures as subroutines.
Given a set of $n$ points in
a constant dimension $d$, the goal is to build data structures so that
we can quickly find the points inside a query simplex~$q$.
Several versions exist: in \emph{range counting}, we want the number of points
inside $q$; in \emph{range reporting}, we want to report all the points inside $q$,
in time proportional to the number $k$ of output points;
in \emph{group or semigroup range query} (which generalizes range counting), we want the sum of the weights
of the points inside $q$, assuming that each input point is given a weight
from a group or semigroup.\footnote{
All groups and semigroups are assumed to be commutative in this paper.
}  Simplex ranges are fundamental because any polyhedral region can be decomposed into simplices.

After years of research, the complexity of simplex range searching is now well-understood, if we do not care about logarithmic factors.  Data structures with $O(m\polylog n)$ space and 
$O((n/m^{1/d})\polylog n)$ query time are known (with a ``$+k$'' term in the query bound for the reporting version)~\cite{Welzl89,chazelle1992,Matousek92,Matousek93}, where $m$ is a trade-off parameter between $n$ and $n^d$. These bounds are generally believed to be close to optimal.\footnote{
Notably, Chazelle~\cite{Chazelle89} proved
an $\Omega((n/\log n)/m^{1/d})$ lower bound on the query time for any $m$-space data structure in the semigroup setting; and Chazelle and Rosenberg~\cite{ChazelleR95} proved an
$\Omega(n^{1-\eps}/m^{1/d})$ lower bound on $f(n)$ for any $m$-space data structure with $O(f(n)+k)$ query time for simplex range reporting in the pointer machine model.
It is believed that the extra factors $\log n$ and $n^\eps$ are artifacts
of the proofs.  (Indeed, the $\log n$ factor disappears for $d=2$~\cite{Chazelle89}, and the $n^\eps$ factor has been slightly improved by Afshani~\cite{Afshani13}.) 
}
The trade-off is obtained by interpolating between data structures for
the two extreme cases, $m=n$ and $m=n^d$.
In fact, in the linear-space regime with $m=n$, known results have even
eliminated all of the extra logarithmic factors, i.e., there are data structures
with $O(n)$ space and $O(n^{1-1/d})$ query time~\cite{Matousek93,Chan12}.  However,
in the large-space regime with $m=n^d$, the best query time bound known 
for $O(n^d\polylog n)$ space
is $O(\log^{d+1}n)$, by Matou\v sek~\cite{Matousek93} from the 1990s.  This leads to the following question:
\begin{quote}\em
With $O(n^d\polylog n)$ space, could the
query time for simplex range searching be reduced,
ideally to $O(\log n)$?
\end{quote}

Surprisingly, no progress has been reported, despite the central importance of the simplex range searching problem.  Although the question may appear to be merely about shaving logarithmic factors, it is interesting for the following reasons:
Matou\v sek's previous solution was a \emph{multi-level} cutting tree (which we will say more about later), and there is a general feeling among researchers that the usage of multi-level data structures necessitates at least one extra logarithmic factor per level, especially when the query time is subpolynomial.  Our new results will call this rule of thumb into question.  
Secondly, once the large-space regime is improved, potentially the entire space/query-time trade-off could be improved, by combining with the known techniques in the linear-space regime.

It is not difficult to obtain $O(\log n)$ query time if
space is increased to $O(n^{d+\eps})$ for an arbitrarily small constant $\eps>0$,
but insisting on $O(n^d\polylog n)$ space is what makes the problem challenging.
Goswami, Das, and Nandy~\cite{goswami2004} showed that for $d=2$, triangle range counting (or group range searching) queries
can indeed be answered in $O(\log n)$ time with $O(n^2)$ space, but
their solution was not as good for range reporting (they obtained 
a weaker query time bound of $O(\log^2n+k)$) and did not extend to higher dimensions.
Recently, Chan and Zheng~\cite{chan2022} observed that for a certain range of
trade-offs when $m$ is between $n$ and $n^{d-\eps}$, the extra logarithmic
factors can be eliminated for some related problems, but this makes 
the question for the $m=n^d$ case all the more intriguing.

We present improved data structures for simplex range query problems in any constant dimension $d$.  With $O(n^d\polylog n)$ space,
we improve the query times to $O(\log n + k)$ for simplex range reporting, and $O(\log n)$ for simplex range counting and group or semigroup queries; these query bounds are optimal.  
In fact, for the group or reporting version of the problem, we can even reduce space to slightly below $n^d$ (by a small polylogarithmic factor).
It is straightforward to use our results to obtain improvements for the complete space/query-time trade-off as well.

    


\paragraph{Simplex range stabbing.}
Another fundamental geometric data structure problem is
\emph{simplex range stabbing}: given a set of $n$ simplices in a constant dimension~$d$, build a data structure so that we can quickly find the simplices that are stabbed by (i.e., contain) a query point.  As before, there are different versions of the problem (counting, reporting, etc.).
Range stabbing may be viewed as the ``inverse'' of range searching, where the role of input and query objects is reversed.

The complexity of simplex range stabbing is similar to simplex range searching, if we don't care about logarithmic factors.
This time, in the large-space regime, data structures with $O(n^d\polylog n)$ space and $O(\log n)$ query time follow from known techniques, but in the near-linear-space regime, known techniques give data structures with extra logarithmic factors (more precisely, $O(n\log^d n)$ space and $O(n^{1-1/d}\log^d n)$ query time~\cite{Chan12}; see also~\cite{cheng1992} for prior work on 2D triangle stabbing with similar extra logarithmic factors).  This leads to the following question:
\begin{quote}
    \em In the near-linear space regime, could the extra logarithmic factors in the space and query time for simplex range stabbing be removed?
\end{quote}
Again, there was a general feeling among researchers that these logarithmic factors might be necessary since the current solutions for simplex range stabbing were also multi-level data structures.

We show that \emph{all} of the extra logarithmic factors may be eliminated!
Specifically, with $O(n)$ space, we achieve
$O(n^{1-1/d})$ query time for the counting or group version and
$O(n^{1-1/d}+k)$ query time for the reporting version. 
In fact,  for counting or reporting, we can even reduce the query time to slightly below $n^{1-1/d}$ (by a small polylogarithmic factor) in a computational model that allows for bit packing.

\paragraph{Segment intersection searching and ray shooting.}
Lastly, we consider another related  fundamental class of geometric data structure problems, this time, about line segments in 2D\@.  Given $n$ (possibly intersecting) line segments in 2D, we want to build data structures so that we can quickly find the input line segments intersecting a query line segment (\emph{intersection searching}), or
find the first input line segment intersected by a query ray (\emph{ray shooting}).  As before, there are different versions of intersection searching (counting, group, reporting, etc.).  This class of problems has historical significance in computational geometry, having been extensively studied since the 1980s \cite{OvermarsSS90,GuibasOS88,agarwal92,cheng1992,Bar-YehudaF94,wang2020}.

The complexity of these problems are similar to triangle range searching in 2D, ignoring logarithmic factors.  Recently, in SoCG'20, Wang~\cite{wang2020} obtained improvements in the logarithmic factors
in the near-linear-space regime for the ray shooting problem: his data structure achieved $O(n\log n)$ space and $O(\sqrt{n}\log n)$ query time.
There was still an extra logarithmic factor in both space and time.

We obtain a new data structure that 
eliminates \emph{both} logarithmic factors.  With $O(n)$ space, we achieve $O(\sqrt{n})$ query time, not just for ray shooting but also for segment intersection counting or searching in the group setting, or reporting (with a ``$+k$'' term for reporting).
In contrast, Wang's method did not extend to intersection counting.  Our results even improve over previous specialized results for nonintersecting segments.
In fact, for counting or reporting, we can even reduce the query time to slightly below $\sqrt{n}$ (by a small polylogarithmic factor).


Our new results are summarized in Table~\ref{tab:results}.
(The $\OO$ notation hides polylogarithmic factors throughout this paper.)

\begin{table}[t]
  \begin{center}
    \begin{tabular}{|c|c|c|c|}\hline
      Problem                       & Space                  & Query time      & Ref.\\ \hline
      simplex reporting              & $n^d$  & $\log^{d+1} n + k$ & \cite{Matousek93}\\
      & $n^d$ &   $\log n + k$ & new\\ 
      simplex counting (or group)     & $n^d$          & $\log^{d+1} n$   & \cite{Matousek93} \\
      &  $n^d$ &   $\log n$ & new \\
      simplex semigroup  & $n^d$ & $\log^{d+1} n$  & \cite{Matousek93} \\
      & $\OO(n^d)$ &   $\log n$ & new \\ \hline
      simplex stabbing reporting & $n\log^d n$ & $n^{1-1/d}\log^d n + k$ & \cite{Chan12}\\
      & $n$     &   $n^{1-1/d}+k$  & new\\
      simplex stabbing counting (or group)
      & $n\log^d n$ & $n^{1-1/d}\log^d n$ & \cite{Chan12}\\
      & $n$     &   $n^{1-1/d}$  & new\\\hline
      segment intersection reporting in 2D  & $n\log^2 n$ & $\sqrt{n}\log^2n + k$ & \cite{cheng1992}\\
      & $n$     &   $\sqrt{n}+k$ & new         \\
      segment intersection counting (or group) in 2D & $n\log^2n$ & $\sqrt{n}\log n$ & \cite{Bar-YehudaF94}\\
      & $n$     &   $\sqrt{n}$ & new         \\
      segment ray shooting in 2D & $n\alpha(n)\log^2n$ & $\sqrt{n\alpha(n)}\log n$ & \cite{Bar-YehudaF94}\\
      & $n\log^2n$ & $\sqrt{n}\log n$ & \cite{cheng1992}\\
      & $n\log n$ & $\sqrt{n}\log n$ & \cite{wang2020}\\
      & $n$     &   $\sqrt{n}$   & new      \\\hline
   \end{tabular}
    \caption{Summary of new results and selected previous results. (In some of the new results, we can even get slightly below $n^d$ space, or slightly below $n^{1-1/d}$ or $\sqrt{n}$ query time.) }
  \label{tab:results}
  \end{center}
\end{table}

\paragraph{Techniques.}
In the 1980s and 1990s, a number of techniques were developed by computational geometers for solving problems related to simplex range searching---most notably, \emph{cutting trees} \cite{Clarkson88,ChazelleF90,Chazelle04,AgarwalE99,agarwal17} in the large-space regime, and \emph{partition trees} \cite{Welzl89,Matousek92,Matousek93,AgarwalE99,agarwal17} in the linear-space regime.  Cutting trees work naturally for \emph{halfspace} range searching, but to extend the solution to \emph{simplex} range searching, one needs to apply a standard
\emph{multi-level} technique~\cite{Matousek93,AgarwalE99,agarwal17} which causes extra logarithmic factors.
For example, in a 2-level data structure, we have a ``primary'' (outer) tree structure, where each node stores a ``secondary'' (inner) tree structure solving some intermediate subproblem for a subset of the input called a ``canonical subset''.
The classic example of a multi-level data structure is the \emph{range tree}~\cite{AgarwalE99,BergCKO08} for $d$-dimensional orthogonal range searching, which has
about $d$ levels and has $\log^{d-O(1)}n$ factors in both space and query time (and there are known lower bounds suggesting that these factors are necessary for orthogonal range searching under various computational models~\cite{Chazelle90a,Chazelle90b}).

Similarly, for simplex range stabbing (and also line-segment intersection searching and ray shooting) in the near-linear-space regime, one needs a multi-level version of the partition tree, which explains the extra factors in all the previous results.

Our new data structures will still be based on the same standard techniques of cutting trees, partition trees, and multi-leveling, but the novelty lies in how to combine them.
The following (loosely stated) principle will be the key: 
\begin{quote} \em
In a multi-level data structure, if the secondary structures have strictly lower complexity (in space or time) than the primary structures, then the overall complexity do not increase by logarithmic factors.  
\end{quote}
For example, if $S_0(n)$ denotes the cost (say, space) of the secondary structure, and if
    the cost of the primary structure satisfies a recurrence of the form
    $S(n)=aS(n/b) + S_0(n)$ where $S_0(n)\ll n^{\log_b a-\eps}$, then
    $S(n) = O(n^{\log_b a})$ without extra logarithmic factors, according to the master theorem.
This simple observation is hardly original, but its power seems to have been overlooked, at least in the context of simplex range searching.  It suggests that it is advantageous to rearrange the levels of a multi-level data structure so that the innermost-level structures solve intermediate subproblems that
have strictly lower complexity.

For simplex range searching, we first decompose the query simplex into
subcells where all but two sides/facets are vertical;
in fact, for counting or group range  queries, we may assume that only one side is nonvertical,
by the use of subtraction (this trick was used before for triangle range searching in 2D, e.g., \cite{Agarwal90ii}).  Following the above principle, we can let the innermost levels of the data structure handle the vertical sides, since they project to range searching in $d-1$ dimensions, which has strictly lower complexity.  This way, we can easily obtain $O(\log^2n)$ query time (already a substantial improvement over $O(\log^{d+1}n)$).  It turns out that the final extra $\log n$ factor can be avoided as well by another standard trick:
\begin{quote}\em
Use a tree with a nonconstant branching factor $n^\eps$ at each node with subtree size $n$. 
\end{quote}
This idea is also not original, and is well known; for example,
some versions of Matou\v sek's original partition trees~\cite{Matousek92} already used this choice of branching factor.
    This choice of branching factor not only leads to a tree with smaller $O(\log\log n)$ height, but allows the
query cost to be bounded by a geometric series: a recurrence of the form $Q(n)=Q(n^{1-\eps})+O(\log n)$ solves to $Q(n)=O(\log n)$.
The complete solution (see Section~\ref{sec:group}) is conceptually quite 
simple---in hindsight, it is surprising that it was missed before.

For reporting or semigroup range queries, subtraction is forbidden, and so we need to deal with two nonvertical sides.  We could add an extra level and obtain
$O(\log^2n)$ query cost, but we can do better:  
\begin{itemize}
    \item 
For reporting queries (see Section~\ref{sec:report}), we let the innermost level handle one nonvertical side (defined by a halfspace), since halfspace range reporting is known to have strictly lower complexity~\cite{Matousek92CGTA}; we then let the next levels handle all the
vertical sides, as the complexity is still strictly lower than general simplex range searching; we finally let the primary level handle the remaining nonvertical side.  
\item
For semigroup range queries (see Section~\ref{sec:semigroup}), we face more challenges---this is the most technically intricate part of the paper. We first observe that if we only account for the cost of semigroup arithmetic operations rather than actual running time, it is possible to achieve \emph{constant} query cost for the subproblems at the innermost level, and this eventually leads to logarithmic cost for the original problem with some careful choice of branching factors.  
However, the actual running time includes the cost of point location operations, which is at least logarithmic per subproblem.  We show that for the multiple point location subproblems encountered here, we can solve each in constant time.  This is inspired by recent work of Chan and Zheng~\cite{chan2022} on 2D fractional cascading in arrangements of lines, although in our application, fractional cascading turns out to be unnecessary---instead, it suffices to set up appropriate pointers between faces in  different arrangements.
\end{itemize}

Our data structures for simplex stabbing (see Section~\ref{sec:stab}) are based on similar ideas.  However, there is an extra $\log\log n$ factor in the space bound due to the fact that the outermost level has tree height $O(\log\log n)$.
We use additional bit-packing tricks to remove this extra factor, which may be of independent interest, as we have not seen such tricks used before for nonorthogonal range searching, though they are common in the literature on orthogonal range searching.  
(Note that there is no cheating here; we just assume $\log n$ bits may be fit in a single word.  We do not use bit packing on the group elements.
See the beginning of Section~\ref{sec:stab:bit} for more details on the computation model.  Even without these tricks, our weaker $O(n\log\log n)$ space bound is still a significant improvement over previous bounds.)

Our data structures for segment intersection and ray shooting 
(see Section~\ref{sec:seg}) 
are also based on similar ideas.  Many of the previous methods used multi-level data structures (such as \cite{cheng1992,Bar-YehudaF94,wang2020}) where the outermost level is a \emph{segment tree}~\cite{BergCKO08} (essentially a 1D structure) and the inner levels are partition trees.  The subproblems that arise in the inner levels are then special cases of the problems when the input objects or query objects are rays or lines.
Following the abovementioned principle, we creatively rearrange the levels, like in our simplex range searching or range stabbing data structures.
Perhaps the reason that this idea was missed before is that with the levels rearranged, the subproblems are not as natural to state geometrically.  Still, the idea is quite simple, in hindsight.


We hope that our ideas will find many more applications in improving other multi-level data structures in computational geometry.

\IGNORE{
\begin{table}
  \label{tab:simplex_table}
  \begin{center}
    \begin{tabular}{c|c|c}
      Query time in $\R^d$ with $O(n^d\polylog n)$ space      & Previous best &   New result      \\ \hline
      Simplex group                 & $\log^{d+1} n$                &   $\log n$ \\
      Simplex semigroup ($d = 2$)   & $\log^{3} n$                  &   $\log n \log \log n$ \\
      Simplex semigroup ($d\ge 3$)  & $\log^{d+1} n$                &   $\log^2 n$ \\
      Simplex reporting             & $\log^{d+1} n + k$            &   $\log n + k$\\
    \end{tabular}
    \caption{A comparison of the previous best known query times for data structures using $O(n^d \polylog n)$ space and our new results.}
  \end{center}
\end{table}

\begin{table}
  \label{tab:lowspace_table}
  \begin{center}
    \begin{tabular}{c|c|c}
      Problem                       & Space                  & Query time      \\ \hline
      Intersection counting 
      line-line segments            & $\bm{n}$               &   $\mathbf{\sqrt{n}}$  \\
      Intersection counting 
      of line segments              & $\bm{n}\mathbf{\blog\blog }\bm{n}$     &   $\mathbf{\sqrt{n}}$         \\
      Ray shooting 
      among line segments           & $\bm{n}\mathbf{\blog\blog }\bm{n}$     &   $\sqrt{n}$         \\
      Containment queries in $\R^d$ & $\bm{n}\mathbf{\blog\blog^{??}} \bm{n}$     &   $\mathbf{n^{1-1/d}}$  
    \end{tabular}
    \caption{Results bolded results are new}
  \end{center}
\end{table}

}

\IGNORE{

Summary of results + techniques
\begin{enumerate}
  \item Large space version in general dimension
    \begin{itemize}
      \item semigroup - $O(n^{\eps})$ lemma
      \item reporting - $O(n^{\eps})$ lemma
      \item group - $O(n^{\eps})$ lemma
      \item max - 
    \end{itemize}
  \item Large space version in 2d
    \begin{itemize}
      \item semigroup - fractional cascading
    \end{itemize}
  \item Low space version:
    \begin{itemize}
      \item dual problem (containment) 
      \item line segment intersection counting - $r$-trick
      \item Ray shooting - $r$-trick
    \end{itemize}
\end{enumerate}

}

\section{Simplex Range Searching}
Given a set $P$ of $n$ points in $\R^d$,
the goal of simplex range searching is to build (static) data structures so that we can quickly count
the points of $P$ inside  
a query simplex, or report them, or compute the sum of their weights from a group or semigroup.


\subsection{Preliminaries}\label{sec:prev}

We begin by reviewing known techniques used in previous data structures for simplex range searching.
\IGNORE{
\paragraph{Cuttings.}
We first consider the case 
where the query range is a single upper halfspace bounded by a hyperplane $h$.
By \emph{geometric duality}~\cite{BergCKO08}, the points of $P$ becomes a set of hyperplanes $H$. The query hyperplane $h$ becomes a dual point $h^*$
and the points of $p$ lying above $h$ is the set of hyperplanes in $H$ lying above $h^*$.
We will see how this dual problem can be solved with cuttings.
}
Let $H$ be a collection of $n$ hyperplanes in $\R^d$. 
For a parameter $1\le r\le n$, a $(1/r)$-\emph{cutting} of $H$ \cite{ChazelleF90,Chazelle04} is a collection $\Gamma$ of (possibly unbounded) simplices called cells with the following properties:
\begin{enumerate}
\item The cells of $\Gamma$ are interior disjoint and cover all of $\R^d$.
\item At most $n/r$ hyperplanes of $H$ intersect any simplex $\Delta \in \Gamma$.
\end{enumerate}

The \emph{conflict list} $H_\Delta$ is the set of hyperplanes of $H$ that intersect any $\Delta \in \Gamma$.
The size of a cutting is the number of cells. The following theorem by Chazelle \cite{Chazelle93} is the best known result for computing $(1/r)$-cuttings.

\begin{lemma}[Cutting Lemma] \label{lem:cutting}
  Let $H$ be a set of $n$ hyperplanes over $\R^d$. For any $r\le n$, it is possible to 
  compute a $(1/r)$-cutting of size $O(r^d)$ in time $O(nr^{d-1})$.
  Furthermore, in the same time bound, we can compute all the conflict lists, as well as a point location structure so that
  we can determine the cell containing any given point in $O(\log r)$ time.
\end{lemma}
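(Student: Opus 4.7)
The plan is to prove this via Chazelle's hierarchical cutting construction, layering a sequence of cuttings of geometrically increasing refinement so that each layer can be built by a bounded-size local refinement of the previous one. Let $\rho$ be a sufficiently large constant, and build a sequence $\Gamma_0, \Gamma_1, \Gamma_2, \ldots, \Gamma_t$ with $t=\lceil \log_\rho r\rceil$, where $\Gamma_0=\{\R^d\}$ and $\Gamma_i$ is a $(1/\rho^i)$-cutting that refines $\Gamma_{i-1}$, i.e., every cell of $\Gamma_i$ is contained in some cell of $\Gamma_{i-1}$. Taking $\Gamma_t$ yields the desired $(1/r)$-cutting with the required properties.

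The core technical step is the refinement at each level. First I would recall that, by Clarkson--Shor random sampling, for any cell $\Delta \in \Gamma_{i-1}$ with conflict list $H_\Delta$ of size at most $n/\rho^{i-1}$, one can take a random sample $R_\Delta\subseteq H_\Delta$ of size $\Theta(\rho \log \rho)$ and form the bottom-vertex triangulation of $\A(R_\Delta)$ restricted to $\Delta$; with positive probability every subcell intersects at most $|H_\Delta|/\rho \le n/\rho^i$ hyperplanes. This by itself gives too many subcells per parent ($\rho^{d}\log^d\rho$), so after sampling one applies the standard overflow-cell refinement using Chazelle--Friedman-style moment bounds, which states that the expected number of subcells whose conflict list is $k$ times the average decays as $\rho^{-\Omega(k)}$; these overbudget subcells are then triangulated more finely. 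The upshot is that each parent cell is replaced by $O(\rho^d)$ children, so $|\Gamma_i|\le c\rho^d|\Gamma_{i-1}|$ for a constant $c$, yielding $|\Gamma_t|=O(\rho^{dt})=O(r^d)$ without logarithmic slack. Derandomization is done via $\eps$-approximations of polynomial size in $\rho$, independent of $n$, which keeps the per-cell work linear in the conflict list.

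The construction time is proved by a charging argument over levels. Building $\Gamma_i$ from $\Gamma_{i-1}$ processes each cell in time linear in its conflict list (the constant-size sample and derandomization depend only on $\rho$, and the new conflict lists are obtained by scanning the parent's list). Since $\sum_{\Delta\in\Gamma_{i-1}}|H_\Delta|=O(\rho^{d(i-1)}\cdot n/\rho^{i-1})=O(n\rho^{(d-1)(i-1)})$, summing the geometric series over $i=1,\ldots,t$ gives total work $O(n\rho^{(d-1)t})=O(nr^{d-1})$. The point-location structure is built concurrently: inside each parent cell $\Delta$, a constant-size BSP over the children allows locating a query point among them in $O(1)$ time, so descending all $t=O(\log r)$ levels costs $O(\log r)$ per query.

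The main obstacle is making sure that the cutting size is truly $O(r^d)$ rather than $O(r^d\polylog r)$; this is exactly what a single-stage Clarkson--Shor sample would yield, and it is the whole reason the hierarchical construction is needed. Concretely, the delicate part is proving the per-level refinement bound $|\Gamma_i|\le c\rho^d|\Gamma_{i-1}|$, which requires the exponentially decaying moment inequality on overfull subcells together with a recursive subdivision on those subcells whose total contribution telescopes into a constant factor. Derandomization and efficient maintenance of the conflict lists under refinement are standard once the combinatorial size bound is established.
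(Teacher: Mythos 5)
First, note that the paper does not prove this lemma at all: it is quoted as a known result of Chazelle (the reference \cite{Chazelle93} in the text), so there is no in-paper argument to compare against. Your proposal is an attempt to reconstruct Chazelle's hierarchical-cutting proof, and while the overall architecture (geometric refinement levels, Clarkson--Shor sampling per cell, moment bounds for overfull cells, derandomization via $\eps$-approximations, the geometric-series charging for the $O(nr^{d-1})$ running time, and point location by descending the hierarchy) is the right one, the central size bound as you have written it does not go through.

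The gap is in the step ``$|\Gamma_i|\le c\rho^d|\Gamma_{i-1}|$ for a constant $c$, yielding $|\Gamma_t|=O(\rho^{dt})=O(r^d)$.'' Iterating that recurrence over $t=\log_\rho r$ levels gives $|\Gamma_t|\le c^t\rho^{dt}=r^{\log_\rho c}\cdot r^d$, i.e., $O(r^{d+\delta})$ for $\delta=\log_\rho c>0$ whenever $c>1$; this is a polynomial (not even logarithmic) excess, and it is precisely the compounding that Chazelle's construction is designed to avoid. The correct mechanism is not a uniform per-cell bound of $O(\rho^d)$ children. Rather, one maintains inductively the $d$-th moment invariant $\sum_{\Delta\in\Gamma_i}\bigl(|H_\Delta|\rho^i/n\bigr)^d=O(\rho^{di})$ with a constant independent of $i$, refines each cell $\Delta$ into only $O\bigl((|H_\Delta|\rho^i/n)^d+1\bigr)$ children (so cells with short conflict lists get few or no children), and verifies via the Chazelle--Friedman moment estimates that the local refinements preserve the invariant at the next level. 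Summing the children counts against the invariant at level $i-1$ then gives $|\Gamma_i|=O(\rho^{di})$ with a uniform constant, which is what actually yields $O(r^d)$ at the last level. Relatedly, your local step is inconsistent: a sample of size $\Theta(\rho\log\rho)$ already produces $\Theta((\rho\log\rho)^d)$ subcells before any overflow refinement, so subsequent refinement cannot bring the count down to $O(\rho^d)$; the standard construction samples only $\Theta(\rho)$ hyperplanes, accepts that some cells overflow, and uses the exponentially decaying tail on the number of overfull cells to bound the extra cells created by their secondary refinement. With these corrections the time and point-location analyses you give are fine, since they only need the uniform-constant size bound $|\Gamma_i|=O(\rho^{di})$ and the fact that each cell has $O(\rho^d)=O(1)$ children.
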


\IGNORE{
\paragraph{Halfspace range searching.}

The cutting lemma is very useful for doing divide-and-conquer in computational geometry.
We can use the lemma recursively to build a data structure for halfspace range searching.
For the set $H$ of $n$ hyperplanes in the dual, we can apply the cutting lemma for some parameter $r$. On 
each cell $\Delta$ of the cutting, we can store $w_\Delta$, the sum of the weights of the hyperplanes in $H$ that lie completely above $\Delta$, and recurse on the
conflict list $H_\Delta$.
This gives the following recurrence for the space of the data structure:
\[ S(n) = O(r^d) S(n/r) + O(nr^{d-1}).\]
For a query hyperplane $h$, we can find the cell $\Delta$ containing the point $h^*$,
recursively find the total weight of the hyperplanes in $H_\Delta$ that
lie above $h^*$  and add that to $w_\Delta$, giving the query time of:
\[ Q(n) = Q(n/r) + O(\log r).\]
Choosing $r$ to be a sufficiently large constant immediately gives 
a data structure with space $S(n)=O(n^{d+\eps})$
and query time $Q(n)=O(\log n)$.


The space bound can be improved to $O(n^d)$
by using a hierarchical version 
of the cutting lemma by Chazelle as discussed in Section~\ref{sec:hierarchical_cuttings}.

}





The cutting lemma is very useful for doing divide-and-conquer in computational geometry.
We can use the lemma recursively to build a multi-level data structure for semigroup simplex range searching as follows:

Consider the case when the query range is 
an intersection of $j$ halfspaces.  Call this a ``level-$j$'' query.
Simplex range searching corresponds to  level-$(d+1)$ queries.
Level-0 queries can trivially be solved in $O(1)$ time (by just storing the sum of the weights of all input points).

By \emph{geometric duality}~\cite{BergCKO08}, the input point set $P$ becomes a set of hyperplanes $H$. A hyperplane $h$ bounding a query halfspace (say it is an lower halfspace) becomes a dual point $h^*$, 
and the points of $p\in P$ lying above $h$ correspond to dual hyperplanes in $H$ lying above $h^*$.

We apply the cutting lemma to the dual hyperplanes $H$.
For each cell $\Delta$ of the cutting, we recurse on the conflict list $H_\Delta$.
In addition, we let $C^+_\Delta$ (resp.\ $C^-_\Delta$) be the subset of the hyperplanes in $H$ that are completely above (resp.\ below) $\Delta$;
we store this subset (called a \emph{canonical subset}) in a data structure for level-$(j-1)$ queries.
This gives the following recurrence for the space (and also preprocessing time) of the data structure:
\[ S_j(n) \:=\: O(r^d)S_j(n/r) + O(r^d) S_{j-1}(n) + O(nr^d). \]
To answer a level-$j$ query, let $h$ be the hyperplane bounding one of its $j$ halfspaces; w.l.o.g., assume that this halfspace is an upper halfspace.
We find the cell $\Delta$ containing the dual point $h^*$,
recursively answer the level-$j$ query for $H_\Delta$,
and answer a level-$(j-1)$ query for $C^+_\Delta$
(since we already know that $h^*$ is above all hyperplanes in $C^+_\Delta$, one of the $j$ halfspaces can be dropped from this query); we then return the sum.
This gives the following recurrence for the query time:
\[ Q_j(n) \:=\: Q_j(n/r) + Q_{j-1}(n)  + O(\log r). \]
Choosing $r$ to be a sufficiently large constant immediately gives 
a data structure with space $S_{d+1}(n)=O(n^{d+\eps})$
and query time $Q_{d+1}(n)=O(\log^{d+1} n)$.

Matou\v sek \cite{Matousek93} used a hierarchical version 
of the cutting lemma by Chazelle~\cite{Chazelle93}, which we will discuss in Section~\ref{sec:hier},
to reduce the space bound to $\OO(n^d)$, while keeping query time $O(\log^{d+1}n)$.

Alternatively, we can choose $r=N^\eps$ where $N$ is the global input size,
to ensure that the recursion depth is $O(1)$; this gives query time $O(\log N)$ and space $O(N^{d+O(\eps)})$ 
(which can be rewritten as $O(N^{d+\eps})$ by
readjusting $\eps$ by a constant factor).
\IGNORE{
*****

Simplex range querying is about querying for points that lie in the intersection of $d+1$ halfspaces.
We can chain $d+1$ decomposition schemes together to obtain a \emph{multi-level} data structure for simplex range searching.
The main idea is that as part of a query involving the intersection $p$ halfspaces,
we would query for one of the halfspaces in the dual the same as in halfspace range searching.
However, for each cell $\Delta$, instead of simply storing $k_\Delta$,
we store a data structure for handling intersections of the other $p-1$ halfspaces
over the $k_\Delta$ halfplanes.
If we let $S_i$ denote the space complexity of data structure for 
$i$ halfspaces and $Q_i$ the query time we get the following recursive equations:
\[ S_i(n) = O(r^d)S(n/r) + O(r^d) S_{i-1}(n) + O(nr^{d-1}) \]
\[ Q_i(n) = Q(n/r) + Q_{i-1}(n)  + O(\log r) \]
Choosing $r= O(1)$ gives the following (weaker version of a)result due to \Matousek~\cite{matousek1993}.

\begin{theorem} \label{thm:old_multilevel_simplex}
  Simplex range searching on $n$ points in $\R^{d}$ with semigroup weights can be performed in 
  $O(\log^d n)$ query time using a data structure of size  $\OO(n^{d})$ space and preprocessing.
  Furthermore this data structure can return $O(\log^d n)$ canonical subsets for
  any query simplex.
\end{theorem}

An even earlier result by Chazelle, Sharir, and Welzl \cite{chazelle1992} used a decomposition
scheme that terminates the recursion in $O(1)$ steps. 
This is done by choosing a much larger value for $r$ in the recursion.
Specifically, by choosing $r = n^{\eps}$ for some constant $\eps > 0$
at each step of the recursion (note that this $n$ is the global value of $n$, not the recursive one), 
the recursion terminates in $O(1)$ steps.
This resulted in the first data structure with logarithmic query time,
though it uses an extra $n^\eps$ in space.

}
As we will use this result later, we state it as a lemma
(the extra $N^\eps$ factor turns out to be tolerable since we will apply this lemma only in $d-1$ dimensions):

\begin{lemma} \label{lem:extra_large_space_ds}
  Simplex range searching on $n$ points in $\R^{d}$ with weights from a semigroup can be performed with 
  $O(\log n)$ query time using a data structure with $O(n^{d+\eps})$ space and preprocessing time
  for any fixed $\eps> 0$.
\end{lemma}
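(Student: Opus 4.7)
The plan is to apply the multi-level cutting-tree construction reviewed in the preliminaries, but with a nonconstant branching factor chosen as the \emph{global, fixed} quantity $r = N^{\eps'}$, where $N$ is the original input size and $\eps' = \eps/c$ for a constant $c=c(d)$ to be determined. The crucial observation is that if $r$ is held fixed across the entire recursion (rather than recomputed at each recursive node from the current subproblem size), then the conflict list shrinks from $n$ to $n/r$ at each step, so starting from $N$ we reach constant size after exactly $\up{1/\eps'}$ steps; hence the primary recursion has only $O(1)$ depth.

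With this choice, the recurrences from the preliminaries,
\[ S_j(n) \:=\: O(r^d)\, S_j(n/r) + O(r^d)\, S_{j-1}(n) + O(nr^d), \qquad Q_j(n) \:=\: Q_j(n/r) + Q_{j-1}(n) + O(\log r), \]
analyze cleanly. For the query time, unfolding the primary recursion over its $O(1/\eps')$ levels gives $Q_j(N) = O(1/\eps')\,Q_{j-1}(N) + O(\log N)$, since each of the $O(1/\eps')$ levels contributes $O(\log r)=O(\eps'\log N)$ and the secondary subroutine $Q_{j-1}$ is invoked once per level. An induction on $j \le d+1$ then yields $Q_{d+1}(N)=O(\log N)$, with a hidden constant depending on $\eps$ and $d$. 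For the space, I would prove inductively that $S_j(N)=O(N^{d+c_j\eps'})$ for constants $c_j=c_j(d)$ satisfying $c_0=0$ and $c_j=c_{j-1}+O(d)$: unfolding produces $(r^d)^{1/\eps'}=O(N^d)$ base-case leaves of $O(1)$ size, plus at each of the $O(1/\eps')$ internal levels a secondary contribution $O(r^d)\cdot S_{j-1}(N)$ that, after summing a geometric series in the level index, dominates. Finally, choosing $c$ so that $c_{d+1}\eps'\le \eps$ gives $S_{d+1}(N)=O(N^{d+\eps})$.

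The only real subtlety---and the step I would flag as easy to get wrong---is the insistence on a \emph{global} $r$: if $r$ were instead set to the current $n^{\eps'}$ at each recursive call, the recursion depth would be $\Theta(\log\log N/\eps')$ rather than constant, and the multi-level blow-up would reappear in the query bound. Beyond this reparametrization, the proof is a direct reapplication of the construction already presented in the preliminaries and requires no new geometric ingredients.
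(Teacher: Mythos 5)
Your proposal is correct and matches the paper's own argument: the paper proves this lemma by running the multi-level cutting-tree recurrences from the preliminaries with the branching factor fixed globally at $r=N^\eps$, so the recursion depth is $O(1)$, yielding $O(\log N)$ query time and $O(N^{d+O(\eps)})$ space after readjusting $\eps$. Your more detailed unfolding of the recurrences and the induction on $j$ are consistent with this.
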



\subsection{Group simplex range searching}\label{sec:group}

We now present our new data structure for simplex range searching in the group setting (which in particular is sufficient for counting).
We first introduce the following subproblems:

\begin{definition} \label{def:simplex_level2}
For $i\in\{0,1,2\}$ and $j\in\{0,\ldots,d\}$, an \emph{$(i,j)$-sided range} refers
the intersection
of $i$ arbitrary halfspaces and $j$ vertical halfspaces.
Here, ``vertical'' means ``parallel to the $d$-th axis'' (thus projecting a vertical halfspace along the $d$-th axis yields a halfspace in $d-1$ dimensions).
When the query is an $(i,j)$-sided range, we refer to it as an \emph{$(i,j)$-sided query}.
\end{definition}

\begin{observation}\label{obs1}
A simplex range query reduces to a constant number of  $(2,d)$-sided queries.
\end{observation}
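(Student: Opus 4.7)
The plan is to partition the query simplex $\Delta \subseteq \R^d$ into $O(1)$ interior-disjoint pieces, each of which is a $(2,d)$-sided range. Once such a geometric partition is in hand, the reduction is automatic: the count (or group sum, or semigroup sum, or reported list) over $\Delta$ is just the sum of the answers over the pieces, so any simplex query is answered by $O(1)$ many $(2,d)$-sided queries.

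To build the partition, I would project $\Delta$ along the $x_d$-axis to obtain a $(d-1)$-polytope $\pi(\Delta)\subseteq \R^{d-1}$. The facets of $\Delta$ split into \emph{vertical} ones (whose supporting hyperplanes are parallel to $e_d$, so they already correspond to vertical halfspaces of the $(i,j)$-sided type) and \emph{nonvertical} ones, each of which is further classified as an \emph{upper} facet or a \emph{lower} facet according to the sign of the $d$-th component of its outward normal. I would then overlay, inside $\pi(\Delta)$, the projections of all the nonvertical facets of $\Delta$, and triangulate the resulting cell complex into $(d-1)$-simplices $T_1,\ldots,T_s$. By construction, over the relative interior of each $T_\ell$, the vertical line through any point enters $\Delta$ through a single lower facet $F_L^{(\ell)}$ and leaves through a single upper facet $F_U^{(\ell)}$. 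Hence the piece $\Delta_\ell := \{x\in \Delta : \pi(x)\in T_\ell\}$ is exactly the intersection of the two arbitrary halfspaces bounded by the hyperplanes of $F_L^{(\ell)}$ and $F_U^{(\ell)}$, together with the $d$ vertical halfspaces obtained by extruding the $d$ facets of the $(d-1)$-simplex $T_\ell$ along the $x_d$-direction. This is a $(2,d)$-sided range, and the $\Delta_\ell$ clearly tile $\Delta$.

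Finally, I would bound $s=O(1)$: $\Delta$ has at most $d+1$ facets, so at most $d+1$ projected nonvertical facets overlay inside $\pi(\Delta)$, giving an arrangement with a number of cells depending only on $d$; each cell is a convex polytope of complexity $O_d(1)$ and can be triangulated into $O_d(1)$ $(d-1)$-simplices. The one thing requiring a bit of care—really the only obstacle—is degeneracies: $\Delta$ may have several vertical facets, several vertices projecting to the same point, or nonvertical facets whose projections overlap along $(d-2)$-faces. These are handled either by a standard symbolic perturbation of $\Delta$ in the $x_d$-direction, or by observing directly that the construction above still yields a valid interior-disjoint cover by $(2,d)$-sided ranges (vertical facets simply contribute vertical walls to some of the $T_\ell$'s, and coincident projected vertices do not increase the combinatorial complexity of the overlay). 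Either way, the observation follows with $s=O_d(1)$ pieces.
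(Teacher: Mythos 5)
Your argument is correct and is essentially the paper's proof spelled out in full: the paper simply invokes the vertical decomposition of the query simplex (which yields $O(1)$ cells, each with two nonvertical facets) and triangulates each cell's projection into $(d-1)$-simplices, which is exactly the construction you carry out by overlaying the projected nonvertical facets and triangulating. The degeneracy discussion is a reasonable extra precaution but does not change the substance.
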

\begin{proof}
  Just take the vertical decomposition~\cite{AgarwalS00} of the query simplex.  Since a simplex has $O(1)$ complexity, the decomposition gives $O(1)$ cells, where each cell has two nonvertical facets.  We may assume that the projection of the cell along the $d$-th axis is a $(d-1)$-dimensional simplex (if not, we can triangulate the projection).  We can answer a range query for each cell and return the sum of the answers.
\end{proof}

\begin{observation}\label{obs2}
In the group setting, a $(2,d)$-sided query reduces to two $(1,d)$-sided queries.
\end{observation}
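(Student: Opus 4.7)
The plan is to exploit group subtraction to cancel out one of the two arbitrary halfspaces. Write a $(2,d)$-sided range as $R = H_1 \cap H_2 \cap U$, where $U = V_1 \cap \cdots \cap V_d$ is the ``vertical column'' formed by the $d$ vertical halfspaces and $H_1, H_2$ are the two arbitrary halfspaces. Because every $(2,d)$-sided range we care about arises from the vertical decomposition of Observation~\ref{obs1}, I may assume that $H_1$ is an upper halfspace $\{x_d \le f_1(x_1,\ldots,x_{d-1})\}$ and $H_2$ is a lower halfspace $\{x_d \ge f_2(x_1,\ldots,x_{d-1})\}$ (for affine $f_1,f_2$), and that $f_2(y) \le f_1(y)$ for every $y$ in the projection $U'$ of $U$ onto the first $d-1$ coordinates. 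This last inequality is nothing more than the condition that the cell be nonempty, which is automatic for a cell of the vertical decomposition.

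The key geometric observation is that within $U$ the complementary halfspace $\overline{H_2} = \{x_d < f_2\}$ is contained in $H_1 = \{x_d \le f_1\}$, since $x_d < f_2 \le f_1$ on $U'$. Consequently $H_1 \cap U$ partitions as the disjoint union of the slab $H_1 \cap H_2 \cap U$ and $\overline{H_2} \cap U$. In the group setting subtraction is allowed, so the weight sum over the slab equals
\[
\mathrm{sum}(H_1 \cap H_2 \cap U) \;=\; \mathrm{sum}(H_1 \cap U) \;-\; \mathrm{sum}(\overline{H_2} \cap U).
\]
Each term on the right-hand side is a $(1,d)$-sided query, consisting of the same vertical column $U$ together with a single nonvertical halfspace ($H_1$ in the first term, $\overline{H_2}$ in the second). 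This gives the desired reduction to two $(1,d)$-sided queries.

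The only point that needs mild care is the bookkeeping on the boundary hyperplane $\{x_d = f_2\}$, i.e.\ whether one treats $\overline{H_2}$ as open or closed; under a general-position assumption no input point lies exactly on this hyperplane and the disjoint-union decomposition is exact, and otherwise an infinitesimal perturbation of the bounding hyperplane (or equivalently, a symbolic tie-breaking rule) handles the issue without changing which input points contribute. I do not anticipate any further obstacle, as the argument is essentially a one-line inclusion--exclusion once the vertical decomposition has put each cell into the standard upper-facet/lower-facet form.
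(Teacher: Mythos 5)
Your proposal is correct and is exactly the argument the paper has in mind: the paper's one-line proof ("a cell with 2 nonvertical facets can be expressed as the difference of two cells each with 1 nonvertical facet") is precisely your identity $\mathrm{sum}(H_1\cap H_2\cap U)=\mathrm{sum}(H_1\cap U)-\mathrm{sum}(\overline{H_2}\cap U)$. You have merely spelled out the containment $\overline{H_2}\cap U\subseteq H_1\cap U$ and the boundary bookkeeping that the paper leaves implicit.
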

\begin{proof}
  This follows by subtraction, since a cell with 2 nonvertical facets can be 
  expressed as the difference of two cells each with 1 nonvertical facet.
\end{proof}

\begin{theorem} \label{thm:simplex_group_polylog}
  Simplex range searching on $n$ points in $\R^{d}$ with weights from a group can be performed with
  $O(\log n)$ query time using a data structure with $\OO(n^{d})$ space and preprocessing time.
\end{theorem}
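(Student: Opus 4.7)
The plan is to use Observations~\ref{obs1} and~\ref{obs2} to reduce to a subproblem that our multi-level structure can attack cleanly. By those observations it suffices to answer a $(1,d)$-sided query---the intersection $h^+\cap V$ of one arbitrary halfspace $h^+$ and $d$ vertical halfspaces $V$---in $O(\log n)$ time using $\OO(n^d)$ space. Because $V$ is cylindrical along the $d$-th axis, its projection $V'\subset\R^{d-1}$ is a simplex, so the ``vertical part'' of the query becomes a $(d-1)$-dimensional simplex range query on the projected input; by Lemma~\ref{lem:extra_large_space_ds} this can be solved with $O(\log n)$ query using $O(n^{d-1+\eps})$ space, which is polynomially smaller than our target $\OO(n^d)$. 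This is exactly the setting in which the paper's guiding principle predicts no polylogarithmic blow-up from a second level.

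Concretely, I would dualize $P$ to a set $H$ of $n$ hyperplanes in $\R^d$ so that $h^+$ becomes a dual point $h^*$, and apply Lemma~\ref{lem:cutting} to $H$ with branching parameter $r=n^\eps$ (recomputed at each node from the current subproblem size) to obtain a cutting $\Gamma$ of $O(r^d)$ cells. At each $\Delta\in\Gamma$, I would take the canonical subset $C^+_\Delta$ of hyperplanes passing entirely above $\Delta$, dualize $C^+_\Delta$ back to primal points, project onto $\R^{d-1}$, and build on the projections a $(d-1)$-dimensional simplex range searching structure via Lemma~\ref{lem:extra_large_space_ds}, and recurse on the conflict list $H_\Delta$ of size at most $n/r$. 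To answer a $(1,d)$-sided query, one point-locates $h^*$ in $\Gamma$ in $O(\log r)$ time, queries the secondary at the containing cell $\Delta$ with $V'$ (adding its answer to a running total), and recurses into $H_\Delta$.

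For query time, each node incurs $O(\log r)+O(\log n)=O(\log n)$, and the recurrence $Q(n)=Q(n^{1-\eps})+O(\log n)$ unrolls as $O(\log n)\sum_{k\ge0}(1-\eps)^k=O(\log n/\eps)=O(\log n)$. For space, the tree has depth $O(\log\log n)$ and at depth $k$ contains at most $n^{d(1-(1-\eps)^k)}$ nodes, each supporting a secondary of size $O(n^{(d-1+\eps)(1-\eps)^k})$; the per-depth total is thus bounded by $n^{d-(1-\eps)^{k+1}}$, a geometric series in $k$ dominated by its final term of $\Theta(n^d)$, which gives $\OO(n^d)$ secondary space in total. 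The primary cutting-tree storage admits a similar accounting (or a cleaner $O(n^d)$ bound via hierarchical cuttings, as used by Matou\v sek), and preprocessing time satisfies the same recurrences.

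The main obstacle is choosing the branching factor so that \emph{both} bounds hold simultaneously. Constant branching is ideal for space (the master theorem gives $O(n^d)$ immediately, since the driving term $O(r^d\cdot n^{d-1+\eps})=O(n^{d-1+\eps})$ is polynomially below $n^d$), but the resulting $O(\log n)$-depth tree then yields an unacceptable $O(\log^2 n)$ query from the $O(\log n)$ secondary cost at every level; conversely, a global branching of $N^\eps$ gives $O(\log n)$ query but inflates the space to $n^{d+O(\eps)}$. The local choice $r=n^\eps$ threads the needle, producing an $O(\log\log n)$-depth tree in which both the query bound (via the geometric series on $\log n$) and the space bound (via the geometric series in the exponent of $n$) collapse to the desired values.
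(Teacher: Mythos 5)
Your proposal is correct and follows essentially the same route as the paper: reduce to $(1,d)$-sided queries via Observations~\ref{obs1} and~\ref{obs2}, build a cutting tree in the dual with local branching factor $r=n^\eps$, attach $(d-1)$-dimensional secondary structures from Lemma~\ref{lem:extra_large_space_ds} to the canonical subsets, and solve $Q(n)=Q(n^{1-\eps})+O(\log n)$ by a geometric series. Your per-depth space accounting is a slightly more explicit version of the paper's ``constant-factor blowup over $O(\log\log n)$ levels'' argument (note the sum is not quite geometric near the leaves, but with only $O(\log\log n)$ levels each contributing $\OO(n^d)$ the bound $\OO(n^d)$ still follows), and the tension you identify between branching-factor choices is exactly the point of the paper's construction.
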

\begin{proof}
  By the above two observations, a simplex range query reduces to a constant number of $(1,d)$-sided queries in the group setting.  It suffices to describe a solution to $(1,d)$-sided queries.
  
  Like before, we apply the cutting lemma to the dual hyperplanes $H$.
For each cell $\Delta$ of the cutting, we recurse on the conflict list $H_\Delta$.
In addition, we let $C^+_\Delta$ (resp.\ $C^-_\Delta$) be the subset of the hyperplanes in $H$ that are completely above (resp.\ below) $\Delta$;
we store this subset (a \emph{canonical subset}) in a data structure for $(0,d)$-sided queries.  Note that $(0,d)$-sided queries are equivalent to
range queries on the $(d-1)$-dimensional vertical projection of the input points, and so there is a data structure with  $S_{0,d}(n)=O(n^{d-1+\eps})$ space (and preprocessing time), and
$Q_{0,d}(n)=O(\log n)$ query time by Lemma~\ref{lem:extra_large_space_ds}.
This gives the following recurrence for the space (and also preprocessing time) of the data structure:
  \[ S_{1,d}(n) \:=\: O(r^d)S_{1,d}(n/r) + O(r^d) S_{0,d}(n) + O(nr^d). \]
To answer a $(1,d)$-sided query, let $h$ be the hyperplane bounding its nonvertical halfspace; w.l.o.g., assume that this halfspace is an lower halfspace.
We find the cell $\Delta$ containing the dual point $h^*$,
recursively answer the query for $H_\Delta$,
and answer a $(0,d)$-sided query for $C^+_\Delta$
(since we already know that $h^*$ is above all hyperplanes in $C^+_\Delta$, all the remaining sides are vertical); we then return the sum.
This gives the following recurrence for the query time:
\[ Q_{1,d}(n) \:=\: Q_{1,d}(n/r) + Q_{0,d}(n) + O(\log r). \]

  We choose $r=n^\eps$
  for a sufficiently small constant $\eps$, 
  and plug in $S_{0,d}(n)=O(n^{d-1+\eps})$. Then
    \[ S_{1,d}(n) \:=\: O(n^{\eps d})S_{1,d}(n^{1-\eps}) + O(n^{d-1+O(\eps)}). \]
  The recursion has $O(\log \log n)$ depth.
  Due to a constant-factor blowup, we 
  get $S_{1,d}(n) = O(n^d 2^{O(\log \log n)}) = \OO(n^d)$.
  On the other hand,
    \begin{align*}
    Q_{1,d}(n) &\ \le\ Q_{1,d}(n^{1-\eps}) + O(\log n) \\
         &\ \le\ O(\log n + (1-\eps)\log n + (1-\eps)^2 \log n + \cdots) 
         \ \le\ O(\log n)
  \end{align*}
  by a geometric series.
\end{proof}

\IGNORE{
*********

\begin{problem} \label{prob:simplex_level2}
For $n$ weighted points in $\R^d$, build a data structure to handle queries of the following form:
Given an arbitrary halfspace $h$ in $\R^d$ and $d$ vertical halfspaces $h_1,\ldots, h_d$,
compute the sum of the weights of the points that lie within the intersection of these $d+1$ halfspaces.
\end{problem}

We will show there exist an efficient data structure that doesn't use too much memory that solves this problem.

\begin{lemma}\label{lem:simplex_level2_polylog}
Problem~\ref{prob:simplex_level2} for arbitrary weights can be solved with a data structure with $O(\log n)$ query time that uses $O(n\polylog n)$ space and preprocessing.
Furthermore, the data structure is a decomposition scheme where every query can be decomposed into $O(\log \log n)$ canonical subsets.
\end{lemma}

To see why this problem is interesting, we will first show why solving this problem solves the simplex range searching problem for group weights.

\begin{theorem} \label{thm:simplex_group_polylog}
  Simplex range searching on $n$ points in $\R^{d}$ where points have weights live a group can be performed in 
  $O(\log n)$ query time using a data structure of size  $O(n^{d}\polylog n)$ space and preprocessing.
\end{theorem}
\begin{proof}
  For simplicity, we assume all points have weight $1$. The algorithm can easily be adapted to general weights.
  Let $H_0, ..., H_{d}$ be the set of $d+1$ hyperplanes in $\R^d$ that define the simplex $S$.
  Without loss of generality, we may assume that no hyperplane $H_i$ is parallel with the 
  $x$-axis (otherwise, we may apply standard perturbation arguments).

  Take the vertical decomposition of the simplex $S$ in the $x$ direction. 
  There are $O(1)$ cells (careful analysis shows there are $O(d^d)$ cells)
  in the vertical decomposition, which we denote by $VD(S)$.
  Each cell $C\in VD(S)$ in this decomposition is 
  bounded from above by a hyperplane $H_U(C)$ 
  and from below by a hyperplane $H_L(C)$. 
  The sides of $C$ are bounded 
  by a set of vertical hyperplanes $V(C)$ parallel to the $x$-axis. 
  Without loss of generality, we'll assume that the projection of $V(C)$ to $x=0$ is a simplex,
  as if it is not, we can take any triangulation.

  To count the number of points in the $S$, it suffices to count the number of points in each cell $C$.
  We can compute the number of points (a) lying below $H_L(C)$ and (b) within $V(C)$
  and subtract from the points lying below $H_L$ and bounded by $V(C)$.
  The data structure of Lemma~\ref{lem:simplex_level2_polylog} gives the desired runtimes.
\end{proof}

\begin{proof}[Proof of Lemma~\ref{lem:simplex_level2_polylog}]
Build  a two level data structure with these levels:
      \begin{enumerate}[(a)]
        \item A data structure that decomposes the set of points below a hyperplane $H$ into a disjoint union of subsets $C_1, ..., C_\ell$. 
        \item For each canonical subset $C_i$, a data structure on the points projected to $x_0$ 
          for the problem in $\R^{d-1}$ to query for the number of points in $V(C)$.
      \end{enumerate}

  By standard geometric duality, the set of points of $S$ lying below $H$ correspond to the set of dual hyperplanes of $S^*$ lying above the point $H^*$. 
  Thus we need to decompose the hyperplanes of $S^*$ into disjoint subsets of points. $(1/r)$-cuttings help for this,
  as our query point $H^*$ will lie in some cell $\Delta$ that intersect at most $n/r$ hyperplanes. We can recurse on these
  lines, and use the data structure for (b) to query the hyperplanes above $\Delta$. This gives the following recurrence
  on the space usage of the overall data structure:
  \[ S(n) \le O(r^d)S(n/r) + O(r^d)O(n^{d-1+\eps}) \]
  Fix a constant $\delta>0$ that we will specify later.
  We will recursively build a $(1/r)$-cutting with $r=n^\delta$.
  Making $\delta$ a sufficiently small constant ($\delta < \epsilon/d$ suffices), 
  the space usage of this data structure can be bounded by 
  the following recursive equation where $S(n)$ denotes the space usage of a cell $\Delta$ intersecting $n$ lines:
  \begin{align*}
    S(n) &\le O(n^{\delta d})S(n^{1-\delta}) + O(n^{\delta d})O(n^{d-1+\eps}) \\
         &\le O(n^{\delta d})S(n^{1-\delta}) + o(n^{d}) 
  \end{align*}
  In $O(\log \log n)$ steps, we get constant sized subproblems. 
  Due to the constant factor blowup, we may end up with $O(n^d 2^{O(\log \log n)}) = \OO(n^d)$ subproblems.
  It is easy to verify that this is also a bound on the space used by the algorithm.

  The runtime of a query of the form $V(C)$ and a hyperplane $H$ on a cell $\Delta'$ 
  involves doing a point location query for the point $p^*$ to find the next cell $\Delta$.
  Afterwards, there is a query in the datastructure of (b).
  This gives the following recursion for the query time $Q(n)$ 
  starting with a large cell $\Delta'$ that intersects $n$ dual hyperplanes:
  \begin{align*}
    Q(n) &\le Q(n^{1-\delta}) + O(\log n) \\
         &\le O(\log n) + O((1-\delta)\log n) + \cdots + O((1-\delta)^k \log n) 
         \le O(\log n)
  \end{align*}
  
  As the query time forms a geometric sequence, it is also $O(\log n)$.
\end{proof}

\paragraph{Remarks.} 
Note that the order of the multilevel data structures 
for step 3 and 4 were critical to ensure that the space does not increase by more than polylogarithmic factors.
Had we first grabbed the canonical subsets of the projected points in each cell of the 
vertical decomposition, we would have the space usage by a factor of $n^{\epsilon}$.
Note that it is actually possible to apply our other trick of choosing a growing sequence of 
branching factors at the expense of just a $\log \log n$ extra factor in space, 
that extra factor is not improvable with the hierarchical data structures discussed in the next section.

}

\subsection{Reducing space from $\OO(n^d)$ to $O(n^d)$ (or better)}\label{sec:hier}

In the proof of Theorem~\ref{thm:simplex_group_polylog}, we get an extra polylogarithmic factor in space because of the constant-factor blowup, but this can be avoided by using a known ``hierarchical'' version of cuttings due to Chazelle~\cite{Chazelle93}:

\begin{lemma}[Hierarchical Cutting Lemma]\label{thm:hierarchical_cuttings}
  Let $H$ be a set of $n$ hyperplanes over $\R^d$. For any sequence
  $r_1 < r_2<\cdots < r_\ell\le n$,
  we can compute a tree of cells, such that the cell of each node is the disjoint union of the cells of its children, and
  the cells at each depth $i$ form a $(1/r_i)$-cutting of $H$ of size $O(r_i^d)$.  All the cells and all the conflict lists can be computed
  in  time $O(nr_\ell^{d-1})$, and we can determine the child cell  containing any given point at a node of depth $i$ in time $O(\log (r_{i+1}/r_i))$.
\end{lemma}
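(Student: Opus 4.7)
The plan is to derive the lemma from Chazelle's canonical hierarchical cutting construction, which uses a \emph{geometric} refinement ratio, and then pick out the levels that correspond to the given sequence $r_1 < r_2 < \cdots < r_\ell$.

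First I would fix a sufficiently large constant $c$ and build a canonical hierarchy level by level. Level $0$ is a single cell $\R^d$ whose conflict list is all of $H$. Inductively, suppose level $j$ is a $(1/c^j)$-cutting of size $O(c^{jd})$. For each cell $\Delta$ at level $j$ with conflict list $H_\Delta$ (of size $\le n/c^j$), apply Lemma~\ref{lem:cutting} to $H_\Delta$ with parameter $c$, getting a $(1/c)$-cutting of $H_\Delta$ consisting of $O(c^d)$ simplices; clip each to $\Delta$ and triangulate the pieces into $O(1)$ simplices apiece, producing $O(c^d)$ child cells of $\Delta$. Each child has a conflict list of size $\le n/c^{j+1}$, so level $j+1$ is a $(1/c^{j+1})$-cutting of $H$ of size $O(c^{(j+1)d})$, and by construction each parent is the disjoint union of its children.

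Next I would check the preprocessing bound. The work done at level $j+1$ sums to $O(c^{jd}) \cdot O((n/c^j)\, c^{d-1}) = O(n c^{(j+1)(d-1)})$ by Lemma~\ref{lem:cutting}, which is geometric in $j$ with ratio $c^{d-1} > 1$. Hence building the canonical hierarchy up to level $L := \lceil \log_c r_\ell \rceil$ takes $O(n c^{L(d-1)}) = O(n r_\ell^{d-1})$ time, during which all cells and conflict lists are produced, together with the trivial constant-size point location structure for choosing among the $O(c^d)$ children of any node. To obtain the tree asked for in the lemma, I would set $j_i := \lceil \log_c r_i \rceil$ and declare the depth-$i$ nodes to be the cells at level $j_i$ of the canonical hierarchy; a $(1/c^{j_i})$-cutting is a $(1/r_i)$-cutting, of size $O(c^{j_i d}) = O(r_i^d)$. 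The children of a depth-$i$ node $\Delta$ are the depth-$(i{+}1)$ cells it contains; these partition $\Delta$ since each intermediate level of the canonical hierarchy refines the previous. Point location at a depth-$i$ node descends $j_{i+1} - j_i = O(\log(r_{i+1}/r_i))$ levels of the canonical hierarchy, each with only $O(1)$ candidate sub-cells, giving the claimed $O(\log(r_{i+1}/r_i))$ cost per node.

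The main subtlety to watch for is that the preprocessing bound must hold even when $r_1 < \cdots < r_\ell$ grows only slightly at each step. A naive level-by-level construction driven directly by the given sequence would rerun a full cutting refinement at every $r_i$, incurring cost $\sum_i O(n r_i^{d-1})$, which can far exceed $O(n r_\ell^{d-1})$ when the $r_i$'s are bunched. Interposing the geometric canonical hierarchy and \emph{sampling} its levels to recover the requested sequence avoids this, since the cumulative work is controlled by a single geometric series rather than by $\ell$.
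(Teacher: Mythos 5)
Your overall strategy is exactly the paper's: take Chazelle's hierarchical cutting for the geometric sequence $1,\rho,\rho^2,\ldots$ as the underlying object, then obtain the stated generalization by rounding each $r_i$ to the nearest power of the ratio and compressing the tree to the sampled depths. Your handling of that generalization step is correct and complete: the sampled levels are $(1/r_i)$-cuttings of size $O(r_i^d)$, parent cells are disjoint unions of their (grand)children because every intermediate canonical level refines the previous one, point location at a compressed node costs $O(j_{i+1}-j_i)=O(\log(r_{i+1}/r_i))$ constant-degree descents, and the preprocessing is governed by a single geometric series ending at $O(nr_\ell^{d-1})$. Your closing remark about why one must not rebuild a cutting from scratch at each $r_i$ is also exactly the right thing to worry about.

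The gap is in your first paragraph, where you try to re-derive the canonical geometric hierarchy itself by naively iterating Lemma~\ref{lem:cutting} on conflict lists. If each cell at level $j$ is refined into $O(c^d)$ children, i.e.\ at most $Ac^d$ children for some constant $A>1$, then level $j$ has up to $(Ac^d)^j=A^jc^{jd}$ cells, and with $j=\log_c r$ the factor $A^j=r^{\log_c A}$ is polynomial in $r$, not constant. So the inductive claim ``level $j+1$ is a $(1/c^{j+1})$-cutting of size $O(c^{(j+1)d})$'' does not follow: the naive recursion only gives $O(r^{d+\eps})$ cells (and correspondingly $O(nr_\ell^{d-1+\eps})$ time), which is precisely the bound that Chazelle's 1993 construction was designed to beat. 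Getting the clean $O(r^d)$ size at every level requires Chazelle's more delicate global charging argument (bounding the number of cells produced inside $\Delta$ in terms of the vertices of the arrangement of $H_\Delta$ within $\Delta$, so that the counts telescope across levels), not just repeated invocation of the basic Cutting Lemma. The paper sidesteps this entirely by citing Chazelle's theorem for the geometric sequence as a black box; if you do the same, the rest of your argument stands as written.
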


Originally, Chazelle proved the above lemma for the sequence $1,\rho,\rho^2,\ldots$ for some constant $\rho>1$, but the above generalization follows immediately by rounding each $r_i$ to a power of $\rho$, and keeping only the nodes with depths in a subsequence of $1,\rho,\rho^2,\ldots$ (thereby compressing the tree).

\begin{theorem} \label{thm:simplex_group}
  Simplex range searching on $n$ points in $\R^{d}$ with weights from a group can be performed with
  $O(\log n)$ query time using a data structure with $O(n^{d})$ space and preprocessing time.
\end{theorem}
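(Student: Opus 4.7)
The plan is to revisit the proof of Theorem~\ref{thm:simplex_group_polylog} and replace the recursive application of the basic cutting lemma with a single invocation of the hierarchical cutting lemma (Lemma~\ref{thm:hierarchical_cuttings}). The extra polylogarithmic factor in the earlier proof came from the constant-factor blow-up inherent in the recurrence $S_{1,d}(n)=O(r^d)S_{1,d}(n/r)+\cdots$: each of the $O(\log\log n)$ recursive levels multiplies the hidden constant, giving the $2^{O(\log\log n)}=\polylog n$ overhead. The hierarchical version of the lemma sidesteps this because the cells at every depth are produced with a single global $O$-constant, rather than compounded across levels.

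Concretely, as before I would reduce a simplex query to $O(1)$ many $(1,d)$-sided queries by Observations~\ref{obs1} and \ref{obs2}, so it suffices to solve $(1,d)$-sided queries. Dualize the input into a set $H$ of $n$ hyperplanes and apply Lemma~\ref{thm:hierarchical_cuttings} to $H$ with the sequence $r_i=n^{1-(1-\eps)^i}$ for $i=1,\ldots,\ell$, for a sufficiently small constant $\eps>0$ (the sequence ends after $\ell=O(\log\log n)$ steps, when $r_\ell=n$). At every node of the resulting tree of cells, store the canonical subsets $C^+_\Delta$ and $C^-_\Delta$ consisting of the hyperplanes lying above (resp.\ below) $\Delta$ but not above (resp.\ below) its parent; each such subset has size at most $n/r_{i-1}$ and is placed in the $(0,d)$-sided data structure of Lemma~\ref{lem:extra_large_space_ds} (a $(d-1)$-dimensional range searching structure) with space $O(m^{d-1+\eps})$ on $m$ points.

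The main bookkeeping step is to bound the total space by $O(n^d)$. At depth $i$ there are $O(r_i^d)$ cells, each holding a secondary structure of size $O((n/r_{i-1})^{d-1+\eps})$, for a total of $O(r_i^d (n/r_{i-1})^{d-1+\eps})$ at that depth. Substituting the chosen sequence and simplifying, the exponent is $d+(1-\eps)^{i-1}((d+1)\eps-1)$, which is at most $d-\Omega((1-\eps)^{i-1})$ as soon as $\eps<1/(d+1)$. Summing over $i$ yields a geometric series whose total is $O(n^d)$, with no polylogarithmic overhead. A query walks down the tree, performing one point location per level (cost $O(\log(r_{i+1}/r_i))=O(\eps(1-\eps)^i\log n)$ by Lemma~\ref{thm:hierarchical_cuttings}) and one $(0,d)$-sided query on a canonical subset of size at most $n/r_{i-1}$ (cost $O((1-\eps)^{i-1}\log n)$ by Lemma~\ref{lem:extra_large_space_ds}); both quantities are geometric in $i$ and sum to $O(\log n)$.

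The step I expect to be most delicate is calibrating the sequence $r_i$ so that the space sum and the query-time sum are simultaneously geometric. What makes this possible is that the secondary structures solve a $(d{-}1)$-dimensional problem, whose exponent $d-1+\eps$ is strictly below $d$; this slack is exactly what allows the geometric series for the space to converge, and it is the concrete manifestation of the general principle highlighted in the introduction that innermost levels of strictly lower complexity do not inflate the overall bounds by logarithmic factors. The preprocessing time obeys the same recurrences and accounting, using the construction bound inside Lemma~\ref{thm:hierarchical_cuttings}.
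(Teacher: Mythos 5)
Your proposal is correct and follows essentially the same route as the paper: replace the recursively built cuttings at the outer level with the hierarchical cutting lemma using the doubly-exponentially growing sequence $r_i = n^{1-(1-\eps)^i}$ (equivalently $n_i = n^{(1-\eps)^i}$), so that the per-depth space $O(r_i^d (n/r_{i-1})^{d-1+O(\eps)})$ sums geometrically to $O(n^d)$ while the query cost remains a geometric series summing to $O(\log n)$. Your exponent calculation and the observation that the slack from the $(d-1)$-dimensional secondary structures is what makes the series converge match the paper's argument exactly, just carried out in more explicit detail.
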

\begin{proof}
To reduce space in our data structure, we just replace all the cuttings in the proof of Theorem~\ref{thm:simplex_group_polylog} (the outer level) with the tree of cuttings from the hierarchical cutting lemma,
using the sequence $r_i=n/n_i$, $n_0=n$, $n_{i+1}=n_i^{1-\eps}$, and $\ell=O(\log\log n)$.
The space bound becomes
$S_{1,d}(n) = O\left(\sum_{i=0}^{\ell-1} r_{i+1}^d n_i^{d-1+O(\eps)} \right)
= O\left(\sum_{i=0}^{\ell-1} n^d/ n_i^{1-O(\eps)}\right)$,
which sums to $O(n^d)$.
\end{proof}

\paragraph{Remark.}
In fact, space can be further reduced to slightly below $n^d$ as follows:
We set $\ell$ so that  $r_\ell=n/A$ for a parameter $A$, and switch to a different data structure for leaf subproblems of size~$A$ with
$O(m\log^{O(1)}A)$ space and 
$O(A/m^{1/d})$ query time~\cite{Matousek93,Chan12}; by choosing $m=A^{d-\delta}$ for a sufficiently
small constant $\delta>0$, we get $O(A^{d-\delta}\log^{O(1)}A)$ space and $O(A^{\delta/d})$ query time.
The space summation now gives
$O(n^d/A^{1-O(\eps)})$, and the total space of the leaf structures is
$O((n/A)^d\cdot A^{d-\delta}\log^{O(1)}A)=O((n^d/A^{\delta})\log^{O(1)}A)$, which dominates the sum.
The query time is $O(\log n +A^{\delta/d})$.
Setting $A=\log^{d/\delta}n$ yields logarithmic query time and $O((n^d/\log^d n)(\log\log n)^{O(1)})$ space.

Note that this type of $O(n^d/\log^{\Omega(1)}n)$ space bound is not a complete surprise and has appeared before for certain problems (e.g., see~\cite{ChazelleF94}).

The small extra $\log\log n$ factors are likely improvable by bootstrapping.  (In fact, for counting, we might even be able to get slightly below $n^d/\log^dn$ by bit packing tricks; see the remark at the end of  Section~\ref{sec:stab:bit}.)


\subsection{Simplex range reporting}\label{sec:report}

For simplex range reporting, the subtraction trick in Observation~\ref{obs2} is no longer applicable.  We could
add another level to our multi-level data structure to reduce
$(2,d)$-sided queries to $(1,d)$-sided queries, but this would result in an extra logarithmic factor in the query time.  We propose a different strategy to solve the $(2,d)$-sided problem.

We start with the $(1,0)$-sided query problem, which can be solved by
known results for halfspace range reporting~\cite{clarkson87,Matousek92CGTA}, as stated in the following lemma.
(This data structure was obtained by a recursive application of a ``shallow'' variant of cuttings. The extra $n^\eps$ factor in the space bound was later improved
by \Matousek{} and Schwarzkopf \cite{MatousekS93}, but this weaker result is enough as what's important is
that the exponent is strictly less than $d$.)


\begin{lemma}\label{lem:halfspace_range_reporting}
Halfspace reporting on $n$ points in $\R^d$ 
can be performed with $O(\log n+k)$ query time using a data structure
with $O(n^{\lfloor d/2\rfloor+\eps})$ space and preprocessing time for any fixed $\eps>0$.
\end{lemma}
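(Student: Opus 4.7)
The plan is to mirror the recursive cutting construction from Section~\ref{sec:prev}, but substitute a \emph{shallow} variant of cuttings for the full $(1/r)$-cuttings. By geometric duality the $n$ input points become a set $H$ of $n$ hyperplanes, and the bounding hyperplane of a query halfspace becomes a dual point $q$; the problem becomes reporting the hyperplanes of $H$ lying above $q$. The key new ingredient is the \emph{shallow cutting lemma}: for any $r$, there exists a $(1/r)$-shallow cutting of $H$---a simplicial decomposition of the region of $\R^d$ at level $\le n/r$ in the arrangement of $H$, with each cell crossed by at most $n/r$ hyperplanes---of size only $O(r^{\lfloor d/2\rfloor})$. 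This reduced size, in contrast to the $O(r^d)$ size of a full cutting, follows from the Upper Bound Theorem on the $(\le k)$-level complexity of hyperplane arrangements.

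I would then construct the data structure recursively with branching parameter $r = n^\eps$: apply a $(1/r)$-shallow cutting to $H$; for each cell $\Delta$, recurse on its conflict list $H_\Delta$ (of size $\le n/r$), and associate with $\Delta$ a canonical subset $C^+_\Delta$ of the hyperplanes lying entirely above $\Delta$, stored via hierarchical differencing against the parent cell's canonical subset so that each hyperplane appears $O(1)$ times along any root-to-leaf path and $C^+_\Delta$ can be enumerated in linear time via pointer chasing. To handle query points at levels deeper than $n/r$ (which lie outside the coverage of a single shallow cutting), I would maintain a nested hierarchy of shallow cuttings at geometrically increasing depth parameters $n/r,\, 2n/r,\, 4n/r,\, \ldots$, so that every query lies inside a cell of some cutting whose conflict-list size is proportional to the final output.

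At query time, point-locate $q$ in the appropriate shallow cutting of the hierarchy, report $C^+_\Delta$ for the containing cell (each of whose hyperplanes is above $q$), and recurse on $H_\Delta$. The space recurrence
\[
  S(n) \ = \ O(r^{\lfloor d/2\rfloor})\,S(n/r) + O(r^{\lfloor d/2\rfloor})
\]
with $r=n^\eps$ and $O(1/\eps)$ recursion depth solves to $S(n) = O(n^{\lfloor d/2\rfloor + O(\eps)})$, which gives the stated bound after readjusting $\eps$ by a constant factor; the query recurrence $Q(n) = Q(n/r) + O(\log r) + |C^+_\Delta|$ sums to $O(\log n + k)$ once all canonical-subset enumerations are charged to $k$.

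The principal obstacle is controlling the total canonical-subset storage and enumeration cost: because a single hyperplane can lie above polynomially many shallow-cutting cells, explicit per-cell listing would inflate the space; this is resolved by the hierarchical differencing representation combined with the hierarchy of shallow cuttings at multiple depths, ensuring we never need a brute-force fallback and that both the space and the output-sensitive query cost stay within the claimed bounds.
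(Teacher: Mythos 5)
The paper does not actually prove this lemma: it imports it as a known result of Clarkson and of \Matousek{} (\emph{Reporting points in halfspaces}), noting only that the cited structure is ``a recursive application of a shallow variant of cuttings.'' Your reconstruction follows exactly that route---shallow $(1/r)$-cuttings of size $O(r^{\lfloor d/2\rfloor})$ covering the $(\le n/r)$-level in the dual, canonical subsets of hyperplanes entirely above each cell, recursion on conflict lists with branching parameter $r=n^\eps$ and $O(1/\eps)$ depth---so the architecture and the space accounting are the standard, correct ones.

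Two points in your write-up do not hold as stated, though both are repairable. First, the claim that the multi-depth hierarchy ensures ``we never need a brute-force fallback'' is backwards. For a query at level $k>n/r$, the cell of the depth-$\Theta(k)$ cutting containing $q$ has a conflict list of size $\Theta(k)$; if $k=\Theta(n)$, ``recursing'' on that conflict list makes essentially no progress, and the recursion as you describe it need not terminate within the claimed cost. The correct move here \emph{is} the brute-force scan: the hierarchy guarantees the conflict list has size $O(k)$, so scanning it costs $O(k)$, exactly within budget; the purpose of the geometric hierarchy of depths is to make the scanned list proportional to the output, not to avoid scanning. (Relatedly, the space worry you raise is largely moot at this space bound: every useful cell of the $(\le n/r)$-shallow cutting contains a point with at most $n/r$ hyperplanes above it, so $|C^+_\Delta|\le n/r$ automatically, and explicit storage already fits in $O(n^{\lfloor d/2\rfloor+O(\eps)})$; hierarchical differencing is what one needs for the \emph{linear}-space variant of halfspace reporting, not this one.) Second, ``point-locate $q$ in the appropriate shallow cutting of the hierarchy'' hides a cost: probing the $O(\log r)$ cuttings one at a time costs $O(\log^2 n)$, not $O(\log n)$; one needs either nested cuttings descended top-down or a preliminary $O(\log n)$-time approximate level query to jump directly to the right depth. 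With these two repairs your argument is the standard proof of the cited result and yields the stated bounds.
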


Next we solve the $(1,d)$-sided query problem:

\begin{lemma} \label{lem:simplex_reporting_level2}
There is a data structure for $(1,d)$-sided queries in $\R^d$ with $O(\log n+k)$ query time and $O(n^{d-1+\eps})$ space and preprocessing time for any fixed $\eps > 0$.
\end{lemma}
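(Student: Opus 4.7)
The plan is to mirror the cutting-based multi-level construction of Theorem~\ref{thm:simplex_group_polylog}, adapted for reporting. The single nonvertical halfspace of a $(1,d)$-sided query is handled at the outer (primary) level via a cutting in the dual, while the $d$ vertical halfspaces are handled at the inner level via a structure for $(0,d)$-sided reporting. Subtraction is not available, but it is also not needed here since there is only one nonvertical side.

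First I would solve the $(0,d)$-sided reporting subproblem. By vertically projecting the input points, this is equivalent to $(d-1)$-dimensional simplex range reporting. For this I would use the standard multi-level cutting tree with large branching factor $r = N^\eps$ --- the reporting analogue of Lemma~\ref{lem:extra_large_space_ds}, where canonical subsets at the innermost level are stored as explicit lists so that reporting contributes only a $+k$ term to the query time, and an $O(1)$ recursion depth (per halfspace level) controls the overall space. This yields $O(\log n + k)$ query time with $O(n^{d-1+\eps})$ space and preprocessing.

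Next I would apply the cutting lemma to the $n$ dual hyperplanes with branching factor $r = n^\eps$. For each cell $\Delta$, I store the canonical subsets $C^+_\Delta$ and $C^-_\Delta$ in the $(0,d)$-sided reporting structure above, and recurse on the conflict list $H_\Delta$ for the $(1,d)$-sided problem at smaller scale. To answer a query whose nonvertical halfspace is bounded by a hyperplane $h$ (say an upper halfspace), I locate the cell $\Delta$ containing the dual point $h^*$, query the inner $(0,d)$-sided structure attached to $C^+_\Delta$ with the $d$ vertical halfspaces, and recurse on $H_\Delta$. The resulting space recurrence
\[ S(n) \:=\: O(n^{\eps d})\, S(n^{1-\eps}) + O(n^{\eps d}) \cdot O(n^{d-1+\eps}) + O(n^{1+\eps d}) \]
has depth $O(\log\log n)$ and, after a $2^{O(\log\log n)} = n^{o(1)}$ blowup, solves to $O(n^{d-1+O(\eps)}) = O(n^{d-1+\eps})$ upon rescaling $\eps$. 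The query recurrence
\[ Q(n) \:=\: Q(n^{1-\eps}) + O(\log n) + k_\Delta \]
solves to $O(\log n + k)$ by a geometric series, where $k = \sum k_\Delta$ is the total output accumulated along the recursion path.

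The main obstacle is to ensure that the inner $(0,d)$-sided structure has space strictly below $n^d$, specifically $O(n^{d-1+\eps})$, so that the ``strictly lower complexity'' principle from the paper's introduction applies and the outer cutting recursion collapses to the claimed exponent $d-1+\eps$ without any logarithmic blowup. The reduction to $(d-1)$-dimensional simplex range reporting via vertical projection is precisely what provides the needed exponent gap between inner and outer structures.
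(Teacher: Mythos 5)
There is a genuine gap, and it is in the space analysis of your outer level. You place the nonvertical halfspace at the \emph{outermost} level and resolve it with a full $d$-dimensional cutting tree (branching $O(r^d)$ with $r=n^\eps$). The recurrence $S(n)=O(n^{\eps d})S(n^{1-\eps})+O(n^{d-1+O(\eps)})$ does \emph{not} solve to $O(n^{d-1+O(\eps)})$: it is bottom-heavy. The number of nodes at depth $i$ is about $n^{d(1-(1-\eps)^i)}$, so by the time the recursion bottoms out at constant-size subproblems (depth $O(\log\log n)$) there are $\Theta(n^{d-o(1)})$ leaves, each contributing at least constant space; the total is $\OO(n^d)$, exactly as the paper computes for the \emph{same} recurrence in Theorem~\ref{thm:simplex_group_polylog}. (The $2^{O(\log\log n)}$ ``blowup'' argument there is used to conclude $\OO(n^d)$ from the $\approx n^d$ leaf count, not to preserve an $n^{d-1+\eps}$ bound.) Concretely, for $d=2$ your structure would use $\Omega(n^2)$ space while the lemma claims $O(n^{1+\eps})$. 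This also defeats the purpose of the lemma: it must have space strictly below $n^d$ precisely so that it can sit as the secondary structure inside the $(2,d)$-sided recurrence of Theorem~\ref{thm:simplex_reporting}; with $\OO(n^d)$ space it cannot.

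The paper's proof uses the opposite ordering, which is the point it later emphasizes as ``critical.'' The \emph{innermost} level handles the one nonvertical side as a $(1,0)$-sided problem, i.e.\ halfspace range reporting, whose known space bound $O(n^{\lfloor d/2\rfloor+\eps})\le O(n^{d-1+\eps})$ (Lemma~\ref{lem:halfspace_range_reporting}, via shallow cuttings) is already strictly below $n^d$. The $d$ vertical sides are then added one at a time at the \emph{outer} levels using cuttings in $d-1$ dimensions, so the branching is only $O(r^{d-1})$ and the recurrence $S_{1,j}(n)=O(r^{d-1})S_{1,j}(n/r)+O(r^{d-1})S_{1,j-1}(n)+O(nr^{d-1})$ with $r=N^\eps$ and $O(1)$ depth has only $O(N^{d-1+O(\eps)})$ leaves, giving the claimed space. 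Your inner $(0,d)$-sided reporting structure is fine on its own, but the halfspace constraint must go to the bottom of the hierarchy, not the top, for the exponent $d-1+\eps$ to survive.
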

\begin{proof}
Applying the same method as in Section~\ref{sec:prev} using cuttings in $d-1$ dimensions to deal with the vertical sides, 
we obtain a data structure for $(1,j)$-sided queries with the following recurrences for space and query time (ignoring the ``$+k$'' reporting cost):
\[ S_{1,j}(n) \:=\: O(r^{d-1})S_{1,j}(n/r) + O(r^{d-1}) S_{1,j-1}(n) + O(nr^{d-1}) \]
\[ Q_{1,j}(n) \:=\: Q_{1,j}(n/r) + Q_{1,j-1}(n)  + O(\log r), \]
with  $S_{1,0}(n) = O(n^{\lfloor d/2\rfloor+\eps}) \le O(n^{d-1+\eps})$ and
$Q_{1,0}(n)=O(\log n)$ by Lemma~\ref{lem:halfspace_range_reporting} for the base case.

We choose $r=N^\eps$ where $N$ is the global input size (``global'' is with respect to this lemma), 
to ensure that the recursion depth is $O(1)$. This gives $S_{1,d}(N)=O(N^{d-1+O(\eps)})$ and $Q_{1,d}(N)=O(\log N)$ (as usual, we can readjust $\eps$ by a constant factor).
\IGNORE{
Construct the data structure of Theorem~\ref{thm:extra_large_space_ds}
using $O(n^{d-1+\eps})$ space
for the projection of all points to the hyperplane $x_0 = 0$,
and for every canonical subset, build the data structure of Theorem~\ref{thm:halfspace_range_reporting} using $O(n^{\lfloor{d}/{2}\rfloor+\eps})$ space.
Since the data structure of Theorem~\ref{thm:extra_large_space_ds} has $O(1)$ levels,
the query time is $O(\log n + k)$ to find the canonical subsets for (a) and report all points satisfying (b). 
The space usage is $O(n^{d-1+\eps})$.
}
\end{proof}

Finally, we solve the $(2,d)$-sided query problem:

\begin{theorem} \label{thm:simplex_reporting}
Simplex reporting on $n$ points in $\R^d$
can be performed with $O(\log n + k)$ query time (where $k$ is the output size) using a data structure 
with $O(n^d)$ space and preprocessing time.
\end{theorem}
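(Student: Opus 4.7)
\emph{Plan.} By Observation~\ref{obs1}, simplex reporting reduces to a constant number of $(2,d)$-sided reporting queries, so it suffices to design a data structure for this subproblem. The subtraction trick of Observation~\ref{obs2} is unavailable for reporting, so I cannot collapse a $(2,d)$-sided query to a $(1,d)$-sided one; a naive three-level construction would introduce an extra logarithmic factor in the query bound. Following the principle from the introduction, I would instead put exactly one more cutting-tree level on top of the $(1,d)$-sided reporting structure of Lemma~\ref{lem:simplex_reporting_level2}, exploiting the fact that its space bound $O(n^{d-1+\eps})$ is strictly smaller than the overall $n^d$ budget.

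\emph{Construction and query.} In the dual the input becomes a set $H$ of $n$ hyperplanes. Designating one of the two nonvertical halfspaces of the query (bounded by a hyperplane $h$) as the ``outer'' side, I would build a cutting tree on $H$ with branching factor $r=n^\eps$ and, for every cell $\Delta$, store the canonical subsets $C^+_\Delta$ and $C^-_\Delta$ of hyperplanes completely above/below $\Delta$ in the $(1,d)$-sided reporting structure of Lemma~\ref{lem:simplex_reporting_level2}. To answer a query, I locate $h^*$ in the current cell $\Delta$, invoke the secondary structure on $C^+_\Delta$ (say) using the remaining nonvertical side plus the $d$ vertical sides (every point in $C^+_\Delta$ automatically satisfies the $h$-constraint), and then recurse on the child cell of $\Delta$ containing $h^*$.

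\emph{Analysis and main obstacle.} The space and query recurrences read
\[ S_{2,d}(n) \:=\: O(r^d)\, S_{2,d}(n/r) + O(r^d)\cdot O(n^{d-1+\eps}) + O(nr^d), \qquad Q_{2,d}(n) \:=\: Q_{2,d}(n^{1-\eps}) + O(\log n + k_\Delta), \]
where $k_\Delta$ is the number of points reported from $C^+_\Delta$. With $r=n^\eps$, the query recurrence telescopes as a geometric series to $O(\log n + k)$, the $k_\Delta$ contributions summing to the total output size. The space solves to $\OO(n^d)$; exactly as in Theorem~\ref{thm:simplex_group}, I would replace the plain cuttings with the hierarchical cuttings of Lemma~\ref{thm:hierarchical_cuttings} along the sequence $r_i = n/n_i$, $n_{i+1}=n_i^{1-\eps}$, bringing this down to $O(n^d)$. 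The main obstacle to watch for is ensuring that storing the $(1,d)$-sided secondary structure at \emph{every} node of the cutting tree does not blow up the space bound, and this is precisely where the strict inequality $d-1+\eps < d$ --- i.e.\ the ``secondary structures of strictly lower complexity'' principle --- does the work, allowing the recurrence to close at $O(n^d)$ without any extra logarithmic factors.
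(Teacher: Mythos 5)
Your proposal is correct and follows essentially the same route as the paper: reduce to $(2,d)$-sided queries via Observation~\ref{obs1}, build one cutting-tree level (with branching factor $n^\eps$, later replaced by hierarchical cuttings) whose canonical subsets carry the $(1,d)$-sided reporting structure of Lemma~\ref{lem:simplex_reporting_level2}, and close the recurrences using the fact that $S_{1,d}(n)=O(n^{d-1+\eps})$ is strictly below the $n^d$ budget. The recurrences, the choice of $r$, and the final space reduction via Lemma~\ref{thm:hierarchical_cuttings} all match the paper's argument.
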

\begin{proof}
By Observation~\ref{obs1}, it suffices to solve the $(2,d)$-sided query problem.
As in the proof of Theorem~\ref{thm:simplex_group_polylog},
we obtain a data structure  with the following recurrences for space and query time  (ignoring the ``$+k$'' reporting cost):
  \[ S_{2,d}(n) \:=\: O(r^d)S_{2,d}(n/r) + O(r^d) S_{1,d}(n) + O(nr^d) \]
\[ Q_{2,d}(n) \:=\: Q_{2,d}(n/r) + Q_{1,d}(n) + O(\log r), \]
with $S_{1,d}(n)=O(n^{d-1+\eps})$ and $Q_{1,d}(n)=O(\log n)$ by Lemma~\ref{lem:simplex_reporting_level2}.

  We choose $r=n^\eps$
  for a sufficiently small constant $\eps$.
  As in the proof of Theorem~\ref{thm:simplex_group_polylog}, the recurrences solve to 
  $S_{2,d}(n) = \OO(n^d)$
  and $Q_{2,d}(n)=O(\log n)$.
  
  As in Section~\ref{sec:hier}, the space bound in the above data structure can be reduced from $\OO(n^d)$ to
$O(n^d)$, by using hierarchical cuttings for the outermost level.
\end{proof}

To summarize, in the above multi-level data structure, the innermost level tackles one of the nonvertical sides (solving a $(1,0)$-sided problem); the next levels then add all the vertical sides (solving a  $(1,d)$-sided problem); the outermost level finally adds the second nonvertical side (solving a  $(2,d)$-sided problem).  We emphasize that this unusual order is critical to achieving our result.

As in the remark in Section~\ref{sec:hier},
space can be further reduced to $O(n^d/\log^{\Omega(1)} n)$.

\IGNORE{
*******

We can only use this result on halfspace range reporting if 
it is at the inner-most level of the multilevel 
data structure. Thus we can consider the problem of reporting points 
bounded by $d$ vertical hyperplane and below one hyperplane:
\begin{problem} \label{prob:simplex_reporting_level2}
For $n$ points in $\R^d$ we can build a data structure to handle queries of the following form:
Given an arbitrary halfspace $H$ in $\R^d$ and $d$ vertical halfspaces $h_1, ..., h_d$ whose projection onto $x_0 = 0$ is a simplex,
report the points that lie within the intersection of these $d+1$ halfspaces.
\end{problem}

For the general simplex reporting problem, we can build a hierarchical cutting data structure
for the upper halfplane, and build the data structure of Lemma~\ref{lem:simplex_reporting_level2} 
for every canonical subset in the same manner as Thoerem~\ref{thm:simplex_group}. The analysis
for runtime and query time follows by the same argument (with an extra ``+k'' term).

}

\subsection{Semigroup simplex range searching}\label{sec:semigroup}

For simplex range searching in the semigroup model,
the subtraction trick in Observation~\ref{obs1} is again not applicable, but we can still obtain logarithmic query time by a more intricate solution.

\subsubsection{Bounding the number of semigroup operations}

\newcommand{\AAA}{{\cal A}}
To warm up, we first relax the computational model and measure the query complexity by  the number of semigroup sum operations instead of actual running time.   All other operations not involving the input weights
(like point location) are ``free''.

We first observe that it is possible to improve the query complexity of Lemma~\ref{lem:extra_large_space_ds} to \emph{constant} in this setting.

\begin{lemma} \label{lem:semigroup1}
  There is a data structure for simplex range searching in $\R^d$ in the semigroup setting with $O(n^{d+\eps})$ space, such that a query can be answered using $O(1)$ semigroup operations.
\end{lemma}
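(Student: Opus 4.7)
The plan is to reuse the multi-level cutting tree construction already described in Section~\ref{sec:prev} and simply observe that, once point-location cost is declared free, its query complexity collapses from $O(\log N)$ to $O(1)$ semigroup operations. Recall that at each internal node $\Delta$ of the primary cutting tree for a level-$j$ structure, we precomputed the canonical subsets $C^+_\Delta, C^-_\Delta$ and stored a level-$(j{-}1)$ structure on each. For the level-$0$ base case it suffices to store a single precomputed semigroup sum of the associated canonical subset, so a level-$0$ query answers in one semigroup lookup and zero additions.

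Concretely, I will set the branching parameter to $r = N^\eps$ for the global input size $N$, so that the primary-recursion depth at each of the $d{+}1$ levels is bounded by the constant $O(1/\eps)$. Letting $T_j(n)$ denote the number of semigroup operations used by a level-$j$ query, the recurrence
\[
T_0(n) \:=\: 1, \qquad T_j(n) \:\le\: T_j(n/r) + T_{j-1}(n)
\]
holds; the $O(\log r)$ term appearing in Lemma~\ref{lem:extra_large_space_ds} is absent because all point-location work is now free. Unrolling the primary recursion gives $T_j(n) \le O(1/\eps)\cdot T_{j-1}(n)$, and therefore $T_{d+1}(n) = O((1/\eps)^{d+1}) = O(1)$ for any fixed $\eps > 0$ and constant $d$.

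The space analysis is unchanged from Section~\ref{sec:prev}: the recurrence $S_j(n) = O(r^d) S_j(n/r) + O(r^d) S_{j-1}(n) + O(n r^d)$ with $r = N^\eps$ and constant recursion depth solves to $S_{d+1}(N) = O(N^{d+O(\eps)})$, and rescaling $\eps$ by a constant factor yields the claimed $O(N^{d+\eps})$ bound. I do not expect any real obstacle; the only thing to double-check is that no hidden semigroup operations sneak into the point-location machinery. This is immediate, because the point-location data structure only manipulates geometric coordinates and cell pointers, and at its terminus it returns a single precomputed semigroup value for the canonical subset. Intuitively, Lemma~\ref{lem:extra_large_space_ds}'s $O(\log N)$ query bound is attributable \emph{entirely} to point location, and once that cost is waived, the constant-depth multi-level recursion telescopes immediately to $O(1)$ semigroup operations.
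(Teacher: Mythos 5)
Your proposal is correct and matches the paper's proof essentially verbatim: both reuse the multi-level cutting tree of Section~\ref{sec:prev} with branching factor $r=N^\eps$ so the recursion depth is $O(1)$, observe that the $O(\log r)$ term in the query recurrence came only from point location and disappears when that cost is waived, and keep the space recurrence unchanged. (Minor nit: the recurrence should carry a $+O(1)$ for the semigroup addition combining the recursive answer with the canonical-subset answer, but this does not affect the constant bound.)
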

\begin{proof}
We use the same method as in Section~\ref{sec:prev}.
Because point location is free, the recurrences now become
\[ S_j(n) \:=\: O(r^d)S_j(n/r) + O(r^d) S_{j-1}(n) + O(nr^d) \]
\[ Q_j(n) \:=\: Q_j(n/r) + Q_{j-1}(n)  + O(1). \]
As we have chosen $r=N^\eps$ so that the recursion depth is $O(1)$, we now have $Q_{d+1}(N)=O(1)$ (and $S_{d+1}(N)=O(N^{d+O(\eps)})$ as before). 
\end{proof}

We can solve the $(0,d)$-sided query problem by applying the above lemma in $d-1$ dimensions.  Next, we can solve the $(1,d)$-sided query problem as in Theorem~\ref{thm:simplex_group_polylog}, but now with $O(\log\log n)$ query complexity:


\begin{lemma}\label{lem:semigroup2}
There is a data structure for $(1,d)$-sided queries in $\R^d$ in the semigroup setting with $\OO(n^d)$ space, such that a query can be answered in $O(\log\log n)$ semigroup operations.
\end{lemma}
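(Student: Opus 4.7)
\medskip
\noindent\textbf{Proof proposal.} The plan is to reuse the cutting-based construction from the proof of Theorem~\ref{thm:simplex_group_polylog} verbatim, but swap the innermost ingredient for the semigroup analogue just proved. Concretely, the outer level still applies the cutting lemma to the set $H$ of dual hyperplanes: for each cell $\Delta$ we recurse on the conflict list $H_\Delta$ and store the canonical subset $C^+_\Delta$ (or $C^-_\Delta$) in a data structure for $(0,d)$-sided queries. The difference is that this $(0,d)$-sided structure is obtained by vertically projecting to $\R^{d-1}$ and plugging in Lemma~\ref{lem:semigroup1} in dimension $d-1$, yielding $S_{0,d}(n)=O(n^{d-1+\eps})$ space and, crucially, $Q_{0,d}(n)=O(1)$ semigroup operations per query (point location is free in this model).

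Next I would set up and analyze the two recurrences. For space we get
\[ S_{1,d}(n) \:=\: O(r^d)\, S_{1,d}(n/r) \,+\, O(r^d)\, S_{0,d}(n) \,+\, O(nr^d), \]
and for the number of semigroup operations
\[ Q_{1,d}(n) \:=\: Q_{1,d}(n/r) \,+\, Q_{0,d}(n) \,+\, O(1) \:=\: Q_{1,d}(n/r) + O(1), \]
where the last $O(1)$ absorbs the semigroup sum that combines the two recursive answers. Choosing $r=n^\eps$ for a sufficiently small $\eps$ makes the recursion depth $O(\log\log n)$, which immediately gives $Q_{1,d}(n)=O(\log\log n)$ semigroup operations. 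The space recurrence is identical to the one analyzed in Theorem~\ref{thm:simplex_group_polylog}: the $S_{0,d}(n)=O(n^{d-1+\eps})$ term is dominated by $n^{d-O(\eps)}$, and the $O(\log\log n)$ levels of constant-factor blowup yield $S_{1,d}(n)=O(n^d 2^{O(\log\log n)})=\OO(n^d)$.

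I do not expect a serious obstacle here: the lemma is essentially Theorem~\ref{thm:simplex_group_polylog} with Lemma~\ref{lem:semigroup1} substituted for Lemma~\ref{lem:extra_large_space_ds} as the innermost structure, exploiting that the $(0,d)$-sided subproblem lives in $d-1$ dimensions (so its exponent is strictly below $d$) and that, in the relaxed model where point location is free, its query cost drops from $O(\log n)$ to $O(1)$. The only point worth being careful about is that the $(1,d)$-sided query in the semigroup setting must not invoke the subtraction trick of Observation~\ref{obs2}; but the construction above never does, because it just adds the answer on $C^+_\Delta$ (a canonical subset on which the nonvertical halfspace is automatically satisfied) to the recursive answer on $H_\Delta$. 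The savings of a full $\log n$ factor in the inner query then propagate through the geometric-depth recursion to the advertised $O(\log\log n)$ bound.
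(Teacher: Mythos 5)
Your proposal is correct and matches the paper's proof essentially verbatim: both reuse the cutting-based construction of Theorem~\ref{thm:simplex_group_polylog}, plug in Lemma~\ref{lem:semigroup1} (in $d-1$ dimensions) to get $S_{0,d}(n)=O(n^{d-1+\eps})$ and $Q_{0,d}(n)=O(1)$, and note that free point location turns the $O(\log r)$ term into $O(1)$, so the depth-$O(\log\log n)$ recursion with $r=n^\eps$ gives $O(\log\log n)$ semigroup operations and $\OO(n^d)$ space. No differences worth noting.
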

\begin{proof}
We use the same method as in the proof of Theorem~\ref{thm:simplex_group_polylog}.
Because point location is free, the recurrences now become
  \[ S_{1,d}(n) \:=\: O(r^d)S_{1,d}(n/r) + O(r^d) S_{0,d}(n) + O(nr^d) \]
\[ Q_{1,d}(n) \:=\: Q_{1,d}(n/r) + Q_{0,d}(n) + O(1), \]
with   $S_{0,d}(n)=O(n^{d-1+\eps})$ and $Q_{0,d}(n)=O(1)$ by Lemma~\ref{lem:semigroup1}. 
As we have chosen $r=n^\eps$, the query recurrence 
\[ Q_{1,d}(n) = Q_{1,d}(n^{1-\eps}) + O(1)
\]
now gives $Q_{1,d}(n)=O(\log\log n)$ (and
$S_{1,d}(n)=\OO(n^d)$ as before).
\end{proof}

Finally, we can solve the $(2,d)$-sided query problem with $O(\log n)$ query complexity by using a polylogarithmic branching factor:

\begin{lemma}\label{lem:semigroup3}
There is a data structure for simplex range queries in $\R^d$ in the semigroup setting with $\OO(n^d)$ space, such that a query can be answered in $O(\log n)$ semigroup operations.
\end{lemma}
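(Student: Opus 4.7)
The plan is to add one more outer level on top of the $(1,d)$-sided structure from Lemma~\ref{lem:semigroup2}. By Observation~\ref{obs1}, a simplex query reduces to a constant number of $(2,d)$-sided queries, so it suffices to solve the $(2,d)$-sided problem. Applying the same scheme as in Theorem~\ref{thm:simplex_group_polylog} (and Theorem~\ref{thm:simplex_reporting}), I would handle one of the two nonvertical halfspaces via geometric duality and a cutting of the dual hyperplane set: for each cell $\Delta$, recurse on the conflict list $H_\Delta$, and store the canonical subset $C^+_\Delta$ (or $C^-_\Delta$) in a $(1,d)$-sided structure from Lemma~\ref{lem:semigroup2} to deal with the remaining nonvertical halfspace together with the $d$ vertical sides. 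Because point location is free in this cost model, the recurrences become
\[ S_{2,d}(n) \:=\: O(r^d)\, S_{2,d}(n/r) + O(r^d)\, S_{1,d}(n) + O(nr^d), \]
\[ Q_{2,d}(n) \:=\: Q_{2,d}(n/r) + Q_{1,d}(n) + O(1), \]
with $S_{1,d}(n) = \OO(n^d)$ and $Q_{1,d}(n) = O(\log\log n)$ from Lemma~\ref{lem:semigroup2}.

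The crucial step is the choice of branching factor $r$. A constant $r$ would produce recursion depth $\Theta(\log n)$ and total query cost $\Theta(\log n \log\log n)$, losing a $\log\log n$ factor; conversely $r = n^\eps$ would make each level contribute $O(r^d)\, S_{1,d}(n) = \OO(n^{d + \eps d})$, blowing space beyond $\OO(n^d)$. The sweet spot, as the lemma's wording suggests, is a \emph{polylogarithmic} branching factor $r = \log^c n$ for a suitable constant $c$. With this choice, each level contributes $O(\log^{cd} n) \cdot \OO(n^d) = \OO(n^d)$ to the space sum, the recursion has depth only $O(\log n / \log\log n)$, so the total space is $\OO(n^d)$ and the total query cost is $O(\log n / \log\log n) \cdot O(\log\log n) = O(\log n)$, as required.

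The main obstacle, and the conceptually novel point, is identifying that a polylogarithmic branching factor is exactly what balances the two competing constraints in this third level: space wants $r$ small to avoid the per-level blow-up of $O(r^d)\cdot\OO(n^d)$, whereas query time wants $r$ large to shorten the recursion tree so that the accumulated inner cost of $O(\log\log n)$ semigroup operations per level sums to $O(\log n)$ rather than $O(\log n \log\log n)$. Once this branching factor is chosen, the rest of the verification is a routine application of the master-theorem-style calculation used earlier in Section~\ref{sec:group}.
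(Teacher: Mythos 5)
Your overall route is the same as the paper's: reduce to the $(2,d)$-sided problem via Observation~\ref{obs1}, peel off one nonvertical halfspace with a cutting in dual space, send each canonical subset to the $(1,d)$-sided structure of Lemma~\ref{lem:semigroup2}, and pick a \emph{polylogarithmic} branching factor $r$ so that the recursion depth $O(\log n/\log\log n)$ multiplied by the $O(\log\log n)$ inner cost gives $O(\log n)$ semigroup operations. That identification of the right branching factor is indeed the key point, and your query-time analysis is correct (the paper keeps an $O(\log r)=O(\log\log n)$ point-location term in the recurrence, but this is absorbed by $Q_{1,d}$ either way).

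However, your space analysis has a genuine gap. Unrolling $S_{2,d}(n)=O(r^d)\,S_{2,d}(n/r)+O(r^d)\,S_{1,d}(n)+O(nr^d)$ over $L=\Theta(\log n/\log\log n)$ levels, the hidden constant $c$ in the $O(r^d)$ branching compounds to $c^{L}=2^{\Theta(\log n/\log\log n)}=n^{\Theta(1/\log\log n)}$: at depth $i$ there are up to $(cr^d)^i$ cells of size $n/r^i$, so the level-$i$ contribution is $c^i\cdot\OO(n^d)$, not $\OO(n^d)$. Since $2^{\Theta(\log n/\log\log n)}$ exceeds every polylogarithmic function, your "each level contributes $\OO(n^d)$, sum over $O(\log n/\log\log n)$ levels" argument only yields $n^{d+o(1)}$, not $\OO(n^d)$. (This is precisely the "constant-factor blowup" that was harmless in Theorem~\ref{thm:simplex_group_polylog}, where the depth was only $O(\log\log n)$ and the blowup was $2^{O(\log\log n)}=\polylog n$.) The missing ingredient is the Hierarchical Cutting Lemma (Lemma~\ref{thm:hierarchical_cuttings}): replacing the independently-built per-level cuttings by a single tree of cuttings with $r_i=n/n_i$, $n_{i+1}=n_i/\log^{\eps}n$ guarantees that the cells at depth $i$ form one $(1/r_i)$-cutting of total size $O(r_i^d)$ with no multiplicative compounding, so the space sum becomes
\[
\sum_{i=0}^{\ell-1} r_{i+1}^d\cdot\OO(n_i^{d}) \;=\; \OO\!\left(\sum_{i=0}^{\ell-1} n^d\log^{\eps d}n\right) \;=\; \OO(n^d).
\]
With that substitution your argument matches the paper's proof.
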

\begin{proof}
By Observation~\ref{obs1}, it suffices to solve the $(2,d)$-sided query problem.
We reduce $(2,d)$-sided queries to $(1,d)$-sided queries as in the proof of Theorem ~\ref{thm:simplex_group_polylog},
which gives the recurrences
  \[ S_{2,d}(n) \:=\: O(r^d)S_{2,d}(n/r) + O(r^d)S_{1,d}(n) + O(nr^d) \]
\[ Q_{2,d}(n) \:=\: Q_{2,d}(n/r) + Q_{1,d}(n) + O(\log r), \]
where $S_{1,d}(n)=\OO(n^d)$ and $Q_{1,d}(n)=O(\log\log n)$ by Lemma~\ref{lem:semigroup2}.

We choose $r=\log^\eps n$.  
The recursion depth is $O(\log n/\log\log n)$, and so
$Q_{2,d}(n)=O((\log n/\log\log n)\cdot \log\log n)=O(\log n)$.
Due to the constant-factor blowup, we have $S_{2,d}(n)=O(n^d 2^{O(\log n/\log\log n)})$.
However, the space bound can be reduced by applying the hierarchical cutting lemma, using the sequence
$r_i=n/n_i$, $n_0=n$, $n_{i+1}=n_i/\log^\eps n$, 
and $\ell=O(\log n/\log\log n)$.
Then $S_{2,d}(n) = \OO\left(\sum_{i=0}^{\ell-1} r_{i+1}^d n_i^{d}\right)
= \OO\left(\sum_{i=0}^{\ell-1} n^d\log^{\eps d}n\right)$, which is
$\OO(n^d)$.
\end{proof}

Note the more conventional order of levels in the above multi-level data structure: the innermost level tackles the vertical sides, and the outer two levels tackle the two nonvertical sides.
Note also how fortuitously the product of the $O(\log\log n)$ tree height in Lemma~\ref{lem:semigroup2} and the
$O(\log n/\log\log n)$
tree height in Lemma~\ref{lem:semigroup3} happens to be $O(\log n)$.

\IGNORE{

When our weights live in a semigroup, we cannot use subtraction so we need to consider both upper and lower
hyperplanes of the cells of the vertical decomposition.
In order to do so, we can use a hierarchical cutting data structure for points above the lower
hyperplane and use the data structure described in 
Lemma~\ref{lem:simplex_level2}  for us to sum the number of points below the upper hyperplane and bounded by 
the vertical sides. This increases the query time to $O(\log^2 n)$ and the space to $O(n^d\log n)$.
\begin{theorem} \label{thm:simplex_semigroup_weak}
  Simplex range searching on $n$ points in $\R^d$ where weights live in a semigroup
  can be performed with $O(\log^2 n)$ query time with a data structure
  using $O(n^d\log n)$ space and preprocessing.
  Futhermore we can report the ranges as $O(\log n\log \log n)$ canonical subsets.
\end{theorem}

However in the next section, we will present a different idea that allows us to 
improve the query time to $O(\log n)$ while keeping the space
at $\OO(n^d)$. 


\subsubsection{Semigroup model of computation}

Let $P$ be an input set of $n$ points with weights from a semigroup.  
Consider a data structure that stores the semigroup sums for subsets $S_1, ..., S_k$ with $S_i \subseteq P$ for $i=1,...,k$.
In the \emph{semigroup model of computation}, the query complexity is measured by the number of semigroup sum
operations that the data structure performs, all other operations 
(like point location) we assume the data structure can perform for ``free''.
The number of semigroup sums, $k$, is the space usage of the data structure. 

Observe the connection between the semigroup model of computation and decomposition schemes.
Decomposition schemes are data structures in the semigroup model of computation where the number of canonical subsets is the space usage.
The query complexity is the maximum number of canonical subsets that define 
a query range.

Our main result in this section about the semigroup model is the following:
\begin{theorem}\label{thm:simplex_semigroup_model}
  Simplex range searching in the semigroup model can be solved using a data structure with $O(\log n)$ query time using $\OO(n^{d})$ space.
\end{theorem}

}

\subsubsection{Subarrangement point location}

The preceding theorem does not bound the actual query time.
When the cost of point location is included, the query complexity goes up by a logarithmic factor naively.
Next we will show how to perform multiple point location operations more efficiently in \emph{constant} time per operation, reminiscent of \emph{fractional cascading}~\cite{ChazelleG86,chan2022}.  To this end, we introduce the following subproblem:


\begin{problem}[Subarrangement Point Location]
\label{prob:subarrangement}
Given a set $H$ of $n$ hyperplanes in $\R^d$, and a subset $H'\subseteq H$ of hyperplanes, 
build a data structure that can handle the following types of queries:
Given a point $p$ and the label of the face where $p$ lies in $\mathcal{A}(H)$,
output the label of the face that $p$ lies in $\mathcal{A}(H')$.
\end{problem}

The fact that $H'$ is a subset of $H$ allows for an extremely simple solution to this subproblem, by just following pointers!

\begin{observation} \label{obs:subarrangement}
  The subarrangement point location problem in $\R^d$ can be solved using a data structure with $O(n^d)$ space and preprocessing time that has $O(1)$ query time.
\end{observation}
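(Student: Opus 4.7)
The plan is to exploit the fact that $H'\subseteq H$ implies that $\AAA(H)$ is a refinement of $\AAA(H')$: every face of $\AAA(H)$ is contained in a unique face of $\AAA(H')$. So the data structure I would build is nothing more than a table $\pi(\cdot)$ indexed by face labels of $\AAA(H)$, where $\pi(f)$ is the label of the unique face of $\AAA(H')$ containing $f$. The query then performs a single table lookup: given the label of the face of $\AAA(H)$ that contains $p$, return $\pi$ of that label. Correctness is immediate from the refinement property, and the query time is clearly $O(1)$. Since $\AAA(H)$ has $O(n^d)$ faces of all dimensions, the table has $O(n^d)$ entries and the space bound follows.

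For preprocessing, I would first compute $\AAA(H')$ explicitly, initializing $\pi(g)=g$ for every face $g$ of $\AAA(H')$. Then I would incrementally insert the hyperplanes of $H\setminus H'$ one at a time to grow the arrangement into $\AAA(H)$. When inserting a new hyperplane splits a current face $f$ into pieces $f_1,f_2,\ldots$, I assign $\pi(f_i):=\pi(f)$ for each piece, because each piece still lies in the same face of $\AAA(H')$ that contained $f$. This invariant is preserved throughout, and at the end of all insertions every face of $\AAA(H)$ has the correct pointer.

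The only thing to verify is that the incremental construction of $\AAA(H)$ runs in $O(n^d)$ time total and that the pointer bookkeeping adds only $O(1)$ per face creation. The first fact is a standard result on hyperplane arrangements: the incremental construction has total cost proportional to the final complexity, $O(n^d)$. The second is clear since each newly created face is split off from a single parent face whose pointer value is already stored. I would identify this bookkeeping step as the only point that requires any care, and it is essentially routine; no new geometric idea is needed.
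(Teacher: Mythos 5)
Your proposal is correct and matches the paper's own argument: both simply store, for each face of $\mathcal{A}(H)$, a pointer to the unique containing face of $\mathcal{A}(H')$ (valid since $\mathcal{A}(H)$ refines $\mathcal{A}(H')$), giving $O(n^d)$ space and $O(1)$ query time. Your incremental-insertion bookkeeping is just a more explicit account of the preprocessing, which the paper leaves implicit.
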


\begin{proof}
Build the full arrangement $\mathcal{A}(H)$  and $\mathcal{A}(H')$. Since $H'\subseteq H$,
any cell $\Delta \in H$  is in a unique cell $\Delta' \in H'$,
and we can store a pointer from $\Delta$ to $\Delta'$.
As there are $O(n^d)$ cells in $\mathcal{A}(H)$, it suffices to store $O(n^d)$ pointers.
Given a query point $p$ and the label of the face $\Delta$ in $\mathcal{A}(H)$, we can follow the pointer
to find the face of $\Delta' \in \mathcal{A}(H')$ that $p$ lies in.
\end{proof}

\subsubsection{Bounding the actual query time}

We now modify Lemmas~\ref{lem:semigroup1}--\ref{lem:semigroup3} so as to bound the actual query time.  At first,  sublogarithmic query time for Lemmas~\ref{lem:semigroup1}--\ref{lem:semigroup2} seems impossible, but we show that it is possible 
if we 
are given  the faces in the dual arrangement containing the dual query points.

\begin{lemma} \label{lem:semigroup1:new}
  There is a data structure for simplex range searching in $\R^d$ in the semigroup setting with $O(n^{d+\eps})$ space and preprocessing time and $O(1)$ query time, assuming
  that we are given the labels of the faces in $\AAA(H)$ containing $h_1^*,\ldots,h_{d+1}^*$, where $H$ denotes the dual hyperplanes of the input points, and $h_1^*,\ldots,h_{d+1}^*$ denote the dual points of the hyperplanes bounding the query simplex.
\end{lemma}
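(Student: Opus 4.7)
The plan is to replay the multi-level recursive construction from the proof of Lemma~\ref{lem:semigroup1}, but with two modifications that together eliminate every logarithmic-factor point-location cost. First, instead of the classical $(1/r)$-cuttings returned by the Cutting Lemma (whose simplicial cells are bounded by auxiliary triangulation hyperplanes not in $H$), I will use the full arrangement $\AAA(R)$ of a random sample $R \subseteq H_\Delta$ of size $\Theta(r \log r)$: by the standard $\eps$-net/sampling theorem, with high probability each cell of $\AAA(R)$ is crossed by at most $|H_\Delta|/r$ hyperplanes of $H_\Delta$, so $\AAA(R)$ acts as a $(1/r)$-cutting of size $O((r\log r)^d)$ whose cells are faces of the arrangement of a subset of $H$. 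Second, for every subset $H' \subseteq H$ that arises anywhere in the recursion (every sample $R$, every conflict list $H_\Delta$, and every canonical subset $C^\pm_\Delta$), I will build $\AAA(H')$ together with the subarrangement point-location pointer table from Observation~\ref{obs:subarrangement}, so that the face of $\AAA(H')$ containing any $h_i^*$ can be read off in $O(1)$ time from its given face in $\AAA(H)$.

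With these two ingredients, the recursion proceeds exactly as in Lemma~\ref{lem:semigroup1}: set $r = N^\eps$ so the recursion tree has $O(1)$ depth; at a level-$j$ node take the ``cutting'' of the current hyperplane set as $\AAA(R)$, recurse on each cell's conflict list, and store canonical subsets $C^+_\Delta, C^-_\Delta$ with level-$(j-1)$ structures. At query time, we begin with the given labels of $h_1^*, \ldots, h_{d+1}^*$ in $\AAA(H)$; when the recursion enters any subset $H' \subseteq H$, a single pointer lookup gives the label of each $h_i^*$ in $\AAA(H')$. In particular, locating the cell of the current $\AAA(R)$ containing the working dual point is now free, so exactly as in the semigroup-operation analysis of Lemma~\ref{lem:semigroup1} the query recurrence becomes $T_j(n) = T_j(n/r) + T_{j-1}(n) + O(1)$, which solves to $T_{d+1}(N) = O(1)$ because the recursion tree has only $O(1)$ depth.

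For the space bound, each subarrangement $\AAA(H')$ and its table of pointers from the $O(N^d)$ faces of $\AAA(H)$ to the faces of $\AAA(H')$ cost $O(N^d)$ storage, and the total number of sets $H'$ touched by the construction is $O(N^{O(\eps)})$ (the same count as in Lemma~\ref{lem:semigroup1}, since the $\log r$ blow-up coming from using $|R| = \Theta(r\log r)$ is absorbed into $N^{O(\eps)}$). Summing gives $O(N^{d + O(\eps)})$, which becomes $O(N^{d+\eps})$ after rescaling $\eps$. The main subtlety to verify is that every hyperplane set arising in the recursion is genuinely a subset of the original $H$, so that Observation~\ref{obs:subarrangement} applies uniformly from $\AAA(H)$ to $\AAA(H')$ without ever having to maintain labels in intermediate arrangements; this is automatic here because conflict lists and canonical subsets are all subsets of $H$, and nested samples $R \subseteq H_\Delta \subseteq H$ remain subsets of $H$.
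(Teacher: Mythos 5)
Your second ingredient (propagating face labels through subarrangements via Observation~\ref{obs:subarrangement}) matches the paper, but your first ingredient contains a genuine gap: it is \emph{not} true that every cell of the full arrangement $\AAA(R)$ of a random sample $R$ of size $\Theta(r\log r)$ has conflict list $O(n/r)$, and this does not follow from the $\eps$-net theorem. That theorem applies to ranges of bounded VC dimension (simplices, segments), whereas an arrangement cell is a convex polytope with up to $|R|$ facets; bounding the hyperplanes crossing such a cell by summing over its boundary features loses a factor of the cell's complexity. In fact the claim is simply false: take $n$ lines tangent to the parabola $y=x^2/4$ (say $\ell_i\colon y=ix-i^2$). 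Each unsampled $\ell_i$ touches the parabola at a point where it lies strictly above every other tangent line, so $\ell_i$ pokes above the upper envelope of $R$ and therefore crosses the single cell of $\AAA(R)$ lying above that envelope. That one cell is crossed by all $n-|R|$ unsampled lines. This is precisely why cuttings must subdivide arrangement cells into simplices bounded by auxiliary hyperplanes --- the very feature you were trying to avoid --- so the recursion as you set it up does not have the claimed conflict-list bounds.

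The paper's fix is different and is the idea you are missing: keep the ordinary $(1/r)$-cuttings, and additionally store, for each face $f$ of the \emph{local} arrangement $\AAA(H)$ at the current node, a pointer to an arbitrary cutting cell $\Delta$ that overlaps $f$. Given the label of the face $f$ containing $h^*$, following this pointer yields a cell $\Delta$ that need not contain $h^*$ --- but it contains some point of $f$, and all points of $f$ are equivalent with respect to $H$ (the answer depends only on $f$), so recursing into $H_\Delta$ and querying $C_\Delta^{\pm}$ still returns the correct answer. With that, subarrangement pointers are only needed between a node's arrangement and its children's arrangements ($H_\Delta$, $C_\Delta^{\pm}$), costing $O(n^d)$ with the \emph{local} $n$ per node. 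Your space accounting also needs this local view: charging a global $O(N^d)$ table to every subset arising in the recursion does not work, since the recursion tree has $N^{\Theta(d)}$ nodes overall (only $O(N^{O(\eps)})$ branching per level, but $\Theta(1/\eps)$ levels), whereas the paper's extra $O(n^d)$ term per node correctly sums to $O(N^{d+O(\eps)})$.
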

\begin{proof}
We modify the proof of Lemma~\ref{lem:semigroup1}, which is based on the method in Section~\ref{sec:prev}.
Observe that from the $d+1$ labels in $\AAA(H)$,
we can determine the labels in
$\AAA(H_\Delta)$, $\AAA(C_\Delta^+)$, and
$\AAA(C_\Delta^-)$ in constant time by Observation~\ref{obs:subarrangement}.
Thus, when we recurse in $H_\Delta$,  $C_\Delta^+$, or $C_\Delta^-$, the assumption remains true.
In addition, for each face $f$ in $\AAA(H)$,
we store a pointer from its label to 
a cell $\Delta$ in the cutting that overlaps with $f$.  If there is more than one cell in the cutting, pick one arbitrarily.

Recall that in the algorithm to answer a level-$j$ query, we need to find the cell $\Delta$ containing $h^*$, the dual point of the hyperplane bounding one of the query's halfspaces.  By assumption, we are given the label of the face $f$ in $\AAA(H)$ containing $h^*$.  We just follow the pointer from $f$ to identify $\Delta$ in $O(1)$ time.

However, there is one subtlety: the cell $\Delta$ contains a point in $f$, but does not
necessarily contain the point $h^*$.  But it doesn't matter!  All points in $f$ are equivalent, in the sense that if $h^*$ were to change to any point in the same face $f$, the answer to the query would be the same.  (The algorithm does not need the actual coordinates of $h^*$ anyway, just the label of $f$.)

The query time thus satisfies the same recurrence as before.
Since Observation~\ref{obs:subarrangement} and the extra pointers require
$O(n^d)$ space,
the space recurrence has an extra $O(n^d)$ term:
\[ S_j(n) \:=\: O(r^d)S_j(n/r) + O(r^d) S_{j-1}(n) + O(nr^d) + O(n^d). \]
As we have chosen $r=N^\eps$, we still get $S_{d+1}(N)=O(N^{d+O(\eps)})$ (and the query bound is the same).
\end{proof}

\begin{lemma}\label{lem:semigroup2:new}
  There is a data structure for $(1,d)$-sided queries in $\R^d$ in the semigroup setting with $\OO(n^{d})$ space and preprocessing time and $O(\log\log n)$ query time, assuming
  that we are given the labels of the faces in $\AAA(H)$ containing $h^*,h_1^*,\ldots,h_d^*$, where $H$ denotes the dual hyperplanes of the input points, $h^*$ denotes the dual point of the hyperplane bounding the query's nonvertical halfspace, and $h_1^*,\ldots,h_d^*$ denote the dual points of the hyperplanes bounding the vertical projection of the query's vertical halfspaces.
\end{lemma}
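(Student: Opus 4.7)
\medskip

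\noindent\textbf{Plan.} My plan is to recycle the construction of Lemma~\ref{lem:semigroup2} verbatim, and overlay on it the pointer-chasing machinery from Lemma~\ref{lem:semigroup1:new} so that every point location the query algorithm performs becomes a single pointer hop rather than an $O(\log r)$ search. The target is $O(\log\log n)$ actual query time, matching the $O(\log\log n)$ recursion depth of the outer tree of cuttings, and $\OO(n^d)$ space absorbing the auxiliary $O(n^d)$ per-level overhead of Observation~\ref{obs:subarrangement}.

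Concretely, build the outer structure of Lemma~\ref{lem:semigroup2}: apply the cutting lemma to the dual set $H$ with $r=n^\eps$, recurse on each conflict list $H_\Delta$, and for each canonical subset $C^+_\Delta$, $C^-_\Delta$ attach a $(0,d)$-sided structure. The $(0,d)$-sided subproblem is simplex range searching on the $(d-1)$-dimensional vertical projection of the input, so instantiate it using Lemma~\ref{lem:semigroup1:new} applied in dimension $d-1$; this yields $O(1)$ actual query time provided the $(d-1)$-dim face labels of $h_1^*,\ldots,h_d^*$ in $\AAA(\widetilde{C^+_\Delta})$ are supplied. Augment each node of the outer recursion with (i) Observation~\ref{obs:subarrangement}'s pointer table in dimension $d$ mapping each face of $\AAA(H)$ to its containing face in $\AAA(H_\Delta)$; (ii) the analogous table in dimension $d-1$ mapping faces of the projected arrangement $\AAA(\tilde H)$ to those of $\AAA(\widetilde{C^+_\Delta})$ and $\AAA(\widetilde{C^-_\Delta})$; and (iii) a pointer from each face of $\AAA(H)$ to some cell $\Delta$ of the cutting that it overlaps.

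To answer a query, the supplied label of $h^*$ in $\AAA(H)$ is converted by~(iii) into a cell $\Delta$ in $O(1)$ time. The recursion into $H_\Delta$ then receives the label of $h^*$ in $\AAA(H_\Delta)$ via~(i), while the $(0,d)$-sided call on $C^+_\Delta$ receives the labels of $h_1^*,\ldots,h_d^*$ in $\AAA(\widetilde{C^+_\Delta})$ via~(ii); both recursive assumptions are therefore re-established in $O(1)$ time per node. Exactly as in Lemma~\ref{lem:semigroup1:new}, it is harmless that the cell $\Delta$ returned by~(iii) might not literally contain $h^*$: the answer depends only on the combinatorial face of $h^*$, and all points sharing that face yield the same answer. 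The query-time recurrence is therefore
\[ Q_{1,d}(n) \:=\: Q_{1,d}(n^{1-\eps}) + Q_{0,d}(n) + O(1) \:=\: Q_{1,d}(n^{1-\eps}) + O(1), \]
which solves to $O(\log\log n)$.

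For space, Observation~\ref{obs:subarrangement} and pointer table~(iii) cost $O(n^d)$ per recursion level, so the space recurrence becomes
\[ S_{1,d}(n) \:=\: O(n^{\eps d})\,S_{1,d}(n^{1-\eps}) + O(n^{\eps d})\,S_{0,d}(n) + O(n^d), \]
with $S_{0,d}(n)=O(n^{d-1+\eps})$ by Lemma~\ref{lem:semigroup1:new} in dimension $d-1$. At every level of the $O(\log\log n)$-deep recursion the new additive $O(n^d)$ term dominates the other two, so the total telescopes to $O(n^d \log\log n)=\OO(n^d)$. The delicate point I expect is item~(ii): one must maintain the $d$-dim and the $(d-1)$-dim subarrangement pointer tables in parallel so that the labels of $h^*$ and of $h_1^*,\ldots,h_d^*$ can be propagated independently down the outer recursion. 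Once this decoupling is in place, the rest is a direct mirror of Lemmas~\ref{lem:semigroup2} and~\ref{lem:semigroup1:new}.
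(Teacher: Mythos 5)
Your proposal is correct and follows essentially the same route as the paper: reuse the construction of Lemma~\ref{lem:semigroup2}, overlay the subarrangement pointers of Observation~\ref{obs:subarrangement} and Lemma~\ref{lem:semigroup1:new} (including the separate $(d-1)$-dimensional pointer tables for the projected arrangement, which is exactly the parenthetical point the paper makes), and absorb the extra additive $O(n^d)$ term per level into $\OO(n^d)$. One tiny quantitative remark: because of the constant-factor blowup in the $O(r^d)S_{1,d}(n/r)$ term, the total is $O(n^d 2^{O(\log\log n)})$ rather than $O(n^d\log\log n)$, but this is still $\OO(n^d)$ and does not affect the claim.
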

\begin{proof}
We modify the proof of Lemma~\ref{lem:semigroup2}, which is based on the proof of Theorem~\ref{thm:simplex_group_polylog}, by the same idea as in the proof of Lemma~\ref{lem:semigroup1:new}.
(Note that when we take the vertical projection of the input point set, we are taking a $(d-1)$-dimensional slice of the dual arrangement; we can add a pointer from the label of each face of the arrangement to the label of a corresponding face of the $(d-1)$-dimensional arrangement, if it exists.)
The space recurrence again has an extra $O(n^d)$ term:
  \[ S_{1,d}(n) \:=\: O(r^d)S_{1,d}(n/r) + O(r^d) S_{0,d}(n) + O(nr^d) + O(n^d), \]
with   $S_{0,d}(n)=O(n^{d-1+\eps})$ and $Q_{0,d}(n)=O(1)$ by Lemma~\ref{lem:semigroup1:new}.  
As we have chosen $r=n^\eps$,
we still get $S_{1,d}(n) = O(n^d 2^{O(\log\log n)})=\OO(n^d)$ (and the query bound is the same).
\end{proof}

\begin{theorem}\label{thm:simplex_semigroup}
  Simplex range searching on $n$ points in $\R^d$ with weights from a semigroup can be performed with 
  $O(\log n)$ query time using a data structure with $\OO(n^d)$ space and preprocessing time.
\end{theorem}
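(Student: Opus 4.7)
The plan is to mirror the proof of Lemma~\ref{lem:semigroup3} while incorporating the subarrangement-point-location trick of Observation~\ref{obs:subarrangement}, in the same way that Lemma~\ref{lem:semigroup2:new} upgraded Lemma~\ref{lem:semigroup2}. By Observation~\ref{obs1}, it suffices to solve the $(2,d)$-sided query problem. I would build the outer data structure exactly as in Lemma~\ref{lem:semigroup3}: apply the hierarchical cutting lemma to the dual hyperplanes $H$ with the sequence $r_i = n/n_i$, $n_0 = n$, $n_{i+1} = n_i/\log^\eps n$, and $\ell = O(\log n/\log\log n)$. At each cell $\Delta$ of the hierarchical cutting, I would store $(1,d)$-sided data structures on the canonical subsets $C_\Delta^+$ and $C_\Delta^-$ using the label-assuming version Lemma~\ref{lem:semigroup2:new}. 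I would also build the full arrangement $\AAA(H)$ together with, for each face of $\AAA(H)$, a pointer to its enclosing face in $\AAA(H_\Delta)$ and in $\AAA(C_\Delta^\pm)$ at every node $\Delta$, plus a pointer to an overlapping cell in each cutting (as in the proof of Lemma~\ref{lem:semigroup1:new}).

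To answer a query, I would first perform one initial point location in $\AAA(H)$ for each of the $d+2$ dual query points (the duals of the two nonvertical and $d$ vertical bounding hyperplanes of the $(2,d)$-sided query), costing $O(\log n)$ in total. Then I descend the outer tree: at each node, a constant number of pointer hops (Observation~\ref{obs:subarrangement}) yields the labels in $\AAA(H_\Delta)$ needed to recurse, and the labels in $\AAA(C_\Delta^+)$ or $\AAA(C_\Delta^-)$ needed to invoke Lemma~\ref{lem:semigroup2:new}. The child cell of the next cutting containing the current nonvertical dual point is obtained in $O(1)$ via the face-to-cell pointer, so descending one level costs $O(\log\log n)$ (dominated by the inner query of Lemma~\ref{lem:semigroup2:new}). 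Multiplying by the tree height $O(\log n/\log\log n)$ yields $O(\log n)$ query time overall.

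For space, the outer hierarchical-cutting part accumulates to $\OO(n^d)$ by the same telescoping argument as in Lemma~\ref{lem:semigroup3}, and each inner $(1,d)$-sided structure contributes $\OO(n^d)$ per cell via Lemma~\ref{lem:semigroup2:new}; summed with the appropriate weights over the tree this remains $\OO(n^d)$. The extra subarrangement bookkeeping is bounded in the usual way: the additional $O(n^d)$ term that Lemma~\ref{lem:semigroup1:new} and Lemma~\ref{lem:semigroup2:new} add to their space recurrences is absorbed by the dominant $O(r^d)S(\cdot)$ and the subproblem sizes shrinking geometrically through the hierarchical cutting.

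The main obstacle I anticipate is verifying that propagating labels through the levels really does work in $O(1)$ per level despite mixing $d$-dimensional subarrangement pointers (for the nonvertical halfspaces' duals) with $(d{-}1)$-dimensional pointers (for the vertical halfspaces' duals, obtained via the slice-pointer mechanism already used in the proof of Lemma~\ref{lem:semigroup2:new}). Once the bookkeeping is set up at preprocessing time and charged to the $O(n^d)$ arrangement, the remainder is a routine combination of the previously established recurrences, and the constant-factor blowup in space from the branching factor is again controlled by the hierarchical cutting, keeping the total at $\OO(n^d)$.
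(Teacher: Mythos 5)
Your proposal is correct and follows essentially the same route as the paper: modify the outer $(2,d)$-sided structure of Lemma~\ref{lem:semigroup3} (hierarchical cuttings with branching factor $\log^{\eps}n$ and depth $O(\log n/\log\log n)$), plug in the label-assuming inner structure of Lemma~\ref{lem:semigroup2:new}, propagate face labels in $O(1)$ per level via Observation~\ref{obs:subarrangement}, and pay $O(\log n)$ once for the initial point locations in the global arrangement. The only (immaterial) difference is bookkeeping detail: the paper simply notes the extra $O(n^d)$ term in the space recurrence and the constant number of initial point locations, exactly as you do.
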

\begin{proof}
We modify the proof of Lemma~\ref{lem:semigroup3}, which is based on the proof of Theorem~\ref{thm:simplex_group_polylog}, by the same idea as in the proof of Lemma~\ref{lem:semigroup1:new}.
The space recurrence again has an extra $O(n^d)$ term:
  \[ S_{2,d}(n) \:=\: O(r^d)S_{2,d}(n/r) + O(r^d) S_{1,d}(n) + O(nr^d) + O(n^d), \]
with   $S_{1,d}(n)=O(n^{d-1+\eps})$ and $Q_{1,d}(n)=O(\log\log n)$ by Lemma~\ref{lem:semigroup2:new}.  
The space bound remains  $\OO(n^d)$ when using hierarchical cuttings (and the query bound is the same).

At the beginning, we can satisfy the assumption in Lemma~\ref{lem:semigroup2:new} by performing $d+1$ initial
point location queries in the global hyperplane arrangement.  This requires an additional $O(\log n)$ query time and $O(n^d)$ space~\cite{Chazelle93}.
\end{proof}

We could use hierarchical cuttings to remove the extra logarithmic factors in 
the space bound of Lemma~\ref{lem:semigroup2:new},
but there is an extra $\log^{1+O(\eps)}n$ factor in the proof of Theorem~\ref{thm:simplex_semigroup}.
Note that unlike in the remark in Section~\ref{sec:hier}, we do not see any way to improve the space bound of the above data structure to $O(n^d/\log^{\Omega(1)}n)$, because it explicitly works with arrangements of $O(n^d)$ size.

\IGNORE{

******

This lemma allows us to do point location queries in $O(1)$ time, meaning after an initial point location in an arrangment $H$, we will
no longer need to pay the full cost for point locations in subarrangements $H'\subseteq H$. 
This, combined with Theorem~\ref{thm:simplex_semigroup_model} allows us to show our final result on simplex semigroup range searching.

\subsection{Simplex range searching in the semigroup model}
%

Many existing results directly imply data structures in the semigroup model. 
The following lemma is a consequence of Theorem~\ref{thm:extra_large_space_ds}.
\begin{lemma} 
  Simplex range searching in the semigroup model can be solved using a data structure with $O(1)$ query time using $O(n^{d+\eps})$ space.
\end{lemma}

Similarly the data structure of Lemma~\ref{lem:simplex_level2} immediately implies the following:

\begin{lemma}
  Problem~\ref{prob:simplex_level2} 
  can be solved in the semigroup model with a data structure with $O(n^d)$ space and $O(\log \log n)$ query time.
\end{lemma}

\subsubsection{Proof of Theorem~\ref{thm:simplex_semigroup_model}}

With Lemma~\ref{lem:semigroup2}, we can prove our main result in the semigroup model.

\begin{proof}[Proof of Theorem~\ref{thm:simplex_semigroup_model}]
In order to handle the last non-vertical hyperplane,
we will use a hierarchical cutting tree
with a subsequence of partitions $\Gamma_0', ..., \Gamma_k'$ 
such that each layer is refined by $\Theta(\log ^\eps n)$ for some small parameter $\eps>0$.
To be precise, each $\Gamma_i'$ is a $(1/r_i)$-cutting with $r_i = \Theta(\log^{\eps i} n)$ and $n_i = n/r_i = \Theta(n/\log^{\eps i} n)$ points. The total number of partitions is $\ell=\Theta(\log n/\log \log n)$.
For each canonical subset, we build the data structure of Lemma~\ref{lem:semigroup2}.
The total query cost is $O(\log n/ \log \log n) \cdot O(\log \log n) = O(\log n)$.
The total space usage is
  \begin{align*}
        S(n) \le  \sum_{i=1}^{\ell} O(r_{i+1}^d n_i^{d}) \le \sum_{i=1}^{\ell} O(n^d \log ^{\eps d} n ) 
        = O(n^d \log^{1+\eps d} n )  
        = \OO(n^d).
  \end{align*}
\end{proof}

\subsubsection{Proof of Theorem~\ref{thm:simplex_semigroup}}
Note that the only additional cost that going to the semigroup model saved us
was the point location costs for the three levels of the data structure.
We will exploit the simple data structure for subarrangement point location
and the structure of cutting trees to save on these point location costs.

\begin{proof}[Proof of Theorem~\ref{thm:simplex_semigroup}]
Suppose the given query consists of an upper halfspace defined by the hyperplane $h_L$,
a lower halfspace defined by a hyperplane $h_U$, 
and vertical halfspaces $h_1, ..., h_d$.
Denote by $p_L, p_h, p_1, ..., p_d$ the dual of the hyperplanes.

For the $n$ input points $P$, we will construct the arrangement of the dual hyperplanes $\A(P^*)$.
Also denote by $P_\perp$ the projection of the $n$ input points onto $x_0=0$, and construct the $d-1$ dimensional 
arrangement of the dual hyperplanes $\A(P_\perp^*)$.

Begin by doing point location of $p_L$ and $p_U$ in $\A(P^*)$ and $p_1, ... , p_d$ in $\A(P_\perp^*)$.
This can be done in $O(\log n)$ time. We will avoid paying for any further point location costs,
instead, we will maintain the face of the relevant arrangement for all the points as we progress through
the data structure of Theorem~\ref{thm:simplex_semigroup_model}.

We will use the following lemma:
\begin{lemma}
  Suppose we had a hierarchical cutting data structure over $n$ hyperplanes $H$ of height $\ell$, 
  consisting of a sequence of cuttings $\Gamma_0, ..., \Gamma_\ell$.
  For each cell $\Delta\in \Gamma_j$ with parent $\Delta' \in \Gamma_{j-1}$ 
  we store the canonical subset $C_\Delta$ of lines lying completely above $\Delta$ 
  that is in the conflict list $H_{\Delta'}$, the set of lines intersecting $\Delta'$.
  Given:
  \begin{enumerate}
      \item $m$ points $p_1, ... p_m$, and
      \item for each point $p_i$ with $i=1,...,m$, a pointer to the face of $\A(H)$ that that point is in.
  \end{enumerate}
  we can output:
  \begin{enumerate}
      \item a set of cells $\Delta_j\in \Gamma_j$ with $\bigcup_j C_{\Delta_j}$ being the set of hyperplanes above $p_1$
      \item for each point $p_i$ with $i=1,...,m$ and each cell $\Delta_j$ with $j=1,...,\ell$, a pointer to the face of $\A(C_{\Delta_j})$ that $p_i$ lies in.
  \end{enumerate}
  in $O(m \ell)$ time.
  This will use an additional 
  $O(n^d\log n)$ more space.
  \end{lemma}

\begin{proof}
We begin by noting that both $m$ and $\ell$ can be constant so we cannot even spend $O(\log n)$ time traversing down the tree to find each the simplex that $p_1$ lies in!

Construct the arrangement $\A(H)$. For every face $F\in \A(H)$, store a pointer to a leaf of the cutting tree $\Delta_F \in \Gamma_\ell$ with $\Delta_F \subseteq F$. 
If $p_1\in F$ for a face $F\in \A(H)$, then we can follow parent pointers from $\Delta_F$ to get a sequence of cells $\Delta_j\in \Gamma_j$ for $j=1,...,\ell$.
Note crucially that $p_1$ may not even lie inside $\Delta_F$, but that doesn't matter as the union of canonical subsets will be the same for every cell $\Delta \in F$.

Now that we have $\Delta_0, ..., \Delta_\ell$, observe that for all $j = 1, ..., \ell$, we have $C_{\Delta_j} \subseteq H_{\Delta_{j-1}}$ and $H_{\Delta_j} \subseteq H_{\Delta_{j-1}}$.
Thus if we build the subarrangement point location data structure of Lemma~\ref{lem:subarrangement} for every canonical subset and conflict list, and between all conflict lists,
we will be able to maintain the face that each point lies in for each canonical subset in $O(1)$ time as we traverse down the tree starting with $\A(H_{\Delta_0}) = \A(H)$.

The amount of space used is 
  $O(n^d + \sum_{j=1}^\ell \sum_{\Delta\in \Gamma_j} |H_\Delta|^d + |C_\Delta|^d) = O(n^d\log d)$.
\end{proof}

Clearly this can be applied to the data structure of Lemma~\ref{lem:semigroup2}.
We note that the lemma applies anywhere we have this containment property between canonical subsets and conflict lists.
Thus it also can be applied to cutting data structures that are not necessarily hierarchical.  
Furthermore, this lemma can also apply to other arrangements relating to the lines, like the projection of the lines to the $x_0=0$.
By these observations, we can also preprocess the lines to support the point locations required in the data structure of Lemma~\ref{lem:semigroup1}.
This allows us to avoid all point location costs at the expense of extra logarithmic factors in space.
\end{proof}

}

\section{Simplex Range Stabbing}\label{sec:stab}
Given a set of $n$ simplices in $\R^d$, the goal of simplex range stabbing is to build data structures so that we can quickly count the simplices that are stabbed by (i.e., contain) a query point, or report them, or compute the sum of their weights from a group or semigroup.

\subsection{Preliminaries}\label{sec:stab:prelim}
We begin by reviewing known techniques used in previous data structures.
Given a set $P$ of $n$ points in $\R^d$, a \emph{simplicial partition} 
for $P$ is a collection $\Pi = \{(P_1, \Delta_1), ..., (P_t, \Delta_t)\}$ where $P$ is a disjoint union of the $P_i$'s and each $\Delta_i$ is a simplex containing $P_i$. We will refer to the $\Delta_i$'s as the \emph{cells}.
\Matousek{}~\cite{Matousek92} proved the following key theorem:
\begin{theorem}[Partition Theorem]
  For a set $P$ of $n$ points in $\R^d$, for any parameter $t \le n$, there exists a simplicial partition $\Pi = \{(P_1, \Delta_1), ..., (P_{O(t)}, \Delta_{O(t)})\}$ such that each subset contains at most $n/t$ points, and any hyperplane crosses $O(t^{1-1/d})$ cells.  The partition can be constructed in $O(n\log t)$ time if $t\le n^\alpha$ for some constant $\alpha>0$.
\end{theorem}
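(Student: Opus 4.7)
The plan is to construct the partition iteratively (``peeling''), extracting one simplicial cell at a time, and to control the crossing number via a weighted test-set argument in the spirit of Welzl. I would start with $Q_0 = P$, a test set $H$ of hyperplanes (either an explicit random sample or, in the analysis, all hyperplanes spanned by $d$ input points) with uniform weights, and at step $i$ pull out a simplex $\Delta_i$ containing exactly $\lceil n/t \rceil$ remaining points of $Q_{i-1}$ whose weighted crossing number against the current $H$ is small. I would then set $Q_i := Q_{i-1} \setminus \Delta_i$, double the weight of every $h \in H$ crossing $\Delta_i$, and iterate until $Q_i$ is empty, which happens after $t' = O(t)$ rounds.

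The central building block is a ``simplex lemma'' that I would establish first: for any $m$-point set $Q \subseteq \R^d$ and any weighted hyperplane set of total weight $W$, there exists a simplex $\Delta$ containing at least $m/c_d$ points of $Q$ (for a constant $c_d$ depending only on $d$) whose crossing weight is $O(W / m^{1/d})$. The proof is a recursive application of the centerpoint / first-selection lemma: a centerpoint of $Q$ dualizes to a ``shallow'' hyperplane $h_1$ that (after sampling / averaging) can be chosen to cross weight at most $O(W/m^{1/d})$; one restricts to the halfspace on the heavy side of $h_1$, which still contains a constant fraction of $Q$, and recurses in one lower ``effective dimension'' until one has $d+1$ bounding halfspaces whose intersection is the desired simplex. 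A trimming step then shrinks the count to exactly $\lceil n/t \rceil$ without increasing the crossing weight by more than a constant.

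Given this lemma, the main theorem follows by a potential-function analysis of the weight-doubling scheme. Each iteration multiplies the total weight $W$ by at most $1 + O(t^{-1/d})$, so after $O(t)$ iterations $W \le |H| \cdot \exp(O(t^{1-1/d}))$. But any individual hyperplane $h \in H$ that crosses $k$ of the constructed simplices has weight $2^k$, so $k = O(\log W) = O(t^{1-1/d})$, which is the claimed bound --- for hyperplanes in $H$. To extend to \emph{every} hyperplane I would argue, as is standard, that with $H$ taken to be a sufficiently fine $\eps$-approximation of the set of all combinatorially-distinct hyperplanes relative to $P$ (or by an averaging argument over a random test set), the bound transfers to arbitrary query hyperplanes.

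For the running time, each iteration is dominated by computing a centerpoint and identifying the points inside $\Delta_i$, which can be done in $O(|Q_{i-1}|)$ time (using known linear-time approximate centerpoint algorithms in fixed dimension), and updating the weights costs $O(|H|)$; with $|H| = \mathrm{poly}(t)$, the constraint $t \le n^\alpha$ keeps this bounded by $O(|Q_{i-1}|)$. Summing over iterations whose sizes roughly halve geometrically yields the $O(n\log t)$ total. The main obstacle I expect is calibrating the simplex lemma so that the $d$ nested centerpoint steps lose only constant factors in the point-count fraction while still achieving the sharp $W/m^{1/d}$ crossing weight, since the exponent $1 - 1/d$ in the final bound is tight and leaves no slack in the recursion.
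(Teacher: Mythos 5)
The paper does not prove this theorem; it is quoted from Matou\v{s}ek \cite{Matousek92}, whose argument your proposal mirrors in outline (test set, iterative extraction of a low-crossing simplex, multiplicative weight update, transfer from the test set to arbitrary hyperplanes). However, your central ``simplex lemma'' is false as stated, and the error sits exactly where the exponent $1-1/d$ is decided. You claim that for any $m$-point set $Q$ and any weighted hyperplane set of total weight $W$ there is a simplex containing a constant fraction $m/c_d$ of the points with crossing weight $O(W/m^{1/d})$. Take $d=2$, $Q$ in convex position, and $H$ the $\binom{m}{2}$ lines spanned by pairs of points with unit weights, so $W=\Theta(m^2)$: any triangle containing $k$ points is crossed by all $k(m-k)$ lines joining an inside point to an outside point, which is $\Omega(m^2)=\Omega(W)$ for $k=m/c_d$, not $O(W/\sqrt{m})$. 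The true statement is a trade-off coming from the (weighted) cutting lemma: for any $s$ there is a simplex containing at least $m/O(s^d)$ points and crossed by weight $O(W/s)$; one cannot have both extremes simultaneously. In the iteration one must take $s=\Theta((mt/n)^{1/d})$ so that the chosen cell still holds $n/t$ points, giving crossing weight $O(W\,(n/(mt))^{1/d})=O(W/j^{1/d})$ where $j$ is the number of rounds remaining, and the final bound comes from $\sum_{j=1}^{t} j^{-1/d}=O(t^{1-1/d})$ --- not from a uniform $1+O(t^{-1/d})$ growth per round as you assert. Indeed, if your lemma were true, the same reweighting analysis would give crossing number $O(t/n^{1/d}+\log|H|)$, contradicting the known $\Omega(t^{1-1/d})$ lower bound for simplicial partitions.

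Two smaller issues. First, the iterated centerpoint/ham-sandwich route you sketch for proving the lemma is precisely how the pre-1992 partition trees with crossing number $O(t^{1-1/d+\gamma})$, $\gamma>0$, were built; the sharp exponent genuinely requires the cutting lemma (or an equivalent), so ``calibrating'' the nested centerpoint steps will not close the gap. Second, your running-time accounting is inconsistent with your own construction: each round removes only $n/t$ points, so the subproblem sizes do not ``halve geometrically,'' and $t$ rounds each touching $\Theta(m_i)$ points cost $\Theta(nt)$ overall. The $O(n\log t)$ bound in Matou\v{s}ek's paper comes from a more careful implementation (composing partitions recursively and computing the cuttings on samples), which your sketch does not supply.
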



The partition theorem is useful for divide-and-conquer for getting linear-space data structures with around $O(n^{1-1/d})$ query time. 
Analogously to Lemma~\ref{lem:extra_large_space_ds}, we can use the partition theorem recursively to create a multi-level data structure for simplex stabbing as follows:

Consider the case when all the input ranges are defined by $j$ halfspaces.
Call this a ``level-$j$'' problem. Like before, a level-$0$ problem can be solved trivially by storing the sum of the weights of all input ranges.
Detecting if a query point $p$ stabs a range $\gamma$ corresponds to testing if $p$ lies within each of the $j$ halfspaces defining $\gamma$. 
Dualizing, this corresponds to checking if the dual hyperplane $p^*$ lies on the correct side of the dual points of each halfspace defining $\gamma$.
Apply the partition theorem to the set $P$ of dual points of the $j$-th halfspaces of the ranges (we may assume that all of these are lower halfspaces, since afterwards we can repeat this process for all the upper halfspaces as well). For every cell $\D_i$ of the partition intersecting the hyperplane $p^*$, we can recurse on the ranges corresponding to the subset $P_i$.  For every cell $\D_i$ completely below $p^*$, we recurse  on the subset of ranges corresponding to subset $P_i$
but as a level-$(j-1)$ problem. This gives the following space and query time bounds:
\[ S_j(n) \:=\: \max_{n_i\le n/t,\ \sum_i n_i = n} \sum_i S_j(n_i) + O(t)S_{j-1}(n/t) + O(t) \]
\[ Q_j(n) \:=\: O(t^{1-1/d}) Q_j(n/t) + O(t)Q_{j-1}(n/t) + O(t). \]
(For the base case, $S_0(n) = O(1)$ and $Q_0(n) = O(1)$.)
Choosing $t = N^\eps$ where $N$ is the global input size  ensures that the recursion depth is $O(1)$.
Thus, $S_{d+1}(n)=O(N)$ and $Q_{d+1}(N)=O(N^{1-1/d+O(\eps)})$.
(The preprocessing time analysis is similar.)
This gives us the following lemma, which we  will use later in $d-1$ dimensions.

\begin{lemma} \label{lem:simplex_range_stabbing_slow}
  Simplex range stabbing on $n$ simplices in $\R^d$ with weights from a semigroup  can be performed with $O(n^{1-1/d+\eps})$ query time using a data structure with $O(n)$ space and $O(n\log n)$ preprocessing time. 
\end{lemma}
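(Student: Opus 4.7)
The plan is to apply the multi-level partition-tree framework just described in Section~\ref{sec:stab:prelim} exactly as sketched, and then solve the resulting recurrences carefully with a large branching factor. A simplex-stabbing query is equivalent to asking which input ranges, each the intersection of $d+1$ halfspaces, contain a query point, so it is precisely a level-$(d+1)$ problem. I would therefore take the recurrences
\[ S_j(n) \:=\: \max_{n_i\le n/t,\ \sum_i n_i = n} \sum_i S_j(n_i) + O(t)\, S_{j-1}(n/t) + O(t), \qquad S_0(n)=O(1), \]
\[ Q_j(n) \:=\: O(t^{1-1/d})\, Q_j(n/t) + O(t)\, Q_{j-1}(n/t) + O(t), \qquad Q_0(n)=O(1), \]
as the starting point, and unfold them with the branching parameter $t$ chosen as $t = N^{\eps'}$ for the global input size $N$ and a sufficiently small $\eps' > 0$ depending on $\eps$ and $d$.

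First I would verify the base case: for $j=0$ the trivial data structure (just the total semigroup sum) gives $S_0(n)=O(1)$ and $Q_0(n)=O(1)$. Then I would prove by induction on $j$ that $S_j(n)=O(n)$ and $Q_j(n)=O(n^{1-1/d+O(\eps')})$. For the space bound, since the branching factor is $t=N^{\eps'}$ and the subproblems all have size at most $n/t$, the recursion tree (in the first coordinate of $S_j$) has height $O(1/\eps')=O(1)$, so the subproblem sizes sum to $n$ at each level and the ``secondary'' contribution $O(t)\,S_{j-1}(n/t)$ telescopes using the inductive bound $S_{j-1}(n/t)=O(n/t)$, giving $O(n)$ per level and $O(n)$ in total. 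For the query time, the constant-depth recursion likewise gives $Q_j(n) \le t^{O(1)} Q_{j-1}(n) + O(t^{1-1/d})^{O(1/\eps')}\cdot Q_j(\text{const})$, which unrolls to $O(n^{1-1/d} \cdot N^{O(\eps')}) = O(n^{1-1/d+O(\eps')})$; after readjusting $\eps'$ by a constant factor, this is $O(n^{1-1/d+\eps})$ as claimed. Preprocessing time follows the same recursion pattern, with the partition-theorem cost of $O(n\log t)=O(n\log n)$ replacing the $O(t)$ construction term at each node; a similar constant-depth accounting yields $O(n \log n)$ in total.

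The main technical point to watch out for is the space bookkeeping when the ``upper'' halfspaces are processed in addition to the ``lower'' halfspaces, and more subtly the fact that the $O(t)$ additive terms and the secondary-structure cost $O(t)\,S_{j-1}(n/t)$ must not accumulate geometrically across the $O(1/\eps')$ levels of the recursion in a single level-$j$ tree; this is exactly what is ensured by choosing $t$ as a fixed polynomial $N^{\eps'}$ of the global input size (so the recursion depth is truly $O(1)$) rather than a function of the local $n$, and by induction providing $S_{j-1}(n/t)=O(n/t)$ so that $O(t)\cdot S_{j-1}(n/t) = O(n)$ at each node. Once this is handled, the remainder of the argument is a routine induction over $j=0,1,\ldots,d+1$ and the statement of the lemma follows by setting $j=d+1$.
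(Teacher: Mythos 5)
Your proposal matches the paper's own argument: the paper proves this lemma exactly by taking the level-$j$ recurrences derived in Section~\ref{sec:stab:prelim} with the global branching factor $t=N^\eps$, observing that the recursion depth is $O(1)$, and concluding $S_{d+1}(N)=O(N)$ and $Q_{d+1}(N)=O(N^{1-1/d+O(\eps)})$ (with the preprocessing analysis being analogous). Your additional care about the induction on $j$ and the non-accumulation of the additive and secondary terms is a correct and slightly more explicit rendering of the same computation.
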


\subsection{Group simplex range stabbing}
\label{sec:stab:group}

We now present our new data structure for simplex range stabbing in the group setting
(which in particular is sufficient for counting).  We follow an approach similar to our data structure for group simplex range searching in Section~\ref{sec:group}.

Recall Definition~\ref{def:simplex_level2} on $(i,j)$-sided ranges.  An \emph{$(i,j)$-sided stabbing problem} will now refer to the case when all input ranges are $(i,j)$-sided.
As in Observation~\ref{obs1}, the simplex stabbing problem reduces to a constant number of $(2,d)$-sided stabbing problems.  As in Observation~\ref{obs2}, in the group setting,
the $(2,d)$-sided stabbing problem reduces to the
$(1,d)$-sided stabbing problem.

\begin{theorem} \label{thm:simplex_range_stabbing}
  Simplex range stabbing on $n$ simplices in $\R^d$ with  weights from a group can be performed with $\OO(n^{1-1/d})$ query time using a data structure with $O(n\log\log n)$ space and $O(n\log n)$ preprocessing time.
\end{theorem}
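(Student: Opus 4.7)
The plan is to closely mimic the strategy used in Section~\ref{sec:group} for the group range searching counterpart. First, I would prove the stabbing analogs of Observations~\ref{obs1} and~\ref{obs2}: a vertical decomposition of each input simplex reduces general simplex stabbing to $(2,d)$-sided stabbing, and in the group setting the subtraction trick writes every $(2,d)$-sided input range as the set-theoretic difference of two $(1,d)$-sided ranges. Hence it suffices to design a data structure for $(1,d)$-sided stabbing.

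Next I would dispose of the base case of $(0,d)$-sided stabbing. Projecting along the $d$-th axis turns it into ordinary simplex stabbing in $\R^{d-1}$, so Lemma~\ref{lem:simplex_range_stabbing_slow} applied in dimension $d-1$ yields a data structure with space $S_{0,d}(n)=O(n)$, preprocessing $O(n\log n)$, and query time $Q_{0,d}(n)=O(n^{1-1/(d-1)+\eps})$. The crucial observation is that $Q_{0,d}(n)$ is strictly smaller than the target $n^{1-1/d}$, fulfilling precisely the ``secondary structures strictly cheaper than the primary'' condition highlighted in the paper's guiding principle.

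For the main $(1,d)$-sided structure I would use the standard multi-level pattern: dualize the $n$ nonvertical halfspaces to points so that the query point $p$ becomes a hyperplane $p^*$ and the nonvertical containment condition turns into a halfspace condition in the dual. I would then build a \Matousek{} simplicial partition with the large branching factor $t=n^{\eps}$; for every cell I store the subset recursively, and in addition attach a secondary $(0,d)$-sided stabbing structure built on the vertical parts of the ranges whose dual points lie in that cell. To answer a query, descend into the $O(t^{1-1/d})$ cells crossed by $p^*$ and invoke the secondary structure on each cell entirely on the ``correct'' side of $p^*$. This yields recurrences of the form $S_{1,d}(n)\le \sum_i S_{1,d}(n_i) + O(t)\,S_{0,d}(n/t) + O(t)$ with $n_i\le n/t$ and $\sum_i n_i = n$, and $Q_{1,d}(n)\le O(t^{1-1/d})\,Q_{1,d}(n/t) + O(t)\,Q_{0,d}(n/t) + O(t)$.

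Finally I would plug in $t=n^\eps$. The recursion has depth $O(\log\log n)$; the per-level secondary space contribution is $O(t)\cdot O(n/t)=O(n)$, so the total space is $O(n\log\log n)$, and the $O(n\log t)$ cost of building each partition level sums to $O(n\log n)$ preprocessing. For the query time the primary recurrence gives the standard $O(n^{1-1/d})$ bound up to polylogarithmic overhead, while the per-level secondary cost $O(t)\cdot O((n/t)^{1-1/(d-1)+\eps}) = O(n^{1-1/(d-1)+O(\eps)})$ is strictly lower-order, so summed over $O(\log\log n)$ levels of recursion the overall query time remains $\OO(n^{1-1/d})$. I expect the main obstacle to be this final accounting: one must check that the ``strictly lower complexity'' gap between $n^{1-1/(d-1)+\eps}$ and $n^{1-1/d}$ is large enough to absorb the cost of the secondary invocations at every node of the primary partition tree without inflating the query time by more than polylogarithmic factors, which is exactly where the choice $t=n^\eps$ (and thus the shallow $O(\log\log n)$ recursion) is essential.
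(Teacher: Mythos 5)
Your proposal is correct and follows essentially the same route as the paper's proof: reduce to $(1,d)$-sided stabbing via vertical decomposition plus subtraction, apply the partition theorem in dual space with branching factor $t=n^{\eps}$, use Lemma~\ref{lem:simplex_range_stabbing_slow} in dimension $d-1$ as the secondary structure, and absorb the constant-factor blowup over the $O(\log\log n)$ recursion depth into the $\OO(\cdot)$ in the query bound and the $\log\log n$ factor in space. The recurrences you write down and the final accounting (including the observation that $n^{1-1/(d-1)+O(\eps)}\ll n^{1-1/d}$ keeps the secondary cost lower-order) match the paper exactly.
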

\begin{proof}
By the two observations, it suffices to solve the $(1,d)$-sided stabbing problem.

Like before,
by applying the partition theorem to the dual points of the nonvertical input halfspaces,
we obtain the following recurrences for the space and query time for the $(1,d)$-sided stabbing problem:
\[ S_{1,d}(n) \:=\: \max_{n_i\le n/t,\ \sum_i n_i = n} \sum_i S_{1,d}(n_i) + O(t)S_{0,d}(n/t) + O(t) \]
\[ Q_{1,d}(n) \:=\: O(t^{1-1/d}) Q_{1,d}(n/t) + O(t)Q_{0,d}(n/t) + O(t), \]
with $S_{0,d}(n)=O(n)$ and $Q_{0,d}(n)=O(n^{1-1/(d-1)+\eps})$ by Lemma~\ref{lem:simplex_range_stabbing_slow}.

We choose $t=n^\eps$ for a sufficiently small constant $\eps$.
The recursion depth is $O(\log\log n)$.
Thus, $S_{1,d}(n)=O(n\log\log n)$.  
(The preprocessing time analysis is similar, except that we get a geometric series.) 
Note that the $O(t)Q_{0,d}(n/t)$ term is $n^{1-1/(d-1)+O(\eps)}\ll n^{1-1/d}$.
Due to a constant-factor blowup,
we get $Q_{1,d}(n) = O(n^{1-1/d}2^{O(\log\log n)}) = \OO(n^{1-1/d})$.
\end{proof}

\subsection{Reducing query time from $\OO(n^{1-1/d})$ to $O(n^{1-1/d})$}\label{sec:stab:improv}

In the proof of Theorem~\ref{thm:simplex_range_stabbing}, we get an extra polylogarithmic factor in the query time because of the constant-factor blowup, but this can be avoided by using a ``hierarchical'' variant of the partition theorem, which follows from Chan's optimal partition trees \cite{Chan12} (see also \cite{Matousek93}). 
\begin{theorem}[Hierarchical Partition Theorem]\label{thm:part}
  Let $P$ be a set of $n$ points in $\R^d$. For any sequence
  $t_1 < t_2<\cdots < t_\ell\le n$ with $t_1\ge\log^{\omega(1)}n$,
  we can compute a tree of cells, such that the cell of each node at depth $i$ is the disjoint union of the cells of its $O(t_{i+1}/t_i)$ children, and
  the cells at each depth $i$ form a simplicial partition of $P$ into $O(t_i)$ cells such that each cell contains at most $n/t_i$ points and any hyperplane crosses $O(t_i^{1-1/d})$ cells.
  The tree can be constructed by a randomized algorithm in $O(n\log n)$ time, where the crossing number bound holds w.h.p.\footnote{With high probability, i.e., probability $1-O(1/n^c)$ for an arbitrarily large constant $c$.}
\end{theorem}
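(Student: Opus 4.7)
The plan is to derive this theorem as an almost immediate corollary of Chan's optimal partition tree~\cite{Chan12}, in precisely the way Lemma~\ref{thm:hierarchical_cuttings} is derived from Chazelle's single-layer cutting construction. For a fixed constant $\rho>1$, Chan's algorithm produces in randomized $O(n\log n)$ time a tree $T_\rho$ of depth $O(\log_\rho n)$ and constant fan-out $O(\rho)$, in which the cells at depth $d$ form a simplicial partition of $P$ into $O(\rho^d)$ cells, each holding at most $n/\rho^d$ points and, w.h.p., each crossed by any hyperplane at most $O(\rho^{d(1-1/d)})$ times.

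Given the target sequence $t_1<t_2<\cdots<t_\ell$, I would set $d_i:=\lceil\log_\rho t_i\rceil$, so that $t_i\le\rho^{d_i}\le\rho\,t_i$, and form a compressed tree $T$ by keeping exactly those nodes of $T_\rho$ whose depth lies in $\{d_1,d_2,\ldots,d_\ell\}$, declaring the children of a retained node at depth $d_i$ to be its descendants in $T_\rho$ at depth $d_{i+1}$. The cell of each retained node is unchanged, so the disjoint-union-of-children property is inherited, and the number of children of a depth-$i$ node is $\rho^{d_{i+1}-d_i}=\Theta(t_{i+1}/t_i)$, matching the stated fan-out.

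The three quantitative bounds at depth $i$ of $T$ are then inherited verbatim from depth $d_i$ of $T_\rho$: at most $O(\rho^{d_i})=O(t_i)$ cells, at most $n/\rho^{d_i}\le n/t_i$ points per cell, and crossing number at most $O((\rho^{d_i})^{1-1/d})=O(t_i^{1-1/d})$. The construction time is that of $T_\rho$, i.e.\ $O(n\log n)$, plus a linear-time pass to identify and link up the retained depths.

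The only substantive point that needs care, and the sole reason for the hypothesis $t_1\ge\log^{\omega(1)}n$, is that Chan's crossing-number guarantee is probabilistic and must be made to hold simultaneously at every retained level; this is ensured by a union bound, which in turn requires the per-level failure probability to be sub-polynomial and hence requires each $t_i$ to be at least a sufficiently large polylogarithm in $n$. I do not foresee any other obstacle, since once $T_\rho$ is fixed, the hierarchical theorem is purely a matter of extracting a subsequence of depths from a tree whose every level already has all the desired properties.
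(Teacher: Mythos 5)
Your proposal is correct and follows essentially the same route as the paper: both derive the theorem by taking Chan's optimal partition tree (whose levels already realize the statement for a suffix of a geometric sequence of $t$-values), rounding each $t_i$ to the nearest available depth, and compressing the tree to retain only those depths. The only cosmetic differences are that the paper works with base $2$ rather than a generic $\rho$ and explicitly notes that under-full cells may be treated as degree-$1$ nodes; your attribution of the hypothesis $t_1\ge\log^{\omega(1)}n$ to a union bound is a plausible gloss, though the condition really originates in Chan's crossing-number guarantee itself, which only takes the clean form $O(t^{1-1/d})$ once $t$ exceeds a suitable polylogarithm.
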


The original version of Chan's optimal partition tree yields the above theorem for a suffix of the sequence $1,2,4,8,\ldots$\ \
(See \cite[proof of Theorem~5.3]{Chan12}, where each node has constant degree, and a cell at depth $i$ that contains fewer than $n/2^{i+1}$ points need not be subdivided and may be viewed as a degree-1 node.)  The above generalization follows by rounding each $t_i$ to a power of 2, and keeping only the nodes with depths in the resulting subsequence.

To reduce the query time in our data structure,
we just replace all the simplicial partitions in the proof of Theorem~\ref{thm:simplex_range_stabbing} with the tree of partitions from the hierarchical partition theorem, using the sequence $t_i=n/n_i$, $n_0=n$, $n_{i+1}=n_i^{1-\eps}$, and $\ell=O(\log\log n)$.  The query time bound becomes
$Q_{1,d}(n)=O\left(\sum_{i=0}^{\ell-1} (t_{i+1}/t_i)\cdot t_i^{1-1/d} n_i^{1-1/(d-1)+\eps}\right)
= O\left(\sum_{i=0}^{\ell-1}
n^{1-1/d}/ n_i^{1/(d-1)-1/d-O(\eps)}\right) =
O(n^{1-1/d})$.

\subsection{Reducing space from $O(n\log\log n)$ to $O(n)$}\label{sec:stab:bit}

We next show that the extra $\log\log n$ factor in the space bound of Theorem~\ref{thm:simplex_range_stabbing} can also be removed.

We assume a real-RAM model of computation where a word
can store (i)~a real number, (ii)~a group element, or
(iii)~a $w$-bit number with $w\ge\log n$.
Words of type (iii) are standard, since indices and pointers require
logarithmically many bits.  Words of type~(i) (which are commonly assumed in computational geometry) actually pose more of a concern, as we could potentially cheat by hiding an unlimited number of bits inside a real number.  Our algorithms will not cheat.  One way to prevent cheating in the model is to insist that each real number stored must be obtained by one of the standard arithmetic operations on two other real numbers that are stored in the data structure or are among the input real numbers (in particular, the floor function is not allowed).

We will use bit packing tricks but only on words of type (iii), which are legal (although our data structure does not fit in the standard pointer machine model, it fits in what Chazelle called an ``arithmetic pointer machine''~\cite{Chazelle86filter} where arithmetic operations on addresses are allowed).
These tricks are commonly used in orthogonal range searching and related geometric problems (e.g., see \cite{ChanNRT18}), and also in other areas of data structures.
We will \emph{not} use any bit packing on words of type~(ii), i.e., each group element are treated as ``atomic''; our linear-space data structures will store really only $O(n)$ number of group elements.

Our key observation is that by bit packing,
the data structure in
Lemma~\ref{lem:simplex_range_stabbing_slow}
actually takes \emph{sublinear} words of extra space if $n$ is small, and if space for the input array is excluded.

\begin{lemma} \label{lem:simplex_range_stabbing_slow:bit}
  Simplex range stabbing on $n$ simplices in $\R^d$ with weights from a semigroup can be performed with $O(n^{1-1/d+\eps})$ query time using a data structure with $O((n\log n)/w + n^{1-\Omega(\eps)})$ words of space and $O(n\log n)$ preprocessing time.  The space bound here excludes the input array storing the $n$ simplices and their weights (we are not allowed to permute the input array).
\end{lemma}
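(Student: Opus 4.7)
The plan is to modify the multi-level construction of Lemma~\ref{lem:simplex_range_stabbing_slow} in two ways. First, truncate each level's partition-tree recursion once the subproblem size drops below a threshold $s=n^{c\eps}$ for an appropriate constant $c=c(d)\le d$. Second, bit-pack every integer-valued quantity in the data structure: the parent--child pointers of each partition tree, the side-structure pointers linking consecutive levels of the multi-level construction, and the lists of input indices stored at the truncation leaves. Crucially, at a truncation leaf we store \emph{only} a bit-packed list of the $O(s)$ indices into the (external) input array of the corresponding subset; no precomputed semigroup sums are kept anywhere in the data structure.

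For the space analysis, truncation leaves every level's partition tree with $O(n/s)=O(n^{1-\Omega(\eps)})$ nodes, each carrying only $O(1)$ type-(i) real words for its simplex. Summing over the $O(1)$ levels of the multi-level construction contributes $O(n^{1-\Omega(\eps)})$ words in total. The remaining bookkeeping---child pointers, side-structure pointers, and truncation-leaf index lists---is dominated by the leaf index lists, which collectively hold $O(n)$ indices of $O(\log n)$ bits each, and thus occupy $O((n\log n)/w)$ bit-packed words. No type-(ii) words (semigroup elements) are stored outside the input array. Preprocessing time remains $O(n\log n)$, since truncation can only cut off the construction of Lemma~\ref{lem:simplex_range_stabbing_slow} earlier.

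For the query analysis, we traverse the truncated tree exactly as in the original proof. The crossing-number bound of the partition theorem, together with the $O(1)$-depth multi-level recurrence of Lemma~\ref{lem:simplex_range_stabbing_slow}, implies that the number of truncation leaves visited by a query is $O((n/s)^{1-1/d+O(\eps)})$. At each visited leaf we iterate through its $O(s)$ stored indices, extract each in $O(1)$ word-RAM time, fetch the corresponding input simplex in $O(1)$ time, and directly test point containment. The total query time is therefore $O(s\cdot(n/s)^{1-1/d+O(\eps)})=O(s^{1/d}\cdot n^{1-1/d+O(\eps)})$; taking $c\le d$ so that $s^{1/d}\le n^{\eps}$, and absorbing constants into $\eps$, we obtain the claimed $O(n^{1-1/d+\eps})$ bound.

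The main obstacle is confirming that brute-forcing at a truncation leaf really subsumes the work that the nested level-$(j-1),\ldots,0$ side structures would have performed in the untruncated construction. This is fine because at a subproblem of size at most $s$ the multi-level work collapses: directly testing point containment in each of the $\le s$ input simplices answers the query in $O(s)$ time, regardless of how many halfspaces defined each simplex. Together with the word-size assumption $w\ge\log n$ (needed for $O(1)$-time bit unpacking), early truncation, bit-packed pointers and indices, and brute-force leaf queries combine to give the stated bound.
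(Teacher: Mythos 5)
Your overall strategy---truncate the recursion of Lemma~\ref{lem:simplex_range_stabbing_slow} at subproblems of size $n^{\Theta(\eps)}$, store at each truncation leaf only a bit-packed list of indices into the external input array, and answer queries at leaves by brute force---is exactly the paper's approach (the paper simply observes that the existing recursion already bottoms out at size $t=N^\eps$, so the base cases can store packed pointer lists while the $O(N/t)$ internal nodes contribute $O(N^{1-\eps})$ words). Your space and query accounting for the leaves and for the tree bookkeeping is also correct.

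However, your explicit design decision that ``no precomputed semigroup sums are kept anywhere in the data structure'' creates a genuine gap. The multi-level construction relies on level-$0$ side structures: whenever a cell of a simplicial partition lies entirely on the correct side of the query's dual hyperplane, the query must add up the weights of \emph{all} ranges in that canonical subset, and these subsets have size up to $n/t$, far above your truncation threshold $s$. The recurrence $Q_j(n)=O(t^{1-1/d})Q_j(n/t)+O(t)Q_{j-1}(n/t)+O(t)$ only solves to $O(n^{1-1/d+O(\eps)})$ because $Q_0(m)=O(1)$, which requires the total weight of each level-$0$ canonical subset to be precomputed and stored as a single type-(ii) word. If you store nothing there, a level-$0$ query on a subset of size $m$ costs $\Omega(m)$ (semigroup elements cannot be packed or subtracted, and a query point may stab $\Theta(n)$ simplices), so the $O(t)Q_0(n/t)$ term becomes $\Theta(n)$ and the query bound collapses. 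The fix is immediate and does not hurt your space analysis: level-$0$ structures are spawned only by internal (non-leaf) nodes of the partition trees, of which there are $O(n/(st))$, each spawning $O(t)$ of them, so only $O(n/s)=O(n^{1-\Omega(\eps)})$ semigroup words need to be stored---within your $n^{1-\Omega(\eps)}$ term. But as written, the claim that no semigroup sums are stored is incompatible with the stated query time.
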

\begin{proof}
We reanalyze the space bound in the recursive method in Section~\ref{sec:stab:prelim}.
Recall that we have chosen $t=N^\eps$ where $N$ is the global input size.
When $n=t$, we have reached the base case and can simply store the input as a plain list of pointers to the input array, which requires $O(n\log N)$ bits, i.e., $O((n\log N)/w)$ words.
As the recursion depth is $O(1)$ and there are $O(N/t)$ leaves (and $O(N/t^2)$ internal nodes), we have
$S_{d+1}(N) = O((N\log N)/w + N/t) = O((N\log N)/w + N^{1-\eps})$.
\IGNORE{

We modify the recursive method in Section~\ref{sec:stab:prelim}, except that when the input size is below some parameter $B$, we simply store the input as a plain list of pointers to the input array, which requires $O(n\log n)$ bits, i.e., $O((n\log n)/w)$ words, and answer a query by naive linear search in $O(B)$ time.

Thus, if $n\ge B$, we still have
\[ S_j(n) \:=\: \max_{n_i\le n/t,\ \sum_i n_i = n} \sum_i S_j(n_i) + O(t)S_{j-1}(n/t) + O(t) \]
\[ Q_j(n) \:=\: O(t^{1-1/d}) Q_j(n/t) + O(t)Q_{j-1}(n/t) + O(t), \]
with $S_0(n)=O(1)$ and $Q_0(n)=O(1)$.
But if $n\le B$, we have $S_j(n)=O((n\log n)/w)$ and
$Q_j(n)=O(B)$.

As we have chosen $t=N^\eps$ to ensure that the recursion depth is $O(1)$,
the recurrences solve to
$S_{d+1}(N) = O((N\log N)/w + N/B)$
and $Q_{d+1}(N) = O(BN^{1-1/d+\eps})$.
We set $B=N^\eps$ (and readjust~$\eps$).
}
\end{proof}

\begin{theorem} \label{thm:simplex_range_stabbing:new}
  Simplex range stabbing on $n$ simplices in $\R^d$ with weights from a group can be performed in $O(n^{1-1/d})$ query time w.h.p.\  using a randomized data structure with $O(n)$ words of space and $O(n\log n)$ preprocessing time.
\end{theorem}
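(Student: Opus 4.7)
The plan is to combine the hierarchical partition tree of Section~\ref{sec:stab:improv} with the bit-packed base case of Lemma~\ref{lem:simplex_range_stabbing_slow:bit}. By the same two reductions as in Section~\ref{sec:stab:group}, it suffices to solve the $(1,d)$-sided stabbing problem. I would build the hierarchical partition tree of Theorem~\ref{thm:part} on the dual points of the nonvertical input halfspaces, using the sequence $n_i = n^{(1-\eps)^i}$ with $\ell = O(\log\log n)$ levels, and at each cell at each depth $i$ attach, as secondary structure for the $(0,d)$-sided subproblem, the bit-packed $(d-1)$-dimensional simplex stabbing data structure from Lemma~\ref{lem:simplex_range_stabbing_slow:bit} applied to the vertical projection of the $n_i$ points in that cell.

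The query time analysis is unaffected by bit packing and follows verbatim from Section~\ref{sec:stab:improv}, giving $O(n^{1-1/d})$ w.h.p.

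For the space, at depth $i$ there are $n/n_i$ cells of size $n_i$, and each secondary structure contributes $O((n_i \log n_i)/w + n_i^{1-\Omega(\eps)})$ extra words beyond the shared global input array. Summing the first term over all cells at depth $i$ yields $O((n \log n_i)/w)$, and summing this over $i$ yields $(n/w) \cdot O(\sum_i (1-\eps)^i \log n) = O((n\log n)/(w\eps)) = O(n)$ once $w \ge \log n$, thanks to the geometric series in $\log n_i$. This is precisely the place where bit packing shaves the $\log\log n$ factor of Theorem~\ref{thm:simplex_range_stabbing}: without it, each secondary structure would cost $\Theta(n_i)$ words, giving $\Theta(n)$ per depth and $\Theta(n \log\log n)$ overall. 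The primary hierarchical tree itself has $O(n)$ total cells across depths and can be stored in $O(n)$ words.

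The main obstacle is controlling the $n_i^{1-\Omega(\eps)}$ overhead, which naively sums to $\Omega(n\log\log n)$ once $n_i$ nears a constant at the bottom of the tree. The fix I would adopt is to truncate the hierarchical tree at $n_\ell = B$ where $B$ is a sufficiently large polylogarithmic function of $n$, chosen so that $B^{-\Omega(\eps)}\log\log n = o(1)$; this makes the truncated sum of $n_i^{1-\Omega(\eps)}$ terms equal to $o(n)$. Each of the $O(n/B)$ resulting leaves handles its $(1,d)$-sided subproblem by recursively invoking the $O(n\log\log n)$-space construction of Theorem~\ref{thm:simplex_range_stabbing} at the leaf scale; since $B$ is polylogarithmic, its $\log\log B = O(\log\log\log n)$ overhead keeps the aggregate leaf space within $O(n)$. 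The delicate calibration is to ensure the leaf contribution to the query time, $O((n/B)^{1-1/d}) \cdot \OO(B^{1-1/d})$, is truly $O(n^{1-1/d})$ rather than $O(n^{1-1/d}\log^{O(1)}n)$ — this requires the polylogarithmic blowup at the leaf scale to be absorbed, which follows by choosing the branching parameter in the leaf structure small enough relative to $\log B / \log n$, so that the residual polylog factor becomes sub-polynomial in $B$ and collapses into the constant hidden in the $O(n^{1-1/d})$.
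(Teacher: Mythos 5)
Your overall architecture --- reduce to the $(1,d)$-sided stabbing problem, build the hierarchical partition tree of Theorem~\ref{thm:part} on the duals of the nonvertical halfspaces, and attach the bit-packed $(d-1)$-dimensional structure of Lemma~\ref{lem:simplex_range_stabbing_slow:bit} as secondary structures so that the per-depth pointer cost $O((n\log n_i)/w)$ telescopes geometrically --- is exactly the paper's. But two points need attention. First, the ``main obstacle'' you identify is not one: the $n_i^{1-\Omega(\eps)}$ terms do \emph{not} sum to $\Omega(n\log\log n)$. At depth $i$ they contribute $(n/n_i)\cdot n_i^{1-\Omega(\eps)}=n/n_i^{\Omega(\eps)}$, and since $n_i=n^{(1-\eps)^i}$, only $O(1)$ depths at the very bottom have $n_i=O(1)$; writing $n/n_i^{\delta}=n\exp(-\delta(1-\eps)^i\ln n)$ and grouping the exponents dyadically shows the whole sum is $O(n)$. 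So the truncation at a polylogarithmic $B$ is unnecessary --- and, as described, it also fails: $O(n/B)$ leaves each holding the $O(B\log\log B)$-space structure of Theorem~\ref{thm:simplex_range_stabbing} use $O(n\log\log\log n)$ words, not $O(n)$, and the $\OO(B^{1-1/d})$ leaf query time multiplied over the $O((n/B)^{1-1/d})$ crossed leaf cells leaves a genuine $(\log\log n)^{\Omega(1)}$ factor in the query time; the $2^{\Theta(\log\log B)}$ blowup hidden in $\OO(B^{1-1/d})$ comes from a constant-factor loss per recursion level and cannot be tuned away by shrinking the branching parameter.

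Second, and more importantly, your per-structure space bound $O((n_i\log n_i)/w)$ is asserted but not justified. Lemma~\ref{lem:simplex_range_stabbing_slow:bit} charges $O(\log(\mbox{its own input size}))$ bits per pointer \emph{into its own input array}; if the secondary structure on a subset $P_i$ must address elements scattered through the global array, each pointer costs $\log n$ bits, the per-depth total is $O((n\log n)/w)=O(n)$ words, and over $O(\log\log n)$ depths you are back at $O(n\log\log n)$ --- no improvement at all. The missing ingredient is to fix a single global permutation of the input array, determined top-down by the partitions, so that every subset $P_i$ at every node occupies a contiguous subarray; then each secondary structure addresses its elements by $O(\log n_i)$-bit local offsets and the geometric series you invoke is legitimate. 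With that fix in place (and the truncation detour deleted), your argument coincides with the paper's proof.
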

\begin{proof}
We modify the proof of Theorem~\ref{thm:simplex_range_stabbing}.
We choose a favorable permutation of the input array: namely, having computed the simplicial partition $\{(P_1,\Delta_i),\ldots,
(P_{O(t)},\Delta_{O(t)})\}$, we permute the array so that each $P_i$ occupy a contiguous subarray, before recursing in each $P_i$; afterwards, the permutation of $P_i$ is fixed, and we can apply Lemma~\ref{lem:simplex_range_stabbing_slow:bit}.

By the lemma, we can plug in $S_{0,d}(n) = O((n\log n)/w + n^{1-\Omega(\eps)})$, and 
the recurrence for the amount of extra space (excluding the input array) becomes
\[ S_{1,d}(n) \:=\: \max_{n_i\le n/t,\ \sum_i n_i = n} \sum_i S_{1,d}(n_i) + O((n\log n)/w + t(n/t)^{1-\eps}).  \]
As we have chosen $t=n^\eps$, 
this gives $S_{1,d}(n) = O((n\log n)/w + n)$
(since the $(n\log n)/w$ term generates a geometric series, and
the $t(n/t)^{1-\eps}$ term is sublinear).
Since $w\ge\log n$, we obtain a linear space bound.
The query time is $O(n^{1-1/d})$ as already  explained in Section~\ref{sec:stab:improv} by using the hierarchical partition theorem. 
\end{proof}

Note that in the above theorem (and, in fact, all data structures in this paper), the only place where randomization is used is the construction of the hierarchical partitions (from Theorem~\ref{thm:part}).  If we do not care about preprocessing time, all our data structures are deterministic.

\paragraph{Remark.}
For counting, it is possible to use bit packing tricks to slightly improve the query time as well.
In modifying the proof of Theorem~\ref{thm:simplex_range_stabbing},
we apply the hierarchical partition theorem with
$t_\ell=n/B$ for some parameter $B$.  We can switch to a different data structure for leaf subproblems of size $B$, 
with $O(m)$ space and
$O(B^{1+\eps}/m^{1/d})$ query time~\cite{Matousek93}.
By choosing $m=B^{1+\delta}$ for a sufficiently small constant $\delta>0$,
the query time is $O(B^{1-1/d-\delta/d+\eps})$
and the space usage is $O(B^{1+\delta}\log B)$ in bits, or $O((B^{1+\delta}\log B)/w)$ in words by bit packing.
The total space in words is $O(n+(n/B)\cdot (B^{1+\delta}\log B)/w)$, and the query time is $O((n^{1-1/d}/B^{1/(d-1)-1/d-O(\eps)} + (n/B)^{1-1/d}\cdot B^{1-1/d-\delta/d+\eps})$.
Setting $B$ near $\log^{1/\delta}n$
gives $O(n)$ words of space and
$O(n^{1-1/d}/\log^{1/d-O(\eps)} n)$ query time (as $w\ge\log n$).

We are not aware any previous work mentioning this type of
$O(n^{1-1/d}/\log^{\Omega(1)}n)$ query time bound.
The same trick works also for simplex range counting or reporting.  This trick does not work in the group setting
(since group elements requires one word each and cannot be packed).

\IGNORE{

Using vertical decomposition on the input simplices,
for counting it suffices to solve the stabbing problem for $O(n)$ many $(1,d)$-sided ranges. 
We can build the optimal partition tree using Theorem~\ref{thm:part} with an increasing sequence of branching factor of $O(n^\eps)$ for a small constant $\eps$ to get a tree of $O(\log \log n)$ height.
For every canonical subset we build the data structure of Lemma~\ref{lem:simplex_range_stabbing_slow} to solve the 
stabbing problem for $(0,d)$-sided ranges, which is a $(d-1)$-dimensional problem.
This gives us the following lemma.

\begin{theorem}[Simplex range stabbing] \label{thm:simplex_range_stabbing}
  Simplex range stabbing of $n$ simplices in $\R^d$ with counting queries can be performed in $O(n^{1-1/d})$ query time using a data structure with $O(n\log\log n)$ space.
\end{theorem}

Similarly for reporting, we can solve $(0, d)$ problem shallow partitions.
We omit the details as they are identical to Theorem~\ref{thm:simplex_reporting} except that the additional level increases the space usage to $O(n(\log\log n)^2)$.
\begin{theorem}[Simplex range stabbing] \label{thm:simplex_range_stabbing_reporting}
  Simplex range stabbing of $n$ simplices in $\R^d$ in can be performed in $O(n^{1-1/d} + k)$ query time to report all $k$ simplices intersected by a query point using a data structure with $O(n(\log\log n)^2)$ space.
\end{theorem}

}

\subsection{Simplex range stabbing reporting}\label{sec:stab:report}

For the reporting version of simplex range stabbing, we follow an approach similar to our data structure for simplex range reporting in Section~\ref{sec:report}.

We start with the $(1,0)$-sided stabbing problem, which reduces to halfspace range reporting in dual space.  By known results~\cite{MatousekS93}, we have:

\begin{lemma}\label{lem:halfspace_range_reporting2}
Halfspace reporting on $n$ points in $\R^d$ 
can be performed with $O(n^{1-1/\lfloor d/2\rfloor +\eps}+k)$ query time using a data structure
with $O(n)$ space and $O(n\log n)$ preprocessing time for any fixed $\eps>0$.
\end{lemma}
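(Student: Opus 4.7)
The plan is to dualize and apply \Matousek's shallow partition theorem. By geometric duality, halfspace reporting on $P$ reduces to: given $n$ hyperplanes $H$ in $\R^d$ and a query point $p$, report all hyperplanes of $H$ lying above $p$. The $\lfloor d/2 \rfloor$ exponent (rather than $d$) is made possible by the upper bound theorem, which implies that the region of $\mathcal{A}(H)$ lying below at most $k$ hyperplanes has combinatorial complexity $O(n^{\lfloor d/2\rfloor} k^{\lceil d/2\rceil})$. Consequently, \emph{shallow cuttings} are substantially smaller than general cuttings of the same parameter, and an analogous statement holds for partitions.

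Next I would invoke a shallow variant of the partition theorem: for any parameter $t$, there is a simplicial partition $\Pi = \{(P_1,\Delta_1),\ldots,(P_{O(t)},\Delta_{O(t)})\}$ of $P$ with each $|P_i| \le n/t$, such that the boundary of any halfspace containing at most $n/t$ points of $P$ crosses only $O(t^{1-1/\lfloor d/2\rfloor})$ cells. Using this, I would build a recursive data structure analogous to Lemma~\ref{lem:simplex_range_stabbing_slow}: at each node storing $n$ points, apply the shallow partition with $t = N^\eps$ where $N$ is the global input size. To answer a query for a halfspace $\gamma$, I visit each child cell $\Delta_i$: if $\Delta_i \subseteq \gamma$ I report all of $P_i$ in bulk (contributing to the $+k$ term); if $\Delta_i \cap \gamma = \emptyset$ I skip it; otherwise I recurse in the subtree for $P_i$. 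The disjointness of the partition immediately gives $O(N)$ space, and the known $O(n \log t)$ construction time for a shallow partition summed over recursion gives $O(N \log N)$ preprocessing.

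The query recurrence is $Q(n) = O(t^{1-1/\lfloor d/2\rfloor}) \cdot Q(n/t) + O(t)$ plus an $O(k)$ term for bulk reporting, which with $t = N^\eps$ and recursion depth $O(1)$ solves to $Q(N) = O(N^{1-1/\lfloor d/2\rfloor + O(\eps)} + k)$; readjusting $\eps$ gives the claimed bound. The main obstacle will be to justify the shallow crossing bound throughout the recursion: although $\gamma$ may be very ``deep'' globally, once we descend into a cell $\Delta_i$ whose boundary is crossed by $\partial\gamma$, only the at most $n/t$ points of $P_i$ survive, so $\gamma$ is automatically shallow with respect to the local instance, and the $O(t^{1-1/\lfloor d/2\rfloor})$ crossing bound applies at every level. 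This is essentially the analysis underlying \Matousek{} and Schwarzkopf \cite{MatousekS93}; invoking their construction directly yields the claimed $O(n)$ space and $O(n \log n)$ preprocessing bounds, with the $n^\eps$ factor in the query time arising from the constant recursion depth induced by the branching factor $N^\eps$.
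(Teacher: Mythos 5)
The paper does not actually prove this lemma; it is imported as a known result with a citation to Matou\v{s}ek and Schwarzkopf, so you are reconstructing a proof that the authors treat as a black box. Your architecture---a constant-depth partition tree built from shallow partitions with branching factor $N^{\eps}$---is the right one, but the step you yourself single out as ``the main obstacle'' contains a genuine error. The shallow partition theorem bounds the crossing number by $O(t^{1-1/\lfloor d/2\rfloor})$ only for halfspaces containing at most $n_v/t$ points of the \emph{local} set $P_v$; shallowness is always measured relative to the node being processed. Your argument that after descending into a child $P_i$ with $|P_i|\le n/t$ the query is ``automatically shallow with respect to the local instance'' confuses the absolute size of $P_i$ with the query's depth relative to $P_i$: the halfspace $\gamma$ may well contain \emph{all} of $P_i$, in which case it is maximally deep at that child and the theorem gives no crossing bound there (up to $O(t)$ cells of the child's partition may be crossed). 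The same issue already arises at the root, where $\gamma$ can contain up to $n$ points.

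Handling deep nodes is precisely the technical heart of \Matousek's reporting result and cannot be waved away. One needs either a refined crossing bound that degrades gracefully with the local depth $k_v$ (the ``excess'' $k_v t/n_v$), or a way to detect that a node is deep and fall back to a procedure whose cost can be charged to the $k_v > n_v/t$ points that will be reported from $P_v$; in both cases the disjointness of the point sets at each level of the tree is what keeps the output-sensitive term from blowing up to $O(n^{O(\eps)}k)$. As written, your recurrence $Q(n)=O(t^{1-1/\lfloor d/2\rfloor})\,Q(n/t)+O(t)+O(k)$ is only valid for queries that happen to be shallow at every visited node, so the claimed $O(n^{1-1/\lfloor d/2\rfloor+\eps}+k)$ bound is not established. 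For the purposes of this paper it would also be legitimate to simply cite the known theorem, as the authors do.
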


We then solve the $(1,d)$-sided stabbing problem and finally the $(2,d)$-sided stabbing problem:

\begin{lemma} \label{lem:stab:report1}
There is a data structure for $(1,d)$-sided stabbing problem in $\R^d$ with $O(n^{1-1/(d-1)+\eps}+k)$ query time, $O(n)$ space, and $O(n\log n)$ preprocessing time for any fixed $\eps > 0$.
\end{lemma}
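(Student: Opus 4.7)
The plan is to build a multi-level data structure mirroring Lemma~\ref{lem:simplex_reporting_level2} in the stabbing setting, replacing cuttings by \Matousek{}'s partition theorem applied in the $(d-1)$-dimensional vertical projection. The innermost (base) level will handle $(1,0)$-sided stabbing via halfspace range reporting, and the outer levels will incorporate the $d$ vertical halfspace constraints one at a time, in strict analogy with the ordering of levels used in Section~\ref{sec:report}.

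First I observe that $(1,0)$-sided stabbing is equivalent to halfspace range reporting by standard dualization: each input range's nonvertical halfspace dualizes to a point and the query point dualizes to a query halfspace. Invoking Lemma~\ref{lem:halfspace_range_reporting2} then yields the base case with $O(n)$ space, $O(n\log n)$ preprocessing, and $O(n^{1-1/\lfloor d/2\rfloor+\eps}+k)$ query time, which lies within the target $O(n^{1-1/(d-1)+\eps}+k)$ because $\lfloor d/2\rfloor\le d-1$ for all $d\ge 2$.

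For $(1,j)$-sided stabbing with $j\ge 1$, I will apply the partition theorem in $\R^{d-1}$ to the dualized projections of one of the vertical bounding hyperplanes of the input ranges. For each of the $O(t)$ partition cells $\Delta$, I store two secondary structures on its canonical subset: a $(1,j)$-sided stabbing structure for recursing when the query's projected dualized hyperplane crosses $\Delta$, and a $(1,j-1)$-sided stabbing structure for cells that lie entirely on the satisfying side. Since any hyperplane crosses $O(t^{1-1/(d-1)})$ cells, this yields (omitting the ``$+k$'' reporting cost, which accumulates only once across all base-case invocations)
\[ S_{1,j}(n) \:\le\: \max_{n_i\le n/t,\ \sum_i n_i = n}\sum_i S_{1,j}(n_i) + O(t)\,S_{1,j-1}(n/t) + O(t), \]
\[ Q_{1,j}(n) \:\le\: O(t^{1-1/(d-1)})\,Q_{1,j}(n/t) + O(t)\,Q_{1,j-1}(n/t) + O(t). \]
Setting $t=N^\eps$ with $N$ the global input size makes both the $n$-recursion and the $j$-recursion have $O(1)$ depth, so the recurrences telescope to $S_{1,d}(N)=O(N)$ and $Q_{1,d}(N)=O(N^{1-1/(d-1)+O(\eps)}+k)$, which after readjusting $\eps$ matches the claim; preprocessing is $O(n\log n)$, dominated by the partition trees and the halfspace reporting structures at the leaves.

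The main obstacle I anticipate is controlling the exponent across the $d+1$ levels without accumulating extra polylogarithmic factors: naive constant branching would compound the $O(t^{1-1/(d-1)})$ crossing factor per level. The resolution is the standard trick already used in Section~\ref{sec:prev}, namely a polynomial branching factor $t=N^\eps$ that keeps the recursion depth $O(1)$, so that the compounded blow-up is absorbed into $O(\eps)$ slack in the exponent. A secondary point to verify is that the ``$+k$'' reporting term does not multiply across levels; this holds because each output range is discovered at exactly one base-case halfspace-reporting invocation, so summing $k_i$ over all invocations gives $\sum k_i = k$.
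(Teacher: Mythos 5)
Your proposal is correct and follows essentially the same route as the paper: the base case is $(1,0)$-sided stabbing handled by halfspace range reporting (Lemma~\ref{lem:halfspace_range_reporting2}), the vertical constraints are peeled off one at a time via the partition theorem in $d-1$ dimensions, yielding exactly the paper's recurrences, and the branching factor $t=N^\eps$ keeps the recursion depth constant so the exponent blow-up is absorbed into $O(\eps)$. Your added checks (that $\lfloor d/2\rfloor\le d-1$ makes the base-case exponent harmless, and that the ``$+k$'' term accumulates only once) are correct and consistent with what the paper leaves implicit.
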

\begin{proof}
Applying the same method as in Section~\ref{sec:stab:prelim} using simplicial partitions in $d-1$ dimensions, we obtain a data structure for the $(1,j)$-sided stabbing problem
with the following recurrences of space and query time (ignoring the ``$+k$'' reporting cost):
\[ S_{1,j}(n) \:=\: \max_{n_i\le n/t,\ \sum_i n_i = n} \sum_i S_{1,j}(n_i) + O(t)S_{1,j-1}(n/t) + O(t) \]
\[ Q_{1,j}(n) \:=\: O(t^{1-1/(d-1)}) Q_{1,j}(n/t) + O(t)Q_{1,j-1}(n/t) + O(t), \]
with $S_{1,0}(n)=O(n)$ and $Q_{1,0}(n)=O(n^{1-1/\lfloor d/2\rfloor +\eps})$ by Lemma~\ref{lem:halfspace_range_reporting2}.

We choose $r=N^\eps$ where $N$ is the global input size, to ensure that the recursion depth is $O(1)$.
This gives $S_{1,d}(N)=O(N)$ and
$Q_{1,d}(N)=O(N^{1-1/(d-1)+O(\eps)})$.
(The preprocessing time analysis is similar.)
\end{proof}

\begin{theorem} \label{thm:simplex_stab_reporting}
Simplex stabbing reporting on $n$ simplices in $\R^d$
can be performed with  $O(n^{1-1/d}+k)$ query time w.h.p.\ (where $k$ is the output size) using a randomized data structure with $O(n)$ words of space and $O(n\log n)$ preprocessing time.
\end{theorem}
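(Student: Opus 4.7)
The plan is to mirror the strategy used in Theorem~\ref{thm:simplex_range_stabbing:new} for the group case, but now with the three-level organization borrowed from the reporting approach in Section~\ref{sec:report}: the innermost level handles one nonvertical side (which is a pure halfspace-reporting subproblem with strictly sublinear exponent), the next levels handle the $d$ vertical sides, and the outermost level handles the second nonvertical side. Since subtraction is unavailable for reporting, this rearranged ordering is essential.

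First, by vertical decomposition applied to each input simplex (the stabbing analogue of Observation~\ref{obs1}), it suffices to solve the $(2,d)$-sided stabbing problem where each input range has two nonvertical facets and whose vertical-projection footprint is a $(d-1)$-simplex. For this, I would dualize the ``upper'' nonvertical halfspaces to points and build a simplicial partition on this dual point set. For each cell $\Delta$ of the partition, the dual query hyperplane $p^*$ is either above $\Delta$ (so the corresponding ranges are handled by a $(1,d)$-sided stabbing subproblem on the canonical subset, via Lemma~\ref{lem:stab:report1}) or crosses $\Delta$ (so we recurse). This gives the recurrences
\[ S_{2,d}(n) \:=\: \max_{n_i\le n/t,\ \sum_i n_i = n} \sum_i S_{2,d}(n_i) + O(t)\,S_{1,d}(n/t) + O(t) \]
\[ Q_{2,d}(n) \:=\: O(t^{1-1/d})\,Q_{2,d}(n/t) + O(t)\,Q_{1,d}(n/t) + O(t), \]
ignoring the ``$+k$'' reporting cost, with $S_{1,d}(n)=O(n)$ and $Q_{1,d}(n)=O(n^{1-1/(d-1)+\eps})$ from Lemma~\ref{lem:stab:report1}.

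Choosing $t=n^\eps$ for a small constant $\eps$, the $O(t)\,Q_{1,d}(n/t)$ term is $n^{1-1/(d-1)+O(\eps)}\ll n^{1-1/d}$, and so the branching $O(t^{1-1/d})$ dominates in the query recurrence. A naive solve would give $\OO(n^{1-1/d}+k)$ query time with $O(n\log\log n)$ space. To eliminate the polylogarithmic slack in query time, I would replace the simplicial partitions at the outer level by the hierarchical partition tree of Theorem~\ref{thm:part} with the sequence $t_i=n/n_i$, $n_0=n$, $n_{i+1}=n_i^{1-\eps}$, $\ell=O(\log\log n)$; this collapses the sum into a geometric series bounded by $O(n^{1-1/d})$ w.h.p.

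The main obstacle is achieving $O(n)$ words of space rather than $O(n\log\log n)$. For this I would adapt the bit-packing argument behind Lemma~\ref{lem:simplex_range_stabbing_slow:bit} and Theorem~\ref{thm:simplex_range_stabbing:new}: choose a favorable permutation of the input array level by level so that each canonical subset occupies a contiguous subarray, and then use a bit-packed variant of Lemma~\ref{lem:stab:report1} in which the base-case leaves store only pointer-indices into the input array (taking $O((n\log n)/w)$ words) plus $n^{1-\Omega(\eps)}$ extra words. Summing the amount of extra space over the outer recursion yields a geometric series in $n^{1-\Omega(\eps)}$ together with an $O((n\log n)/w)$ term, and since $w\ge\log n$ the total is $O(n)$ words. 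The preprocessing time bound $O(n\log n)$ follows from the partition theorem's construction time together with the fact that the recursion has $O(\log\log n)$ depth and the work at each level is $O(n\log n)$ divided into geometrically shrinking pieces.
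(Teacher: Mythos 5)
Your proposal is correct and follows essentially the same route as the paper: reduce to the $(2,d)$-sided stabbing problem, use the partition theorem on the duals of one family of nonvertical halfspaces with Lemma~\ref{lem:stab:report1} as the inner structure, then invoke the hierarchical partition theorem to remove the polylogarithmic slack in query time and the bit-packing trick (with contiguous canonical subsets and pointer-index leaves) to bring the space down from $O(n\log\log n)$ to $O(n)$ words. Your citation of Lemma~\ref{lem:stab:report1} for the $(1,d)$-sided bounds is in fact the intended reference, and the rest of the analysis matches the paper's.
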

\begin{proof}
It suffices to solve the $(2,d)$-sided stabbing problem:
As in the proof of Theorem~\ref{thm:simplex_range_stabbing}, we obtain a data structure with the following recurrences for space and query time (ignoring the ``$+k$'' reporting cost):
\[ S_{2,d}(n) \:=\: \max_{n_i\le n/t,\ \sum_i n_i = n} \sum_i S_{2,d}(n_i) + O(t)S_{1,d}(n/t) + O(t) \]
\[ Q_{2,d}(n) \:=\: O(t^{1-1/d}) Q_{2,d}(n/t) + O(t)Q_{1,d}(n/t) + O(t), \]
with $S_{1,d}(n)=O(n)$ and $Q_{1,d}(n)=O(n^{1-1/(d-1)+\eps})$ by Lemma~\ref{lem:simplex_range_stabbing_slow}.
(The preprocessing time analysis is similar.)

We choose $r=n^\eps$.  As in the proof of
Theorem~\ref{thm:simplex_range_stabbing}, 
the recurrences solve to $S_{2,d}(n)=O(n\log\log n)$ and
$Q_{2,d}(n)=\OO(n^{1-1/d})$.

As in Section~\ref{sec:stab:improv},
the extra polylogarithmic factor in the query time bound can be removed, by using the hierarchical partition theorem for the outermost level.

As in Section~\ref{sec:stab:bit},
the extra $\log\log n$ factor in space can also be improved, by the same bit-packing tricks (by observing that the space bounds
in Lemmas~\ref{lem:halfspace_range_reporting2} and~\ref{lem:stab:report1} can be reduced to
$O((n\log n)/w + n^{1-\eps'})$ for some $\eps'>0$).
\end{proof}

\section{Segment Intersection Searching and Ray Shooting}\label{sec:seg}

Given a set $S$ of $n$ line segments in $\R^2$,
the goal of segment intersection searching is to quickly count the segments in $S$ intersecting a query line segment, or report them, or compute the sum of their weights from a group or semigroup.  Naively, the condition that an input segment $s$ intersects a query segment $q$ can be expressed as a conjunction of four 2D halfplane (linear) constraints: namely, that the two endpoints of $q$ lie on different sides of the line through $s$ \emph{and} the two endpoints of $s$ lie on different sides of the line through $q$.  Thus, it is straightforward to obtain a multi-level partition tree for this problem achieving near linear space and near $\sqrt{n}$ query time; however, the four levels cause multiple extra logarithmic factors in space and time.  We will describe better approaches, using expressions involving fewer 2D halfplane constraints.

\subsection{Group segment intersection searching}

In this subsection, we consider segment intersection searching queries in the group setting (which in particular is sufficient for counting).

\begin{observation}
In the group setting,
a line-segment intersection query for line segments reduces to
$O(1)$ rightward-ray intersection queries for  rightward rays.
\end{observation}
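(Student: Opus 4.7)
The plan is to exploit the group's subtraction using the classical trick that writes a segment as a signed combination of two rightward rays, applied first to the input and then to the query. Assume a generic position by symbolic perturbation, so that no input or query segment is vertical and no two of them are collinear. For each input segment $s$ with endpoints $a,b$ labelled so that $a.x \le b.x$, let $r_a(s)$ denote the rightward ray starting at $a$ in the direction of $b-a$, and similarly let $r_b(s)$ start at $b$ in the same direction. Since $b - a$ has positive $x$-component, both are rightward rays, and as point sets we have $s = r_a(s) \setminus r_b(s)$, i.e.\ $r_a(s) = s \cup r_b(s)$ as a disjoint union.

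The key observation is that any query segment $q$ meets each of $r_a(s)$ and $r_b(s)$ in at most one point (by non-collinearity), so the indicator of $q \cap s \neq \emptyset$ satisfies
\[
[q \cap s \neq \emptyset] \;=\; [q \cap r_a(s) \neq \emptyset] \;-\; [q \cap r_b(s) \neq \emptyset].
\]
By linearity in the group, replacing the input family $\{(s_i,w_i)\}$ by the family $\{(r_{a_i}(s_i),w_i),\;(r_{b_i}(s_i),-w_i)\}$ of $2n$ rightward rays preserves the weighted intersection count against any query $q$. Thus the original problem reduces to segment-intersection queries on a set of \emph{rightward rays}, with $q$ still a segment.

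Finally, I would apply the exact same decomposition to the query: write $q = r_p(q) \setminus r_{p'}(q)$, where $p,p'$ are the endpoints of $q$ with $p.x \le p'.x$, and both rays extend in the direction $p' - p$. For each stored rightward ray $r$, again $q$ meets each of $r_p(q),r_{p'}(q)$ in at most one point, so
\[
[q \cap r \neq \emptyset] \;=\; [r_p(q) \cap r \neq \emptyset] \;-\; [r_{p'}(q) \cap r \neq \emptyset].
\]
Summing over the (signed-weight) rightward-ray input reduces the query to exactly two rightward-ray-vs-rightward-ray queries, giving $O(1)$ total.

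The only real obstacle is ensuring that the indicator identities above remain valid in degenerate configurations (collinear segments, coincident endpoints, vertical segments); this is dispatched by a standard symbolic perturbation that preserves the combinatorial intersection pattern. Everything else is bookkeeping, and the reduction uses the group structure only through the introduction of negated weights.
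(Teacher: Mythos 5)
Your proof is correct and is essentially the paper's own argument: the paper likewise first replaces each input segment by the difference of two rightward rays (using negated weights), and then replaces the query segment by the difference of two rightward query rays, yielding $O(1)$ ray--ray queries. Your write-up just makes explicit the indicator identities and the genericity assumption that the paper leaves implicit.
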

\begin{proof}
By subtraction, a line-segment intersection query for $n$ line segments reduces to a line-segment intersection query for two sets of $n$ rightward rays.
By subtraction again, a line-segment intersection query reduces to two rightward-ray intersection queries on the same input set.
\end{proof}

By the above observation, it suffices to solve the problem in the case where all input segments and all query segments are rightward rays.

For a rightward ray $s$, let $\ell(s)$ denote the line through $s$,
let $p(s)$ denote the initial point of $s$, 
let $x(s)$ denote the $x$-coordinate of $p(s)$, and
let $m(s)$ denote the slope of $s$.

There are 4 possible ways in which two rightward rays $s$ and $q$ may intersect (ignoring degeneracies):
\begin{itemize}
\item {\sc Type A:} $x(s)>x(q)$, $m(s)>m(q)$, and $p(s)$ is below $\ell(q)$.
\item {\sc Type A$'$:} $x(s)>x(q)$, $m(s)<m(q)$, and $p(s)$ is above $\ell(q)$.
\item {\sc Type B:} $x(s)<x(q)$, $m(s)>m(q)$, and $p(q)$ is above $\ell(s)$.
\item {\sc Type B$'$:} $x(s)<x(q)$, $m(s)<m(q)$, and $p(q)$ is below $\ell(s)$.
\end{itemize}
\begin{figure}
    \centering
    \includegraphics[page=2,width=0.24\textwidth]{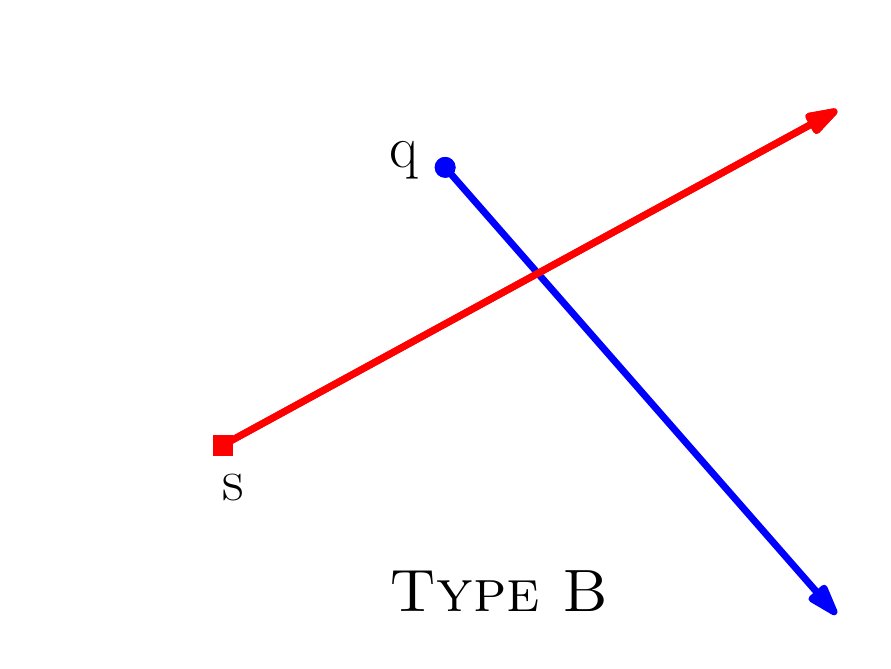}
    \includegraphics[page=1,width=0.24\textwidth]{rays_swapped.pdf}
    \caption{}
    \label{fig:rays}
\end{figure}

See Figure~\ref{fig:rays}.  We focus on counting type A intersections, since all the other types of intersections can be handled in a similar manner
(for example, in type B, $p(q)$ is above $\ell(s)$ iff $\ell(s)^*$ is above $p(q)^*$ by duality).
We follow an approach similar to our data structure for group simplex range stabbing  in Section~\ref{sec:stab:group}.

\begin{definition}\label{def:stab:group}
Let $S$ be the input set of rightward rays.
Define the following types of queries for a given query rightward ray $q$:
\begin{itemize}
\item \emph{level-1 query}: compute the sum of the weights of all $s\in S$ such that $x(s)>x(q)$ and $m(s)>m(q)$.
\item \emph{level-2 query}: compute the sum of the weights of all $s\in S$ such that $x(s)>x(q)$, $m(s)>m(q)$, and $p(s)$ is below $\ell(q)$.
\end{itemize}
\end{definition}

\begin{lemma} \label{lem:seg_group1}
There is a data structure for  level-1 queries as defined above with $O(n^\eps)$ query time, $O(n)$ space, and $O(n\log n)$ preprocessing time for any fixed $\eps > 0$.
\end{lemma}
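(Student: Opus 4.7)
My plan is to observe that a level-1 query is literally a 2D dominance query: mapping each input ray $s\in S$ to the planar point $(x(s),m(s))$, the answer asks for the group sum of weights of points with both coordinates strictly larger than those of $(x(q),m(q))$. I will therefore build a two-level data structure in the same high-branching-factor style used in Section~\ref{sec:group}, but with trivial inner sub-structures since each level handles a single 1D threshold.

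The outer level is a balanced search tree $T$ on the values $\{x(s):s\in S\}$ with branching factor $n^\eps$; this tree has depth $O(1/\eps)$. The canonical subset $S_v$ at a node $v$ consists of the rays whose $x(s)$ lies in the interval assigned to $v$. The inner level, stored at each node $v$, is simply the list $S_v$ sorted by $m(\cdot)$ together with the prefix sums of the weights in that order, so that any query of the form ``sum of weights of $s\in S_v$ with $m(s)>m(q)$'' is answered by one binary search in $O(\log n)$ time.

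To answer a query, I walk $T$ along the search path for $x(q)$. At each level, the children of the current node whose intervals lie entirely to the right of $x(q)$ (there are at most $n^\eps-1$ such siblings) contribute their canonical subsets; this decomposes the $x$-suffix $\{s:x(s)>x(q)\}$ into $O(n^\eps/\eps)=O(n^\eps)$ canonical subsets. For each such $S_v$ I do one binary search against $m(q)$ and read off the corresponding prefix sum, then sum the $O(n^\eps)$ group values. Each ray contributes to one canonical subset per level, so the total space is $O(n/\eps)=O(n)$; preprocessing is dominated by sorting at each level, costing $O(n\log n)$; and the query time is $O(n^\eps\log n)$, which is $O(n^{\eps'})$ after readjusting $\eps$.

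There is essentially no obstacle: the two 1D conditions are orthogonal, so no cuttings or partition trees are needed, and the coarser $O(n^\eps)$ query bound (instead of the $O(\log n)$ that persistent sorted sequences would also give) is entirely sufficient for how this lemma will be plugged into the level-2 data structure. The only minor care needed is the routine counting argument that a 1-sided range query on an $n^\eps$-ary balanced tree of depth $O(1/\eps)$ returns $O(n^\eps)$ canonical subsets.
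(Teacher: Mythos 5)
Your proposal is correct and matches the paper's proof: the paper likewise observes that a level-1 query is a 2D dominance (orthogonal range) query on the points $(x(s),m(s))$ and answers it with a standard range tree of branching factor $N^\eps$, which is exactly the structure you describe in detail. The only cosmetic difference is that you spell out the canonical-subset counting and the inner sorted-list/suffix-sum level, which the paper delegates to the citation.
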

\begin{proof}
Level-1 queries reduce to 2D  orthogonal range searching (by treating $(x(s),m(s))\in \R^2$ as an input point and $(x(q),\infty)\times (m(q),\infty)$ as a query range).  We can use a standard range tree~\cite{BergCKO08} with branching factor $N^\eps$, where $N$ denotes the size of the global point set.
\end{proof}

\begin{theorem} \label{thm:seg_group}
  Intersection searching for $n$ line segments in $\R^2$ with weights from a group can be performed with $O(\sqrt{n})$ query time w.h.p.\ using a randomized data structure with $O(n)$ space 
  and $O(n\log n)$ preprocessing time.
\end{theorem}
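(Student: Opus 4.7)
The plan is to combine the level-1 structure of Lemma~\ref{lem:seg_group1} with a 2D partition tree, following the same strategy as in Sections~\ref{sec:stab:group}--\ref{sec:stab:bit}. By the preceding observation, it suffices to handle rightward-ray intersection queries on rightward-ray input. There are only $O(1)$ types of intersections (A, A$'$, B, B$'$); type A counts are exactly level-2 queries from Definition~\ref{def:stab:group}, and the other types are handled by symmetric variants of the same data structure (for type B, for instance, duality maps the condition ``$p(q)$ above $\ell(s)$'' to ``$\ell(s)^*$ above $p(q)^*$'', reducing it to the same template).

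To answer level-2 queries, apply the 2D partition theorem to the set $\{p(s) : s \in S\}$ with parameter $t=n^\eps$ for a sufficiently small constant~$\eps$. For each cell $\Delta$ that lies completely below $\ell(q)$, the halfplane constraint is automatically satisfied, so it remains to enforce $x(s)>x(q)$ and $m(s)>m(q)$; for this we attach, as the innermost (secondary) structure, a level-1 data structure from Lemma~\ref{lem:seg_group1} on the rays whose initial points lie in~$\Delta$. For each cell $\Delta$ crossed by $\ell(q)$, we recurse as a level-2 query on those rays. Since a line crosses $O(\sqrt{t})$ cells in 2D, this yields the recurrences
\[ S_2(n) \:=\: \max_{n_i\le n/t,\ \sum_i n_i = n}\sum_i S_2(n_i) + O(t)S_1(n/t) + O(t), \]
\[ Q_2(n) \:=\: O(\sqrt{t})\, Q_2(n/t) + O(t)\, Q_1(n/t) + O(t), \]
with $S_1(n)=O(n)$ and $Q_1(n)=O(n^\eps)$. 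With $t=n^\eps$ the recursion has depth $O(\log\log n)$, and the ``inner'' term $O(t)Q_1(n/t)=n^{O(\eps)}$ is dominated by $\sqrt{n}$, so $S_2(n)=O(n\log\log n)$ and $Q_2(n)=\widetilde{O}(\sqrt{n})$ after absorbing the constant-factor blowup.

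The extra polylogarithmic factor in the query time is removed exactly as in Section~\ref{sec:stab:improv}: replace the flat partitions at the outer level by the hierarchical partition theorem (Theorem~\ref{thm:part}) with $t_i = n/n_i$, $n_0=n$, $n_{i+1}=n_i^{1-\eps}$, $\ell=O(\log\log n)$, so that the per-level cost $\sum_i (t_{i+1}/t_i)\, t_i^{1/2} (n/t_i)^\eps$ telescopes into a geometric series summing to $O(\sqrt{n})$. The $\log\log n$ factor in space is removed via the bit-packing tricks of Section~\ref{sec:stab:bit}: we favorably permute the input array so that each partition piece occupies a contiguous subarray, and we plug in a bit-packed analogue of Lemma~\ref{lem:seg_group1} that uses $O((n\log n)/w + n^{1-\Omega(\eps)})$ words of extra space (excluding the input array).

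The main obstacle is verifying that the level-1 structure admits such a bit-packed variant. This is the analogue of Lemma~\ref{lem:simplex_range_stabbing_slow:bit} for 2D orthogonal counting. The level-1 query is a $(x(q),\infty)\times(m(q),\infty)$ dominance counting query, so a range tree with branching factor $N^\eps$ has $O(1)$ depth; at the leaves, which have size $N^\eps$, we store only a plain list of pointers into the input array, using $O((n\log n)/w)$ total bits, and internal nodes store $O(N^{1-\Omega(\eps)})$ precomputed group sums. Plugging this into the $S_2$ recurrence, the $(n\log n)/w$ term forms a geometric series and the sublinear term is absorbed, yielding $S_2(n)=O(n)$ words since $w\ge\log n$. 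Combined with the hierarchical partition tree, this gives $O(n)$ space, $O(\sqrt{n})$ query time w.h.p., and $O(n\log n)$ preprocessing time, as claimed.
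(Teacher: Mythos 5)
Your proposal is correct and follows essentially the same route as the paper: reduce to level-2 queries on rightward rays, apply the partition theorem to $\{p(s):s\in S\}$ with the level-1 structure of Lemma~\ref{lem:seg_group1} as the inner level, then remove the polylogarithmic query overhead via the hierarchical partition theorem and the $\log\log n$ space overhead via bit packing. Your explicit verification that the level-1 range tree admits a bit-packed variant with $O((n\log n)/w + n^{1-\Omega(\eps)})$ words of extra space is a detail the paper leaves implicit, and it is argued correctly.
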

\begin{proof}
By applying the partition theorem to the point
set $\{p(s): s\in S\}$, we obtain the following recurrences for the space and query time for level-2 queries:
\[ S_2(n) \:=\: \max_{n_i\le n/t,\ \sum_i n_i = n} \sum_i S_2(n_i) + tS_1(n/t) + O(n) \]
\[ Q_2(n) \:=\: O(\sqrt{t})Q_2(n/t) + O(t)Q_1(n/t) + O(t), \]
with $S_1(n)=O(n)$ and  $Q_1(n)=O(n^\eps)$ by Lemma~\ref{lem:seg_group1}.
(The preprocessing time analysis is similar.)

We choose $t=n^\eps$.
The recursion depth is $O(\log\log n)$. 
Note that the $O(t)Q_1(n/t)$ term is $n^{O(\eps)}\ll \sqrt{n}$.
This gives $S_2(n)=O(n\log\log n)$ and $Q_2(n)=\OO(\sqrt{n})$.

As in Section~\ref{sec:stab:improv},
the extra polylogarithmic factor in the query time bound can be removed, by using the hierarchical partition theorem for the outer level.

As in Section~\ref{sec:stab:bit},
the extra $\log\log n$ factor in space can also be improved, by the same bit-packing tricks.
\end{proof}

We note that Bar-Yehuda and Fogel's work from the 1990s~\cite{Bar-YehudaF94} actually already used the same subtraction trick and similar ideas with comparisons of $x$-coordinates and slopes, but their multi-level data structure used a more conventional order of the levels that resulted in multiple extra logarithmic factors.

\subsection{Segment intersection reporting}

For segment intersection reporting, the subtraction trick is no longer applicable.  We follow an approach similar to our data structure for simplex range stabbing reporting in Section~\ref{sec:stab:report}.

For a line segment $s$, let $\ell(s)$ denote the line through $s$, 
let $p_L(s)$ (resp.\ $p_R(s)$) denote the left (resp.\ right) endpoint of $s$, and
let $x_L(s)$ (resp.\ $x_R(s)$) denote the $x$-coordinate of $p_L(s)$ (resp.\ $p_R(s)$).

There are 8 possible ways in which
two line segments $s$ and $q$ may intersect (ignoring degeneracies):
\begin{itemize}
\item {\sc Type A:} $x_L(s)<x_L(q)<x_R(s)<x_R(q)$,
$p_L(q)$ is above $\ell(s)$,
and 
$p_R(s)$ is above $\ell(q)$.
\item {\sc Type A$'$:} $x_L(s)<x_L(q)<x_R(s)<x_L(q)$,
$p_L(q)$ is below $\ell(s)$,
and 
$p_R(s)$ is below $\ell(q)$.
\item {\sc Type B:} $x_L(q)<x_L(s)<x_R(q)<x_L(s)$,
$p_L(s)$ is below $\ell(q)$,
and 
$p_R(q)$ is below $\ell(s)$.
\item {\sc Type B$'$:} $x_L(q)<x_L(s)<x_R(q)<x_L(s)$,
$p_L(s)$ is above $\ell(q)$,
and 
$p_R(q)$ is above $\ell(s)$.
\item {\sc Type C:} $x_L(s)<x_L(q)<x_R(q)<x_R(s)$,
$p_L(q)$ is above $\ell(s)$,
and 
$p_R(q)$ is below $\ell(s)$.
\item {\sc Type C$'$:} $x_L(s)<x_L(q)<x_R(q)<x_R(s)$,
$p_L(q)$ is below $\ell(s)$,
and
$p_R(q)$ is above $\ell(s)$.
\item {\sc Type D:} $x_L(q)<x_L(s)<x_R(s)<x_R(q)$,
$p_L(s)$ is below $\ell(q)$, 
and 
$p_R(s)$ is above $\ell(q)$.
\item {\sc Type D$'$:} $x_L(q)<x_L(s)<x_R(s)<x_R(q)$,
$p_L(s)$ is above $\ell(q)$,
and 
$p_R(s)$ is below $\ell(q)$.
\end{itemize}
\begin{figure}
    \centering
    \includegraphics[page=1,width=0.24\textwidth]{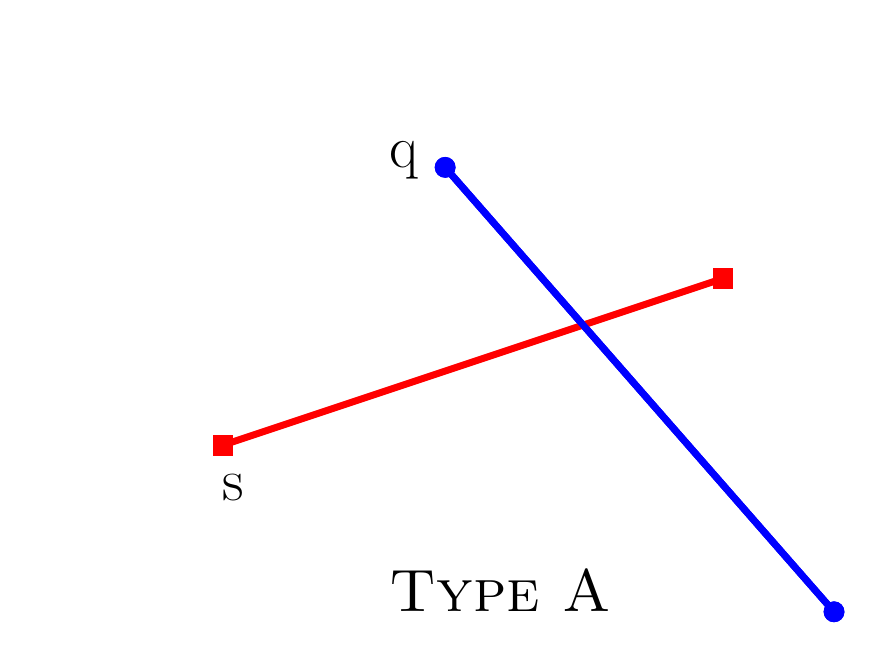}
    \includegraphics[page=2,width=0.24\textwidth]{line_seg_old.pdf}
    \includegraphics[page=3,width=0.24\textwidth]{line_seg_old.pdf}
    \includegraphics[page=4,width=0.24\textwidth]{line_seg_old.pdf}
    \caption{}
    \label{fig:line_seg}
\end{figure}

See Figure~\ref{fig:line_seg}.
We focus on reporting type A intersections, since all the other types of intersections can be handled in a similar manner (as all of these involve three 1D constraints along with two 2D halfplane constraints after the appropriate dualizations).

\begin{definition}
Let $S$ be the input set of line segments.
Define the following types of queries for 
a query segment $q$:
\begin{itemize}
\item \emph{level-1 query}: report all $s\in S$ such that $p_L(q)$ is above $\ell(s)$;
\item \emph{level-2 query}: report all $s\in S$ such that $p_L(q)$ is above $\ell(s)$ and $x_L(s)<x_L(q)$;
\item \emph{level-3 query}: report all $s\in S$ such that $p_L(q)$ is above $\ell(s)$ and $x_L(s)<x_L(q)<x_R(s)$;
\item \emph{level-4 query}: report all $s\in S$ such that $p_L(q)$ is above $\ell(s)$ and $x_L(s)<x_L(q)< x_R(s)< x_R(q)$;
\item \emph{level-5 query}: report all $s\in S$ such that $p_L(q)$ is above $\ell(s)$, $x_L(s)< x_L(q)< x_R(s)< x_R(q)$, and $p_R(s)$ is above $\ell(q)$.
\end{itemize}
\end{definition}

\begin{lemma} \label{lem:seg_report1}
There is a data structure for level-1 queries as defined above with $O(\log n+k)$ query time,
$O(n)$ space, and $O(n\log n)$ preprocessing time.
\end{lemma}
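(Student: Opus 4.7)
The plan is to observe that a level-1 query is essentially a halfplane reporting query in the dual plane, and then invoke the known optimal 2D halfplane reporting data structure.

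First I would set up the duality. For each input segment $s \in S$, the condition ``$p_L(q)$ is above $\ell(s)$'' depends only on the supporting line $\ell(s)$ and on the single point $p_L(q)$. Dualize every supporting line $\ell(s)$ to a point $\ell(s)^*$, and dualize the query point $p_L(q)$ to a line $p_L(q)^*$, using the standard duality in which ``point above line'' in the primal corresponds to ``line above point'' (equivalently, ``point below line'') in the dual. Then reporting all $s$ with $p_L(q)$ above $\ell(s)$ is equivalent to reporting all of the $n$ dual points $\{\ell(s)^* : s \in S\}$ that lie in a specified halfplane bounded by $p_L(q)^*$.

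Next I would just plug in a known 2D halfplane reporting data structure. The $d=2$ case of Lemma~\ref{lem:halfspace_range_reporting2} (or the classical halfspace reporting result of Chazelle--Guibas--Lee / Chan) supports halfplane reporting on $n$ points in the plane in $O(\log n + k)$ query time with $O(n)$ space and $O(n\log n)$ preprocessing time. Constructing this data structure on the $n$ dual points gives exactly the bounds claimed in the lemma: $O(n)$ space, $O(n\log n)$ preprocessing, and $O(\log n + k)$ query time for a single level-1 query. Because we only need to associate each reported dual point back to its originating segment $s$, the output can be reported directly in terms of segments in $S$ without any additional overhead.

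There is no real obstacle here: unlike the higher levels (which must combine an $x$-coordinate interval constraint on $p_L(q)$, a nesting condition on $x_R(s)$ and $x_R(q)$, and a second halfplane constraint involving $\ell(q)$), level-1 queries involve only a single 2D halfplane constraint, so all logarithmic factors one might otherwise lose to multi-leveling are absent. The only mild care required is consistency in the orientation of the duality, which is routine.
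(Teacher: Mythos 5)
Your proposal is correct and matches the paper's proof, which also observes that a level-1 query is exactly a 2D halfplane range reporting query in the dual and invokes the optimal Chazelle--Guibas--Lee structure with $O(\log n+k)$ query time, $O(n)$ space, and $O(n\log n)$ preprocessing. Nothing further is needed.
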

\begin{proof}
Level-1 queries are just halfplane range reporting queries in two dimensions~\cite{ChazelleGL85} in the dual.
\end{proof}

\begin{lemma} \label{lem:seg_report4}
There is a data structure for level-4 queries as defined above with $O(n^\eps +k)$ query time,
$O(n)$ space, and $O(n\log n)$ preprocessing time for any fixed $\eps > 0$.
\end{lemma}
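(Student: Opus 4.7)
}
The plan is to build a multi-level data structure by layering three ordinary one-dimensional balanced search trees (with large branching factor) on top of the level-1 halfplane reporting data structure from Lemma~\ref{lem:seg_report1}. Note that a level-$j$ query for $j=2,3,4$ simply adds one 1D constraint on top of a level-$(j-1)$ query: level-1$\to$2 adds the constraint $x_L(s)<x_L(q)$, level-2$\to$3 adds $x_L(q)<x_R(s)$, and level-3$\to$4 adds $x_R(s)<x_R(q)$.  Each such constraint is a one-sided inequality on a single real coordinate of the input segment, which a BBST is the canonical tool for.

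Each of the three outer levels will be a BBST with branching factor $t=N^\eps$ (where $N$ is the global input size) on the appropriate coordinate ($x_L(s)$ or $x_R(s)$).  At every node we recursively store a level-$(j-1)$ structure on its canonical subset.  Because $\log_t N = O(1/\eps)=O(1)$, each tree has $O(1)$ depth and each input segment appears in only $O(1)$ canonical subsets per level.  This yields recurrences
\[ S_j(n) \:=\: O(t)\, S_{j-1}(n/t) + O(n), \qquad Q_j(n) \:=\: O(t)\, Q_{j-1}(n/t) + O(t), \]
with $S_1(n)=O(n)$ and $Q_1(n)=O(\log n + k_1)$ from Lemma~\ref{lem:seg_report1}, and with analogous preprocessing recurrences giving $O(n\log n)$ total.

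Since the recursion has $O(1)$ depth and the $O(n)$ term dominates, the space stays at $O(n)$, and similarly preprocessing stays at $O(n\log n)$.  On the query side, a level-4 query is decomposed into $O(t^3)=O(N^{O(\eps)})$ canonical subsets at the innermost level, each answered in $O(\log n + k_i)$ time by Lemma~\ref{lem:seg_report1}; summing the output terms gives a total of $O(N^{O(\eps)}\log n + k)$, which is $O(n^\eps+k)$ after readjusting $\eps$ by a constant factor (as is done throughout the paper).

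There is no real obstacle here; the construction is essentially routine once one observes that each of the three added constraints is just a 1D inequality on a single coordinate.  The only point that needs a brief sanity check is that using BBSTs with branching factor $N^\eps$ (rather than the ``rearrangement of levels'' principle highlighted earlier) is already enough in this case, because the innermost level-1 structure is so light in both space ($O(n)$) and query time ($O(\log n + k)$) that the standard multi-level machinery in $O(1)$ depth suffices.
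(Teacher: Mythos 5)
Your proposal is correct and follows essentially the same route as the paper: each of the three added conditions is a one-sided 1D constraint handled by a range-tree level with branching factor $N^{\eps}$ (the paper phrases this as repeated one-dimensional partitions and notes the three levels ``essentially correspond to a range tree''), with the halfplane reporting structure of Lemma~\ref{lem:seg_report1} at the innermost level and $O(1)$ recursion depth giving $O(N^{O(\eps)}\log N+k)$ query time and linear space. The only cosmetic difference is that your displayed recurrences omit the self-recursion terms $tS_j(n/t)$ and $Q_j(n/t)$ appearing in the paper's version, but your accompanying accounting (constant depth, $O(1)$ canonical subsets per element, $O(t^3)$ canonical subsets per query) yields the same bounds.
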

\begin{proof}
To reduce the level-2 query problem to the level-1 query problem,
we can just use a one-dimensional partition:
form $t$ intervals each containing $n/t$ values in $\{ x_L(s): s\in S\}$, and recurse on the corresponding $t$ subsets of size $n/t$.

Similarly, we can reduce level-3 queries to level-2 queries, and level-4 queries to level-3.  (These 3 levels of the data structure essentially correspond to a range tree~\cite{BergCKO08}.)

Thus, for $j\in\{2,3,4\}$, we have following recurrences for the space and query time for level-$j$ queries (ignoring the ``$+k$'' reporting cost):
\[ S_j(n) \:=\:  tS_j(n/t) + tS_{j-1}(n/t) + O(n) \]
\[ Q_j(n) \:=\: Q_j(n/t) + O(t)Q_{j-1}(n/t) + O(t), \]
with $S_1(n)=O(n)$ and $Q_1(n)=O(\log n)$ by Lemma~\ref{lem:seg_report1}.

We choose $t=N^\eps$ where $N$ is the global input size, to ensure that the recursion depth is $O(1)$.  This gives
$S_4(N)=O(N)$ and $Q_4(N)=O(N^{O(\eps)})$.
(The preprocessing time analysis is similar.)
\end{proof}

\begin{theorem} \label{thm:seg_report}
  Segment intersection reporting for $n$ line segments in $\R^2$
  can be performed with $O(\sqrt{n}+k)$ query time w.h.p.\ (where $k$ is the output size) using a randomized data structure with $O(n)$ space and $O(n\log n)$ preprocessing time.
\end{theorem}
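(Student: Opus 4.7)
My plan is to handle level-5 queries by adding one more outer partition-tree level on top of the level-4 data structure from Lemma~\ref{lem:seg_report4}, in direct analogy with the proof of Theorem~\ref{thm:simplex_stab_reporting}. Since type A intersections reduce to level-5 queries and all other seven types can be treated symmetrically (after appropriate duality and reflection), it suffices to build a data structure for level-5. The final constraint separating level-5 from level-4 is the 2D halfplane constraint ``$p_R(s)$ lies above $\ell(q)$,'' which dualizes to a halfplane containment constraint on the points $\ell(s)^*$. I would apply the hierarchical partition theorem (Theorem~\ref{thm:part}) to the point set $\{\ell(s)^*: s \in S\}$, and at each cell of the partition I would store the input rays in the corresponding canonical subset together with a secondary level-4 structure from Lemma~\ref{lem:seg_report4}.

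For a query $q$, the line $\ell(q)$ dualizes to a point, and its corresponding halfplane in the dual crosses $O(t^{1-1/2}) = O(\sqrt{t})$ cells of a level-$t$ simplicial partition; for each crossed cell I recurse on the subset of size at most $n/t$, and for each of the $O(t)$ cells lying entirely on the correct side I invoke the secondary level-4 structure (which already handles all the remaining three 1D constraints and the first halfplane constraint). This yields the recurrences
\[
S_5(n) \:=\: \max_{n_i\le n/t,\ \sum_i n_i = n} \sum_i S_5(n_i) + O(t)\, S_4(n/t) + O(n),
\]
\[
Q_5(n) \:=\: O(\sqrt{t})\, Q_5(n/t) + O(t)\, Q_4(n/t) + O(t),
\]
(ignoring the ``$+k$'' reporting term), with $S_4(n) = O(n)$ and $Q_4(n) = O(n^\eps)$.

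I would choose $t = n^\eps$ for a sufficiently small constant $\eps$, so the recursion has depth $O(\log\log n)$ and the $O(t)\,Q_4(n/t) = n^{O(\eps)}$ term is negligible compared to $\sqrt{n}$. The query recurrence solves to $Q_5(n) = \OO(\sqrt{n})$ and the space recurrence to $S_5(n) = O(n\log\log n)$, with $O(n\log n)$ preprocessing by a similar geometric-series analysis.

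The final step is to shave the remaining polylogarithmic factors. To remove the extra $\log\log n$ factor in the query time, I would replace the simplicial partitions at the outermost level with the hierarchical partition tree of Theorem~\ref{thm:part}, using the sequence $t_i = n/n_i$, $n_0 = n$, $n_{i+1} = n_i^{1-\eps}$, exactly as in Section~\ref{sec:stab:improv}; the crossing numbers telescope into a geometric series summing to $O(\sqrt{n})$. To remove the $\log\log n$ factor in space, I would apply the bit-packing trick from Section~\ref{sec:stab:bit}: by permuting the input array to keep each canonical subset contiguous and by observing that the level-4 structure of Lemma~\ref{lem:seg_report4} can be built using only $O((n\log n)/w + n^{1-\Omega(\eps)})$ words of extra space (it uses $O(1)$ recursion depth with constantly many pointer-list base cases, exactly as in Lemma~\ref{lem:simplex_range_stabbing_slow:bit}), the extra-space recurrence collapses to $O(n)$ under $w \ge \log n$. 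The main obstacle, as with the stabbing case, is verifying that this sublinear-extra-space variant of level-4 is compatible with the outer partition tree's use of canonical subsets; this requires no new ideas beyond the contiguous-subarray permutation argument from Theorem~\ref{thm:simplex_range_stabbing:new}.
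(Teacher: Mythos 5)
Your overall architecture is exactly the paper's: build level-5 on top of Lemma~\ref{lem:seg_report4} with one more partition-tree level handling the second 2D halfplane constraint, write down the same space/query recurrences, choose $t=n^\eps$, then shave the $\log\log n$ in the query time via the hierarchical partition theorem and the $\log\log n$ in space via the bit-packing/contiguous-subarray trick. The recurrences, the observation that $O(t)Q_4(n/t)=n^{O(\eps)}\ll\sqrt n$, and both clean-up steps all match Theorem~\ref{thm:seg_report} as proved in the paper.

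However, there is a concrete error in the geometric setup of the outer level: you apply the partition theorem to the point set $\{\ell(s)^*: s\in S\}$, but the constraint separating level~5 from level~4 is ``$p_R(s)$ lies above $\ell(q)$,'' which is a condition on the \emph{right endpoints} of the input segments, not on the duals of their supporting lines. Writing $\ell(s): y=a_sx-b_s$ and $p_R(s)=(x_R(s),\,a_sx_R(s)-b_s)$, the condition ``$p_R(s)$ above $\ell(q)$'' reads $(a_s-a_q)x_R(s)>b_s-b_q$; for a fixed query this is \emph{not} a halfplane in the $(a_s,b_s)$-plane because it also depends on $x_R(s)$, which varies freely among the segments in a cell. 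So a cell of a simplicial partition of $\{\ell(s)^*\}$ is never guaranteed to lie ``entirely on the correct side'': two segments with arbitrarily close $\ell(s)^*$ can have their right endpoints on opposite sides of $\ell(q)$. The set $\{\ell(s)^*\}$ is the one relevant to the level-1 constraint ``$p_L(q)$ above $\ell(s)$'' (already dispatched to halfplane range reporting at the innermost level), not to the level-5 constraint. The fix is to partition the primal points $\{p_R(s): s\in S\}$ directly, so that the query line $\ell(q)$ crosses $O(\sqrt t)$ cells and every uncrossed cell above $\ell(q)$ consists entirely of segments satisfying the level-5 constraint; with that substitution the rest of your argument goes through verbatim and coincides with the paper's proof.
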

\begin{proof}
By applying the partition theorem to the point
set $\{p_R(s): s\in S\}$, we obtain the following recurrences for the space and query time for level-5 queries:
\[ S_5(n) \:=\:  \max_{n_i\le n/t,\ \sum_i n_i = n} \sum_i S_5(n_i) + tS_4(n/t) + O(n) \]
\[ Q_5(n) \:=\: O(\sqrt{t})Q_5(n/t) + O(t)Q_4(n/t) + O(t), \]
with $S_4(n)=O(n)$ and $Q_4(n)=O(n^\eps)$ by Lemma~\ref{lem:seg_report4}.

We choose $t=n^\eps$.
The recursion depth is $O(\log\log n)$.  Note that the $O(t)Q_4(n/t)$ term is $n^{O(\eps)}\ll \sqrt{n}$. This gives $S_5(n)=O(n\log\log n)$ and $Q_5(n)=\OO(\sqrt{n})$.
(The preprocessing time analysis is similar.)

As in Section~\ref{sec:stab:improv},
the extra polylogarithmic factor in the query time bound can be removed, by using the hierarchical partition theorem for the outermost level.

As in Section~\ref{sec:stab:bit},
the extra $\log\log n$ factor in space can also be improved, by the same bit-packing tricks.
\end{proof}

Note the unconventional order in the above multi-level data structure (similar to our data structures for simplex range reporting and range stabbing reporting): the innermost level deals with one 2D halfplane constraint, the middle levels deal with 1D constraints, and the outermost level deals with the second 2D halfplane constraint.

\IGNORE{
*****************

We begin by describing a linear space 
data structures for counting the number of lines intersected
by a query line segment
and a data structure for counting the number of line segments 
intersected by a query line.

\subsection{Data structures for counting incidences between line segment and lines}

First we consider the case of counting the number of times a query line segment $s$ intersects a collection of $n$ lines $\L$.

In the dual, the line segment becomes a wedge, and the problem becomes detecting whether the edge contains any point dual to
the lines of $\L$. This problem can be solved known data structures for simplex range querying. Chan's partition tree gives a randomized $O(n)$ space algorithm with $O(\sqrt{n})$ query time. Thus Lemma~\ref{lem:lineseg-line} follows.

\begin{lemma}[Line segment intersection counting among lines]\label{lem:lineseg-line}
  There exists a data structure for a collection $\L$ of $n$ lines in $\R^2$ for counting the 
  number of intersections with a query line segment that uses $O(n)$ space and has $O(\sqrt{n})$ query time. This data structure can be constructed in time $O(n\log n)$ with high probability.
\end{lemma}

Next we will consider detecting whether a query line $\ell$ intersects a collection of $n$ line segments $S$. 

\begin{lemma}[Line intersection counting among line segments] \label{lem:line-lineseg}
  There exists a data structure for $n$ line segments $S$ in $\R^2$ for counting the number of intersections with a query line 
  that uses $O(n\log\log n)$ space 
  and has $O(\sqrt{n})$ query time w.h.p.
  This data structure can be constructed in time $O(n\log n)$ w.h.p.
\end{lemma}
\begin{proof}
Let $\ell$ be a query line.
The line segment of $S$ dualizes to $n$ double wedges, and the query line $\ell$ dualizes to a single point $\ell^*$.
Counting the number of line segments of $S$ intersecting $\ell$ is equal to counting the number of double wedges stabbed by $\ell^*$.
This is a special case of Theorem~\ref{thm:simplex_range_stabbing}, which immediately gives a $O(n\log\log n)$ space data structure
and has $O(\sqrt{n})$ query time w.h.p.
%
%
%
\end{proof}

\subsection{Data structure for counting line segment intersections}

Now we have enough to prove the following result on line segment queries. 

\begin{theorem}[Line segment counting among line segments] \label{thm:lineseg-lineseg-log}
  There exists a data structure for detecting if a query line segment intersects any of $n$ line segments $S$ in $\R^2$ with $O(n\log n\log\log n)$ space and has $O(\sqrt{n})$ query time. This data structure can be constructed in time $O(n\log^2 n)$ with high probability.
\end{theorem}
\begin{proof}
We proceed by building a segment tree-like data structure over the line segments $S$.
Each node $u$ of the segment tree corresponds to a vertical slab $\sigma$.
A segment of is called \emph{long} if both end points lie on the boundary of $\sigma$,
otherwise we call it \emph{short}. 
At every node we will only store the short segments.

Build the data structure $\mathcal{B}$ of Lemma~\ref{lem:line-lineseg} for all short line segments in the slab.
For short segments, we divide the slab in half by such that each piece has an
equal number of endpoints. This line $\ell$ has $x$-coordinate of the median endpoint.
Some line segments will become long in one of the slabs.
For these long segments we build the data structure $\A$ of Lemma~\ref{lem:lineseg-line}. 

If the query line segment is long within a slab $u$, we can handle this with $\mathcal{B}$. 
Otherwise, we can use $\A$ to handle the long segments 
and recurse on whichever side contains the line segment (possibly both).
Since a line segments have $2$ endpoints, we can charge the recursion to one of the end points, so at every
level there are at most $2$ slabs where the query line segment is short.
The query time $Q(n)$ on a slab with $n$ line segments can be bounded by summing 
the cost per level:
\[ Q(n) = \sum_{i=0}^{\lfloor \log n\rfloor} O\left(\sqrt{n/2^i}\right) = O(\sqrt{n}) \]
The space usage $S(n)$ of the data structure follows the recurrence:
\[ S(n) = 2S(n/2) + O(n\log\log n) \]
Thus the space usage $S(n) = O(n\log n\log \log n)$.
\end{proof}

\subsection{Improving the space complexity} \label{sec:r-trick}

To reduce the space usage we can use an increasing sequence of branching factors. 
On the highest level of the segment tree, we can't afford to choose a larger branching factor because of the high cost of the secondary data structures.
Instead, we will choose a sequence of increasing branching factors $b_0, b_1, ... b_\ell$ with $b_i = 2^{c^{i+1}-c^i}$ for some $c>1$ to form a tree with $\ell = \Theta(\log\log n)$ levels.
On level $i$, each slab has at most $n_i = n/\prod_{j=0}^{i-1} b_i = O(n/2^{c^i-1})$ points. 
As a node at level $i$ a query line segment can be long in up to $2b_i$ tree nodes,
we can compute bound the total query time by
\[ Q(n) \le \sum_{i=0}^\ell 2b_i\sqrt{n_i} \le \sum_{i=0}^\ell O(2^{c^{i+1}-(3/2)c^i} \cdot \sqrt{n}) \]
Choosing $1 < c < 3/2$ results in $2^{c^{i+1}-(3/2)c^i}$ to decrease faster than a geometric sequence, so $Q(n) = O(\sqrt{n})$.
Since the slabs partition the entire space, the space usage per level is $O(n\log \log n)$, so the total space usage is $O(n(\log \log n)^2)$.

\begin{theorem}[Line segment intersection counting among line segments] \label{thm:lineseg-lineseg}
  There exists a data structure for detecting if a query line segment intersects any of $n$ input line segments $S$ in $\R^2$ with $O(n(\log \log n)^2)$ space and has $O(\sqrt{n})$ query time. This data structure can be constructed in time $O(n\log n\log \log n)$ with high probability.
\end{theorem}

Note that for the reporting version, we can replace our partition tree data structures with structures for reporting, but all other details are the same.
\begin{theorem}[Line segment intersection reporting among line segments] \label{thm:lineseg-lineseg_reporting}
  There exists a data structure for reporting all $k$ intersections of a query line segment with any of $n$ input line segments $S$ in $\R^2$ with $O(n(\log \log n)^2)$ space and has $O(\sqrt{n}+k)$ query time. This data structure can be constructed in time $O(n\log n\log \log n)$ with high probability.
\end{theorem}

\DAVID{scrap the below parts in exchange for random version}
\subsection{Data structure for ray shooting among lines}

\begin{lemma}[Ray shooting among lines in $\R^2$] \label{lem:ray_shooting_lines}
  There exists a randomized data structure for ray shooting among $n$ lines $S$ in $\R^2$ with $O(n)$ space and has $O(\sqrt{n})$ w.h.p. 
\end{lemma}
\begin{proof}
\DAVID{TODO for Timothy}
Take partition, then take a point in each partition, which will define two lines...
\end{proof}

}

\subsection{Ray shooting among line segments}

We can apply our new result for segment intersection reporting to solve the ray shooting problem for
line segments, via a simple randomized black-box reduction (e.g., see~\cite{Chan96} for a similar reduction for a different problem):

\begin{corollary} \label{cor:ray_shooting}
  There exists a randomized data structure for ray shooting among $n$ line segments $S$ in $\R^2$ with
  $O(n)$ space and $O(n\log n)$ preprocessing time such that each query takes $O(\sqrt{n})$ time w.h.p.
\end{corollary}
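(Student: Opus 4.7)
The plan is a standard random-sampling reduction from ray shooting to segment intersection reporting, using the structure of Theorem~\ref{thm:seg_report} as a black box.

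Specifically, I would build the segment intersection reporting structure of Theorem~\ref{thm:seg_report} on $S$ with $O(n)$ space and $O(n\log n)$ preprocessing time. In addition, take a random sample $R\subseteq S$ of size $r = \Theta(\sqrt{n}\log n)$, and recursively build a ray shooting data structure on $R$ (terminating the recursion once the input size drops below a constant). To answer a ray shooting query for a ray $\rho$: first find the first segment of $R$ hit by $\rho$ using the recursive structure, calling that intersection point $p^*$ (or take $p^*$ at infinity if $R$ contributes no hit); then invoke Theorem~\ref{thm:seg_report} on the segment from the origin of $\rho$ to $p^*$ to report all $k$ intersecting segments of $S$, and return the one whose intersection with $\rho$ is closest to the origin. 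Correctness is immediate, since any segment hitting $\rho$ earlier than $p^*$ also hits this truncated segment.

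For the analysis, the claim is that with high probability, for every query ray $\rho$ the number $k$ of reported intersections is $O((n/r)\log n)$. Fixing $\rho$ and an integer $\ell$, the probability that $R$ misses all of the first $\ell$ segments hit by $\rho$ is $(1-\ell/n)^r \le e^{-\ell r/n}$, which is at most $n^{-c}$ for any prescribed constant $c$ once $\ell = \Theta((n/r)\log n)$. A union bound over the $n^{O(1)}$ combinatorially distinct ``first-$\ell$-hit prefixes'' (each determined by the ordered identity of the first few segments met, and having bounded description complexity) yields the claim. Plugging in $r=\Theta(\sqrt{n}\log n)$ gives $k = O(\sqrt{n})$ w.h.p., so the reporting call costs $O(\sqrt{n})$, while the recursive ray shooting on $R$ costs only $O(\sqrt{r}) = O(n^{1/4}\sqrt{\log n})$. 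Since the recursion has $O(\log\log n)$ levels and each level's work dominates the next by a polynomial factor, the total space and preprocessing satisfy $S(n)=S(r)+O(n) = O(n)$ and $T(n)=T(r)+O(n\log n) = O(n\log n)$.

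The main obstacle is obtaining the \emph{high-probability} (rather than merely expected) bound on $k$: a direct expectation argument only gives $\mathbb{E}[k] = O(n/r)$, and saving the extra $\log n$ factor in the sample size to match requires the union-bound argument over combinatorial ray types sketched above. If only an expected-time guarantee were required, one could simply take $r = \Theta(\sqrt{n})$ and argue more directly.
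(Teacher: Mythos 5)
Your proposal is correct and is essentially the paper's own argument: sample $R\subseteq S$, recursively ray-shoot in $R$, then truncate the query ray at the first hit in $R$ and finish with one call to the segment intersection reporting structure of Theorem~\ref{thm:seg_report}, justified by a standard $\eps$-net/union-bound argument. The only difference is the sample size --- the paper takes $|R|=n/2$ so that $k=O(\log n)$ w.h.p.\ and the query recurrence $Q(n)=Q(n/2)+O(\sqrt{n})$ telescopes geometrically, whereas you take $|R|=\Theta(\sqrt{n}\log n)$ and tolerate $k=O(\sqrt{n})$; both parameter choices yield the same $O(n)$ space, $O(n\log n)$ preprocessing, and $O(\sqrt{n})$ query bounds.
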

\begin{proof}
  Take a random subset $R\subset S$ of size $n/2$, recursively build a ray shooting data structure for $R$, and build a segment intersection reporting structure for $S$.
  To answer a ray shooting query for a ray $q$,
  we first recursively find the first point $p$ hit by $q$ in $R$.  Let $\overline{q}$ be the line segment going from the initial point of $q$ to this point $p$.  Since the interior of $\overline{q}$
  does not intersect any segments of $R$, a standard
  $\eps$-net argument~\cite{Chan96} implies that $\overline{q}$ intersects only $k=O(\log n)$ segments of $S$ w.h.p.
  We can enumerate all of these segments by the segment intersection reporting structure, and return the first one hit.
  
  Using the segment intersection reporting structure from Theorem~\ref{thm:seg_report}, we get the following recurrences for the space and query time of the new data structure:
  \[ S(n) \:=\: S(n/2) + O(n) \]
  \[ Q(n) \:=\: Q(n/2) + O(\sqrt{n}), \]
  implying that $S(n)=O(n)$ and $Q(n)=O(\sqrt{n})$.  (The preprocessing time analysis is similar.)
\end{proof}

As in the remark in Section~\ref{sec:stab:bit}, the query time can be further reduced to $O(\sqrt{n}/\log^{\Omega(1)}n)$ for intersection counting and reporting and ray shooting, by bit-packing tricks.

\IGNORE{
The problem of ray shooting among line segments

A data structure for line emptiness among line segments and ray shooting among lines 
are the ingredients necessary to be able to build a data structure for general ray shooting for line segments.
Lemma~\ref{lem:line-lineseg} and Lemma~\ref{lem:ray_shooting_lines} provide the necessary data structures.
This approach is not new, Cheng and Janardan \cite{cheng1992} first used this approach to achieve a data structure with $O(\sqrt{n}\log n)$ query time using $O(n\log^2 n)$ space.
Wang reduced the space usage to $O(n\log n)$ \cite{wang2020} by using different data structures for Lemma~\ref{lem:line-lineseg} and Lemma~\ref{lem:ray_shooting_lines}. For completeness, we give the full details of the proof.

\begin{theorem}[Ray shooting among line segments in $\R^2$] \label{thm:ray_shooting_weak}
  There exists a randomized data structure for ray shooting among $n$ line segments $S$ in $\R^2$ with $O(n\log n \log\log n)$ space and each query takes $O(\sqrt{n})$ time w.h.p.
\end{theorem}

\begin{proof}[Proof of Theorem~\ref{thm:ray_shooting_weak}]
  We begin by building a segment tree data structure over the points.
Each node $u$ of the segment tree corresponds to a vertical slab $\sigma(u)$.
As before, a segment of is called \emph{long} if both end points lie on the boundary of $\sigma(u)$,
otherwise we call it \emph{short}. Let $S(u)$ denote all short segments that lie within $\sigma(u)$ and $L(u)$ denote the long segments.
We will build data structure $\A(u)$ from Lemma~\ref{lem:line-lineseg} on $S(u)$ (in fact we only need emptiness queries) and data structure $\B(u)$ from Lemma~\ref{lem:ray_shooting_lines} on $L(u)$.
Since $|S(u)|$ and $|L(u)|$ are bounded by the number of endpoints within a slab, they decrease geometrically in deeper slabs, so the total space usage is $O(n\log n)$.

For a query point $p$, we will assume without loss of generality that we are doing a ray shooting query in the positive $x$ direction.
Suppose that $p$ lives in the vertical slab $\sigma(u)$ corresponding to the node $u$ with left child $v$ and right child $w$.
We can query data structure $\B(u)$, and also recurse on whichever child node $v$ or $w$ that $p$ lies in.
If $p$ lies in $w$, the recursion (in addition to comparing with our query to $\B(u)$) will be enough.
If $p$ lies in $v$, if the recursion returns that there is no line segment that is intersected in slab $\sigma(v)$,
and the result of the query to $\B(u)$ returns an intersection point in $w$,
then we must see if there is a line segment in $S(w)$ or $L(w)$ that is hit by our query ray. 
In this case, the query ray comes from outside of slab $\sigma(w)$ and thus can be treated like a line for the remainder of the algorithm.
Hence, we can use data structure $\A(u)$ to first check if there's any intersection.
If there is, we can recurse in $w$ to find the actual intersection.

Note that in the segment tree, we will visit at most two nodes per level, the node containing $p$, and potentially a second node containing our answer.
As the number of line segments in a slab is bounded by the number of end points in its parent slab,
it decreases geometrically as we descend the segment tree,
so the runtime of a query is bounded by $O(\sqrt{n})$.
\end{proof}

A usage of the increasing branching factor technique of Section~\ref{sec:r-trick} improves the space usage.

\begin{theorem}[Ray shooting among line segments in $\R^2$] \label{thm:ray_shooting}
  There exists a randomized data structure for ray shooting among $n$ line segments $S$ in $\R^2$ with $O(n(\log \log n)^2)$ space and has $O(\sqrt{n})$ expected query time.
\end{theorem}

\begin{proof}[Proof of Theorem~\ref{thm:ray_shooting}]
We proceed as before with a segment tree-like data structure over the line segments $S$.
We will use the technique of Section~\ref{sec:r-trick} of choosing an increasing branching factor.
We may need to query the data structure for line queries to detect line segment intersections of Lemma~\ref{lem:line-lineseg} 
an additional number of times proportional to the branching factor to figure out which
slab to recurse in to find the intersection, but the analysis of Section~\ref{sec:r-trick} allows for this. 
The space usage and runtime follow.
\end{proof}

\paragraph{Remarks.}

It is worthwhile to compare our result with that of Wang's data structure in \cite{Wang20}.
Both our data structure and Wang's data structure uses the segment tree reduction,
using the technique of increasing branching factors allows us to reduces the extra $\log n$ term in space from the segment tree to a milder $\log\log n$ term. 
The main differences are our usages of Lemma~\ref{lem:line-lineseg} for intersection counting between a query line and line segments and Lemma~\ref{lem:ray_shooting_lines} for ray shooting amidst lines, Wang uses Theorem~4 of \cite{Wang20} for line emptiness (which is sufficient) and Theorem~2 of \cite{Wang20} for the ray shooting amidst lines that both use $O(n)$ space and  have $O(\sqrt{n}\log n)$ query time, while our data structures have a faster query time of $O(\sqrt{n})$ (though incur an extra $\log\log n$ factor in space in Lemma~\ref{lem:line-lineseg}).
We remark that more careful analysis of Theorem~4 of \cite{Wang20} for line emptiness can show that Wang's data structure achieves $O(\sqrt{n})$ query time as the algorithm uses a partition tree where the cost of a query forms a geometric sequence. 
Using that data structure for $\A$ in the algorithm can improve the space in Theorem~\ref{thm:ray_shooting} to $O(n\log\log n)$ factor.

\section{Conclusion}

}

\appendix

\IGNORE{
\section{Group Simplex Range Searching: Space Improvements and Trade-Offs} \label{sec:hierarchical_cuttings}

In order to improve the space usage, we must use a stronger result about cuttings.

\paragraph{Hierarchical cuttings.}
For constants $\rho, b > 1$, 
a \emph{hierarchical $(1/r)$-cutting} of a set $H$ of $n$ hyperplanes in $\R^d$ is a sequence of cuttings $\Gamma_0, \Gamma_1, ..., \Gamma_k$ that has the following properties: 
\begin{enumerate}[(i)]
\item $\Gamma_i$ is a $(1/\rho^i)$-cutting of size $O(\rho^{id})$. 
\item Every cell of $\Gamma_{i-1}$ contains at most $b$ cells of $\Gamma_i$. 
\item $\rho^{k-1} < r \le \rho^k$ so $k = \Theta(\log r)$.
\end{enumerate}

This naturally gives rise to a tree structure. At the root $\Gamma_0$ contains has one cell that intersects all the lines of $H$. We say a cell $\Delta \in \Gamma_i$ is a \emph{child} of another cell $\Delta' \in \Gamma_{i-1}$ if $\Delta$ is contained in $\Delta'$. We call $\Delta'$ the \emph{parent} of $\Delta$. The following theorem 
was proved by Chazelle \cite{Chazelle93}.

\begin{theorem}[Hierarchical Cutting Lemma \cite{Chazelle93}]\label{thm:hierarchical_cuttings}
For fixed dimension $d$,
 there exists constants $\rho,b > 1$ such that for any set of $n$ hyperplanes $H$ and $1\le r \le n$,
 there exists a deterministic $O(nr^{d-1})$ algorithm for computing a hierarchical $(1/r)$-cutting
along with all conflict lists $H_\Delta$ for $\Delta \in \Gamma_i$ for all $i$.
\end{theorem}

With $O(nr^{d-1})$ time and storage we can additionally compute for every simplex $\Delta\in \Gamma_i$ with parent $\Delta'\in \Gamma_{i-1}$ a set $C_\Delta$, the subset of lines of $H_{\Delta'}$ that lie completely above $\Delta$.
Using a hierarchical cutting with parameter $r=n$, by following the tree structure we can find for a dual query point $p^*$ the simplices $\Delta_1, \Delta_2,..., \Delta_k$ with $\Delta_i \in \Gamma_i$. The point location at every level takes $O(\log \rho) = O(1)$ with known data structures for point query. The set of lines above $p^*$ is exactly $\bigcup_i C_{\Delta_i}$, and as these sets are disjoint, we can answer queries about the points by storing a summary of the point set (e.g. the number of points in the set) for every subset $C_\Delta$.

\subsubsection{An improved multilevel data structures}

With hierarchical cuttings, it is enough to show our full result on simplex range searching in the group model.
This is a standard consequence of hierarchical cuttings so we will defer the proof to Appendix~\ref{ap:simplex_group}

\begin{lemma} \label{lem:simplex_level2}
Problem~\ref{prob:simplex_level2} for arbitrary weights can be solved with a data structure with $O(\log n)$ query time that uses $O(n^d)$ space and preprocessing.
Furthermore, the data structure is a decomposition scheme where every query can be decomposed into $O(\log \log n)$ canonical subsets.
\end{lemma}

This directly gives us the following result for simplex range query with group weights.

\begin{theorem} \label{thm:simplex_group}
  Simplex range searching on $n$ points in $\R^d$ where the points have weights that live in a group 
  can be performed $O(\log n)$ query time with a data structure
  with $O(n^d)$ space and preprocessing.
\end{theorem}

\subsection{Proof of Lemma~\ref{lem:simplex_level2}} \label{ap:simplex_group}

We will repeat some of the details from Lemma~\ref{lem:simplex_level2_polylog}.
\begin{utheorem} 
  Simplex range searching on $n$ points in $\R^d$ in the group model 
  can be performed  $O(\log n)$ query time with a data structure
  with $O(n^d)$ space and preprocessing.
\end{utheorem}

\begin{proof}
  We will use a similar data structure as (a) in the proof of Lemma~\ref{lem:simplex_level2_polylog}.
  The key difference is instead of using the cutting lemma, 
  we will instead use the more powerful hierarchical cutting theorem.
  We will build a hierarchical $(1/r)$-cutting in the dual with $r=n$.

  By Theorem~\ref{thm:hierarchical_cuttings} we can construct a sequence of $O(\log n)$ partitions $\Gamma_0, ..., \Gamma_k$.
  We will consider a very carefully chosen 
  subsequence of the partitions $\Gamma_0 = \Gamma'_{0}, \Gamma'_{1} ,..., \Gamma'_{\ell} = \Gamma_k$, and
  construct the hierarchical tree structure on these partitions, connecting a cell $\Delta \in \Gamma'_{i+1}$ to a cell $\Delta\in \Gamma'_{i}$ if $\Delta$ lies within $\Delta'$.
  Let $n_i$ denote the maximum number of lines intersecting $\Gamma_i'$. Let $r_i$ be the parameter for which $\Gamma_i'$ is a $1/r_i$ cutting. 
  Similar to the recursion in Theorem~\ref{thm:simplex_group_polylog}, we would want $n_i\approx n^{(1-\delta)^i}$ and $b_i\approx n_i^{\delta}$ for some small $\delta > 0$.
  To do so, we will choose $r_i$ as follows:
  \begin{enumerate}
    \item $r_i = \Theta(n^{1-(1-\delta)^i})$
    \item $n_i = n/r_i = \Theta(n^{(1-\delta)^i})$ as $\Gamma_i'$ is a $(1/r_i)$-cutting
  \end{enumerate}
  
  This induces a hierarchical cutting tree with $\ell = O(\log\log n)$ levels as in Theorem~\ref{thm:simplex_group_polylog}.
  For each cell $\Delta$ in the cutting tree, with parent $\Delta'$ in the cutting tree, we will consider
  the canonical subset $C_\Delta$ containing the lines in $\Delta'$ that lie completely above $\Delta$.
  For each canonical subset we will build the lower dimensional data structure for simplex
  range query query for points in $\R^{d-1}$ that uses $O(n^{d-1+\eps})$ space with $O(\log n)$ query time.

  $\Gamma'_{i+1}$ has $O(r_{i+1}^d)$ cells that each store a canonical subset of at most $n_i$ lines that lie within its parent cell of $\Gamma'_{i}$ that we build the $\R^{d-1}$ data structure on.
  We can bound the total space 
  usage by summing across the $\ell$ levels of the data structure.
  \begin{align*}
        S(n) &\le O(n^d) + \sum_{i=1}^{\ell} O(r_{i+1}^d n_i^{d-1+\eps}) \\
         &\le O(n^d) + \sum_{i=1}^{\ell}  O(n^{d-d(1-\delta)^{i+1} + (1-\delta)^i(d-1+\eps)}) \\
         &= O(n^d) + \sum_{i=1}^{\ell}  O(n^{d(1 - (1-\delta)^{i}(\delta +(1-1/d + \eps/d))})  \\
         &\le O(n^d)
  \end{align*}

  Queries are handled the same way as before, and analysis going level by level shows that the time per query is $O(\log n)$.
\end{proof}

}

\section{Trade-Offs}

We can obtain space/query-time trade-off versions of many of our new results.
In this appendix, as one example,  we consider simplex range searching in the group setting.  

\begin{theorem} \label{thm:simplex_group:tradeoff}
For a parameter $n\le m \le n^d/\log^d n$,
  simplex range searching on $n$ points in $\R^{d}$ with weights from a group can be performed with $O(n/m^{1/d})$ query time using
  a data structure with $O(m(\log\log (m/n))^{O(1)})$ space.
\end{theorem}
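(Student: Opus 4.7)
The plan is to obtain the trade-off by combining a low-space partition tree (as in Section~\ref{sec:stab}) with the slightly-subcubic-space, $O(\log s)$-query-time variant of Theorem~\ref{thm:simplex_group} from the remark in Section~\ref{sec:hier}, applied to the points of each leaf. The only delicate ingredient is the choice of leaf size $s$: the $O(\log s)$ overhead per leaf query must cancel exactly with the $1/\log^{d}s$ savings in the leaf space bound provided by the remark.

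Concretely, I would set $s$ to satisfy $s^{d-1} = \Theta((m/n)\log^{d} s)$, which has a solution $s = \Theta\bigl(((m/n)\log^{d}n)^{1/(d-1)}\bigr)$ with $1 \le s \le n/\log n$ under the hypothesis $n \le m \le n^d/\log^d n$. With $s$ fixed, I would build a simplicial partition tree on $P$ (by recursive application of the plain Partition Theorem with constant branching factor, since the sizes in question may dip below the $\log^{\omega(1)}n$ threshold required by Theorem~\ref{thm:part}) whose leaves partition $P$ into $O(n/s)$ cells of at most $s$ points each, such that any hyperplane crosses $O((n/s)^{1-1/d})$ leaf cells w.h.p. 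At every internal node I would store the group sum of the points in its subtree; at every leaf I would install the data structure from the remark in Section~\ref{sec:hier} on its $\le s$ points, using $O\bigl((s^d/\log^d s)(\log\log s)^{O(1)}\bigr)$ space and $O(\log s)$ query time.

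To answer a query simplex $q$, I would traverse the tree top-down in the standard way: add the stored canonical sum for each cell fully inside $q$, skip each cell entirely outside, recurse into crossed internal cells, and invoke the leaf structure at crossed leaves. Summing the crossing bounds geometrically across the $O(\log(n/s))$ levels yields $O((n/s)^{1-1/d})$ touched cells, so the total query time is $O((n/s)^{1-1/d}\log s)$. Substituting the defining relation $s^{d-1} = (m/n)\log^{d}s$ gives $(n/s)^{1-1/d}\log s = \bigl(n/(m^{1/d}\log s)\bigr)\cdot\log s = n/m^{1/d}$ exactly. The total space is $(n/s)\cdot O\bigl((s^d/\log^d s)(\log\log s)^{O(1)}\bigr) = O\bigl(ns^{d-1}(\log\log s)^{O(1)}/\log^d s\bigr) = O\bigl(m(\log\log s)^{O(1)}\bigr)$, and since $\log s = \Theta(\log(m/n)) + O(\log\log n)$, this is $O\bigl(m(\log\log(m/n))^{O(1)}\bigr)$.

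The main obstacle (and the whole point of the proof) is the precise choice of $s$. The naive choice $s = (m/n)^{1/(d-1)}$ — pairing the basic Theorem~\ref{thm:simplex_group} at the leaves with a plain partition tree — yields query time $O((n/m^{1/d})\log s)$, which carries an unwanted extra $\log n$ factor near the upper endpoint $m = n^d/\log^d n$. It is precisely the $\log^{-d}s$ saving in the leaf space bound of the Section~\ref{sec:hier} remark that lets $s$ be enlarged by a $\log^{d/(d-1)} s$ factor, shrinking $(n/s)^{1-1/d}$ by a $\log s$ factor, and absorbing the leaf-query overhead. The boundary cases ($m$ near $n$, where $s = O(1)$ and the construction degenerates to the pure linear-space partition tree of Theorem~\ref{thm:simplex_range_stabbing}, and $m$ near $n^d/\log^d n$, where $s$ approaches $n$ and the construction degenerates to Theorem~\ref{thm:simplex_group}) require only routine verification.
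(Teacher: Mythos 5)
Your leaf-size calibration is exactly the paper's: the paper's second case takes leaves of size $B=((m/n)\log^d(m/n))^{1/(d-1)}$, installs the $O((B^d/\log^dB)(\log\log B)^{O(1)})$-space, $O(\log B)$-query structure from the remark in Section~\ref{sec:hier} at each leaf, and the same arithmetic cancellation of the $\log B$ query overhead against the $\log^{-d}B$ space saving gives the bound. However, there are two genuine gaps in how you build and analyze the primary structure. First, you construct the partition tree by recursive application of the plain Partition Theorem with \emph{constant} branching factor $t$ and assert that the crossing bounds ``sum geometrically'' to $O((n/s)^{1-1/d})$ crossed leaves. They do not: the number of crossed cells at depth $\ell$ satisfies $C_\ell \le (c\,t^{1-1/d})\,C_{\ell-1}$ for a hidden constant $c>1$, so after $\log_t(n/s)$ levels you get $(n/s)^{1-1/d}\cdot c^{\log_t(n/s)}=(n/s)^{1-1/d+\Theta(1/\log t)}$ — a polynomial, not polylogarithmic, blowup. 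This is precisely the classical obstruction that motivated Chan's optimal partition trees; the paper's proof sidesteps it by invoking \cite[Theorem~4.2 and Corollary~7.7]{Chan12}, which directly supply a $B$-partial partition tree with query cost $O((n/B)^{1-1/d}Q(B))$ and space $O((n/B)S(B))$ whenever $B<n/\log^{\omega(1)}n$. You correctly observe that Theorem~\ref{thm:part} has a threshold condition, but you replace it with a construction that loses the theorem you need rather than with the applicable result from \cite{Chan12}.

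Second, your single unified construction cannot cover the top of the range. When $m$ is within a $\log^{O(1)}n$ factor of $n^d$, your $s$ is within a polylog factor of $n$, so $n/s$ is polylogarithmic and the condition $B<n/\log^{\omega(1)}n$ for the partition-tree machinery fails; there is no partition-tree guarantee of the required form in that regime, and this is not a ``routine'' degenerate boundary case. The paper handles it by a separate case: for $m\ge n^d/\log^c n$ it abandons the partition tree entirely and uses the cutting-tree structure from the remark in Section~\ref{sec:hier} directly, with leaf parameter $A=(n^d/m)^{1/\delta}$, giving $O((n^d/A^{\delta})\log^{O(1)}A)=O(m(\log\log n)^{O(1)})$ space and $O(\log n+A^{\delta/d})=O(n/m^{1/d})$ query time (using $n/m^{1/d}\ge\log n$ in this regime). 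Your proof needs both repairs — cite the $B$-partial partition tree of \cite{Chan12} for the low-$m$ case, and add the cutting-tree case for large $m$ — to be complete.
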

\begin{proof}
We split into two cases depending on the size of $m$.

If $m \ge n^d/\log^{c} n$ for an arbitrarily large constant $c\ge d$, we use the data structure from the remark in Section~\ref{sec:hier}
with $O((n^d/A^\delta)\log^{O(1)}A)$ space and
$O(\log n + A^{\delta/d})$ query time.
Choosing $A=(n^d/m)^{1/\delta}$gives our desired space and query time bounds.


If $m < n^d/\log^{\omega(1)} n$, then we switch to a partition tree instead of a cutting tree for the primary structure.  Assuming a data structure for leaves of size $B$ with space $S(B)$ and query time $Q(B)$,
Chan's optimal partition tree (see \cite[Theorems 4.2 and Corollary 7.7]{Chan12}) implies a new data structure with
space $O((n/B)S(B))$ and query time $O((n/B)^{1-1/d}Q(B))$, assuming  $B < n/\log^{\omega(1)} n$.
We plug in $S(B)=O((B^d/\log^d B)(\log\log B)^{O(1)})$
and $Q(B)=O(\log B)$ by the data structure from 
the remark in Section~\ref{sec:hier}.
Choosing $B = ((m/n)\log^d (m/n))^{1/(d-1)}$ gives our desired space and query time bounds.
\end{proof}

In contrast, the previous result by Matou\v sek~\cite{Matousek93} had
$O(m)$ space and $O((n/m^{1/d})\log^{d+1}(m/n))$ query time
for $n\le m\le n^d$.
As mentioned in the remark in Section~\ref{sec:hier}, the
extra $\log\log n$ factors can be further improved by bootstrapping.

\IGNORE{

remaining questions:

semigroup for stabbing and segment intersection

tradeoffs for stabbing and segment intersection and ray shooting

eliminating extra factors in tradeoffs?

eliminating extra log in space for semigroup?

triangle intersection: reporting ok, counting?

simplex containment: hard?

deterministic preprocessing time

}

\bibliographystyle{plain}
\bibliography{ref}

\begin{thebibliography}{10}

\bibitem{Afshani13}
Peyman Afshani.
\newblock Improved pointer machine and {I/O} lower bounds for simplex range
  reporting and related problems.
\newblock {\em Int. J. Comput. Geom. Appl.}, 23(4-5):233--252, 2013.

\bibitem{Agarwal90ii}
Pankaj~K. Agarwal.
\newblock Partitioning arrangements of lines {II:} {A}pplications.
\newblock {\em Discret. Comput. Geom.}, 5:533--573, 1990.

\bibitem{agarwal92}
Pankaj~K. Agarwal.
\newblock Ray shooting and other applications of spanning trees with low
  stabbing number.
\newblock {\em {SIAM} J. Comput.}, 21(3):540--570, 1992.

\bibitem{agarwal17}
Pankaj~K. Agarwal.
\newblock Simplex range searching and its variants: {A} review.
\newblock In M.~Loebl, J.~Nešetril, and R.~Thomas, editors, {\em Journey
  Through Discrete Mathematics}, pages 1--30. Springer, 2017.

\bibitem{AgarwalE99}
Pankaj~K. Agarwal and Jeff Erickson.
\newblock Geometric range searching and its relatives.
\newblock In B.~Chazelle, J.~E. Goodman, and R.~Pollack, editors, {\em Advances
  in Discrete and Computational Geometry}, pages 1--56. AMS Press, 1999.

\bibitem{AgarwalS00}
Pankaj~K. Agarwal and Micha Sharir.
\newblock Arrangements and their applications.
\newblock In J.~Sack and J.~Urrutia, editors, {\em Handbook of Computational
  Geometry}, pages 49--119. North-Holland, New York, 2000.

\bibitem{Bar-YehudaF94}
Reuven Bar{-}Yehuda and Sergio Fogel.
\newblock Variations on ray shooting.
\newblock {\em Algorithmica}, 11(2):133--145, 1994.

\bibitem{Chan96}
Timothy~M. Chan.
\newblock Fixed-dimensional linear programming queries made easy.
\newblock In {\em Proc. 12th Annual Symposium on Computational Geometry
  (SoCG)}, pages 284--290, 1996.

\bibitem{Chan12}
Timothy~M. Chan.
\newblock Optimal partition trees.
\newblock {\em Discret. Comput. Geom.}, 47(4):661--690, 2012.

\bibitem{ChanNRT18}
Timothy~M. Chan, Yakov Nekrich, Saladi Rahul, and Konstantinos Tsakalidis.
\newblock Orthogonal point location and rectangle stabbing queries in 3-d.
\newblock In {\em Proc. 45th International Colloquium on Automata, Languages,
  and Programming (ICALP)}, pages 31:1--31:14, 2018.

\bibitem{chan2022}
Timothy~M. Chan and Da~Wei Zheng.
\newblock Hopcroft's problem, log-star shaving, {2D} fractional cascading, and
  decision trees.
\newblock In {\em Proc. 33rd Annual ACM-SIAM Symposium on Discrete Algorithms
  (SODA)}, pages 190--210, 2022.

\bibitem{Chazelle86filter}
Bernard Chazelle.
\newblock Filtering search: {A} new approach to query-answering.
\newblock {\em {SIAM} J. Comput.}, 15(3):703--724, 1986.

\bibitem{Chazelle89}
Bernard Chazelle.
\newblock Lower bounds on the complexity of polytope range searching.
\newblock {\em J. Amer. Math. Soc.}, 2:637--666, 1989.

\bibitem{Chazelle90a}
Bernard Chazelle.
\newblock Lower bounds for orthogonal range searching: {I.} {T}he reporting
  case.
\newblock {\em J. {ACM}}, 37(2):200--212, 1990.

\bibitem{Chazelle90b}
Bernard Chazelle.
\newblock Lower bounds for orthogonal range searching: {II.} {T}he arithmetic
  model.
\newblock {\em J. {ACM}}, 37(3):439--463, 1990.

\bibitem{Chazelle93}
Bernard Chazelle.
\newblock Cutting hyperplanes for divide-and-conquer.
\newblock {\em Discret. Comput. Geom.}, 9:145--158, 1993.

\bibitem{Chazelle04}
Bernard Chazelle.
\newblock Cuttings.
\newblock In Dinesh~P. Mehta and Sartaj Sahni, editors, {\em Handbook of Data
  Structures and Applications}. Chapman and Hall/CRC, 2004.

\bibitem{ChazelleF90}
Bernard Chazelle and Joel Friedman.
\newblock A deterministic view of random sampling and its use in geometry.
\newblock {\em Comb.}, 10(3):229--249, 1990.

\bibitem{ChazelleF94}
Bernard Chazelle and Joel Friedman.
\newblock Point location among hyperplanes and unidirectional ray-shooting.
\newblock {\em Comput. Geom.}, 4:53--62, 1994.

\bibitem{ChazelleG86}
Bernard Chazelle and Leonidas~J. Guibas.
\newblock Fractional cascading: I. {A} data structuring technique.
\newblock {\em Algorithmica}, 1(2):133--162, 1986.

\bibitem{ChazelleGL85}
Bernard Chazelle, Leonidas~J. Guibas, and D.~T. Lee.
\newblock The power of geometric duality.
\newblock {\em {BIT}}, 25(1):76--90, 1985.

\bibitem{ChazelleR95}
Bernard Chazelle and Burton Rosenberg.
\newblock Simplex range reporting on a pointer machine.
\newblock {\em Comput. Geom.}, 5:237--247, 1995.

\bibitem{chazelle1992}
Bernard Chazelle, Micha Sharir, and Emo Welzl.
\newblock Quasi-optimal upper bounds for simplex range searching and new zone
  theorems.
\newblock {\em Algorithmica}, 8(1):407--429, 1992.

\bibitem{cheng1992}
Siu~Wing Cheng and Ravi Janardan.
\newblock Algorithms for ray-shooting and intersection searching.
\newblock {\em Journal of Algorithms}, 13(4):670--692, 1992.

\bibitem{clarkson87}
Kenneth~L. Clarkson.
\newblock New applications of random sampling in computational geometry.
\newblock {\em Discret. Comput. Geom.}, 2:195--222, 1987.

\bibitem{Clarkson88}
Kenneth~L. Clarkson.
\newblock A randomized algorithm for closest-point queries.
\newblock {\em {SIAM} J. Comput.}, 17(4):830--847, 1988.

\bibitem{BergCKO08}
Mark de~Berg, Otfried Cheong, Marc~J. van Kreveld, and Mark~H. Overmars.
\newblock {\em Computational Geometry: Algorithms and Applications}.
\newblock Springer, 3rd edition, 2008.

\bibitem{goswami2004}
Partha~P. Goswami, Sandip Das, and Subhas~C. Nandy.
\newblock Triangular range counting query in {2D} and its application in
  finding {$k$} nearest neighbors of a line segment.
\newblock {\em Computational Geometry}, 29(3):163--175, 2004.

\bibitem{GuibasOS88}
Leonidas~J. Guibas, Mark~H. Overmars, and Micha Sharir.
\newblock Intersecting line segments, ray shooting, and other applications of
  geometric partitioning techniques.
\newblock In {\em Proc. 1st Scandinavian Workshop on Algorithm Theory (SWAT)},
  pages 64--73, 1988.

\bibitem{Matousek92}
Jir{\'{\i}} Matou\v{s}ek.
\newblock Efficient partition trees.
\newblock {\em Discret. Comput. Geom.}, 8:315--334, 1992.

\bibitem{Matousek92CGTA}
Jir{\'{\i}} Matou\v{s}ek.
\newblock Reporting points in halfspaces.
\newblock {\em Comput. Geom.}, 2:169--186, 1992.

\bibitem{Matousek93}
Jir{\'{\i}} Matou\v{s}ek.
\newblock Range searching with efficient hierarchical cuttings.
\newblock {\em Discret. Comput. Geom.}, 10:157--182, 1993.

\bibitem{MatousekS93}
Jir{\'{\i}} Matou\v{s}ek and Otfried Schwarzkopf.
\newblock On ray shooting in convex polytopes.
\newblock {\em Discret. Comput. Geom.}, 10:215--232, 1993.

\bibitem{OvermarsSS90}
Mark~H. Overmars, Haijo Schipper, and Micha Sharir.
\newblock Storing line segments in partition trees.
\newblock {\em {BIT}}, 30(3):385--403, 1990.

\bibitem{wang2020}
Haitao Wang.
\newblock Algorithms for subpath convex hull queries and ray-shooting among
  segments.
\newblock In {\em Proc. 36th International Symposium on Computational Geometry
  (SoCG)}, pages 69:1--69:14, 2020.

\bibitem{Welzl89}
Emo Welzl.
\newblock Partition trees for triangle counting and other range searching
  problems.
\newblock In {\em Proc. 4th Annual Symposium on Computational Geometry (SoCG)},
  pages 23--33, 1988.

\end{thebibliography}

\IGNORE{
*************

cut...

\section{Full Trade-off for Simplex Range Searching} \label{sec:tradeoff}

\subsection{Further reducing the query time} \label{sec:reduced_query_time}
For group simplex range searching and range reporting, 
we can build a hierarchical cutting
with parameter $r_\ell=\lfloor n/\log^{d/(d-1)} n \rfloor$ 
so that each leaf of the cutting tree has 
$n_\ell = O(\log^{d/(d-1)} n)$ points.
At the leaves we can switch back to  primal space and use Chan's optimal partition trees \cite{Chan12}
for an $O(n_\ell)$-space data structure 
with $O(n_\ell^{1-1/d})$ query time.
The space usage of the non-leaf nodes is $O(n^{d-\alpha}$ for some constant $1>\alpha > 0$.
\TIMOTHY{why? for $r_\ell=n/A$, my calculations give $O(n^d/A^{1-\eps})$, which is about $n^d/\log n$, not $n^d/\log^dn$??}
The total space usage is $O(nr_\ell^{d-1}) = O(n^d/\log^d n)$ and the total query time is still $O(\log n)$.

\begin{theorem}
  For $n$ points in $\R^d$, there exists a data structure for:
  \begin{enumerate}[(a)]
      \item  Simplex range searching in the group model 
            with a data structure using $O(n^d/\log^d n)$ space
              with $O(\log n)$ query time.
      \item  Simplex range reporting of $k$ points
          with a data structure using $O(n^d/\log^d n)$ space 
          with $O(\log n + k)$ query time.
  \end{enumerate}
\end{theorem}

\subsection{Full trade-off}

A slight modification of the above theorem 
gives us the full trade-off for simplex range searching (that is conjectured to be optimal):
\begin{theorem} \label{thm:opt_large_space}
  For $n$ points in $\R^d$ and a parameter $n\le m \le n^d/\log^d n$, there exists a data structure for:
  \begin{enumerate}[(a)]
      \item  Simplex range searching in the group model 
            with a data structure using $O(m)$ space
              with $O(n/m^{1/d})$ query time.
      \item  Simplex range reporting of $k$ points
          with a data structure using $O(m)$ space with $O(n/m^{1/d} + k)$ query time.
  \end{enumerate}
\end{theorem}
\begin{proof}
We split into two cases depending on the size of $m$.

If $m = n^d/\log^{c} n$ for some constant $c\ge d$, we can choose $r = \lfloor n/\log^{c/(d-1)} n \rfloor$ as we did in Section~\ref{sec:reduced_query_time}, and build Chan's optimal partition tree at the leaves. Since the non-leaf nodes use $O(n^{d-\alpha})$ space for some constant $\alpha >0$, this gives us a $O(m)$ space data structure with $O(\log^{c/d} n)$ query time.

If $m < n^d/\log^{\omega(1)} n$, then we can build a \emph{$B$-partial partition tree}, a partition tree where each leaf may have up to $B$ leaves.
This notion was defined by Chan in \cite{Chan12} where he showed in 
Theorem~4.2 of that paper that for $B < n/\log^{\omega(1)} n$ there exists such a tree of size $O(n/B)$ with query cost $O((n/B)^{1-1/d})$. 
At the leaves we can build the data structure of Theorem~\ref{thm:opt_large_space} with $O(B^d/\log^d B)$ space and $O(\log B)$ query time (with a ``$+k$'' if the data structure is for reporting).
Choosing $B = \lfloor ((m/n)\log^d (m/n))^{1/(d-1)} \rfloor$, gives our desired space and query time bounds.
\end{proof}

}

\end{document}